\documentclass[
reprint,
superscriptaddress,
amsmath,
amssymb,
aps,
pra
]{revtex4-2}

\PassOptionsToPackage{colorlinks=true,citecolor=blue,filecolor=blue,linkcolor=blue,urlcolor=blue}{hyperref}

\usepackage{orcidlink}
\usepackage{amsmath,amssymb,amsthm}
\newtheorem{proposition}{Proposition}

\usepackage{graphicx}
\usepackage{dcolumn}
\usepackage{bm}
\usepackage{physics}
\usepackage{svg}
\usepackage{todonotes}
\usepackage{csquotes}
\usepackage{hyperref}
\usepackage{dsfont}
\usepackage[capitalize]{cleveref}
\usepackage{placeins}

\newtheorem{theorem}{Theorem}

\newcommand{\aqa}{$\langle aQa ^L\rangle $ Applied Quantum Algorithms, Universiteit Leiden}
\newcommand{\lorentz}{Instituut-Lorentz, Universiteit Leiden, Niels Bohrweg 2, 2333 CA Leiden, Netherlands}

\newcommand{\huji}{Racah Institute of Physics, The Hebrew University of Jerusalem, Givat Ram, Jerusalem 91904, Israel}
\newcommand{\ulm}{Institute for Complex Quantum Systems, Ulm University, 89069 Ulm, Germany}
\newcommand{\iqst}{Center for Integrated Quantum Science and Technology (IQST), Ulm-Stuttgart, Germany}

\begin{document}
\title{Two-Sided Bounds on Ground State Properties in Lattice Gauge Theories}

\author{Elo\"ic Vall\'ee\,\orcidlink{0009-0005-2513-152X}}
\affiliation{\aqa}
\affiliation{\lorentz}

\author{Ariel Kelman\,\orcidlink{0000-0002-7710-6538}}
\affiliation{\huji}

\author{Julius Mildenberger\,\orcidlink{0000-0002-8016-8182}}
\affiliation{\aqa}
\affiliation{\lorentz}

\author{Jordi Tura\,\orcidlink{0000-0002-6123-1422}}
\affiliation{\aqa}
\affiliation{\lorentz}

\author{Patrick Emonts\,\orcidlink{0000-0002-7274-4071}}
\affiliation{\aqa}
\affiliation{\lorentz}
\affiliation{\ulm}
\affiliation{\iqst}

\begin{abstract}
Gauge theories describe fundamental interactions in the standard model of particle physics and build effective theories in condensed matter physics. 
Being notoriously hard to simulate, these theories are commonly regularized as lattice gauge theories which can be  evaluated with high precision through Monte Carlo algorithms or variational methods. 
However, Monte Carlo algorithms are not applicable in all regimes due to the sign problem; while variational methods, quantum and classical alike, only give upper-bounds on the ground-state energy.
Their precision crucially depends on the chosen ansatz.
Here, we present a framework based on a hierarchy of semidefinite programs to yield increasingly good lower-bounds on the ground-state energy and  propagate these bounds to extended observables like Wilson loops and mesonic strings.
We demonstrate the numerical capabilities by applying the algorithm to a (1+1)-dimensional $\mathbb{Z}_2$ theory with dynamic fermionic matter and its pure-gauge version in (2+1)-dimensions.
In both cases, we obtain certified intervals for the ground-state energy. 
For a one-dimensional system with 129 sites, the certified interval has a relative spread of $0.04\%$. 
For a two-dimensional system with 97 sites (corresponding to a $7 \times 8$ lattice), the relative spread is $2.08\%$.
Beyond energy estimates, the proposed method provides rigorous bounds on both short- and long-range observables such as Wilson loops over multiple plaquettes and the magnitude of the mesonic string.
The method can be both used as a computational microscope into many-body physics and a way to certify results of quantum simulators.
\end{abstract}

\date{\today}
\maketitle

\section{Introduction}

Gauge theories are central to diverse fields of physics. 
In high-energy physics, they are fundamental to the Standard model of particle physics~\cite{peskin_introduction_1995,weinberg_quantum_1995,weinberg_making_2004}, they emerge as effective descriptions in condensed matter physics~\cite{wen_quantum_2004,balents_spin_2010,savary_quantum_2016,fradkin_field_2013,affleck_large-n_1988,wen_mean-field_1991,lee_doping_2006}, and they are further related to paradigms in quantum computing~\cite{kitaev_fault-tolerant_2003,spagnoli_fault-tolerant_2026}.
The defining characteristic of gauge theories is the invariance under local transformations parametrized by a group. 
This gives rise to local conservation laws which are intricately linked to a rich phenomenology, with examples including Gauss’s law in quantum electrodynamics or color confinement in quantum chromodynamics~\cite{wilson_confinement_1974,hooft_magnetic_1974,nambu_strings_1974,polyakov_compact_1975,polyakov_quark_1977,gross_50_2023}.

Lattice gauge theories, which involve the spatial discretization of gauge theories enabling regularization of ultraviolet divergences, were originally suggested as a means to investigate non-perturbative phenomena in the strong-coupling regime~\cite{wilson_confinement_1974}.
This allowed for the treatment via classical Monte Carlo methods~\cite{creutz_monte_1983}, which has been very fruitful in, e.g., studying static properties such as the hadronic spectrum or equilibrium properties of the quark-gluon plasma~\cite{gattringer_quantum_2009,aoki_flag_2026}. 
However, the utilization of Euclidean spacetime prohibits real-time evolution and models at finite densities may suffer from the sign-problem~\cite{troyer_computational_2005}.
Working within a Hamiltonian formulation can address these challenges~\cite{kogut_hamiltonian_1975,banuls_simulating_2020}, both for classical~\cite{rico_tensor_2014,banuls_efficient_2017,magnifico_lattice_2021,bender_variational_2023} or quantum simulation~\cite{byrnes_simulating_2006, dalmonte_lattice_2016, zohar_quantum_2015, martinez_real-time_2016,aidelsburger_cold_2021,schweizer_floquet_2019,zohar_quantum_2021,bauer_quantum_2023,klco_standard_2022,barata_quantum_2023,mildenberger_confinement_2025,di_meglio_quantum_2024,farrell_scalable_2024,ciavarella_quantum_2024,cochran_visualizing_2025,gonzalez-cuadra_observation_2025,cobos_real-time_2025}.

Within the Hamiltonian approach, tensor network (TN) methods have emerged as one of the most successful approaches for approximating ground states of lattice gauge theories in recent years~\cite{buyens_matrix_2014,buyens_tensor_2015,banuls_tensor_2018,silvi_tensor_2019,emonts_variational_2020,felser_efficient_2021,cataldi_hamiltonian_2025,xu_string_2025,magnifico_tensor_2025,wu_accurate_2025}.
As a key feature, these states efficiently capture entanglement in low-dimensional systems while respecting, at least approximately, local constraints.
These techniques have been highly effective in practice; however, they are fundamentally variational in nature, i.e. they only yield upper bounds on the ground-state energy and do not provide a rigorous certificate of convergence.

A complementary approach has recently gained attention in many-body physics based on semidefinite programming (SDP) relaxations~\cite{boyd_convex_2004}.
These methods relax the original quantum many-body problem into a convex optimization problem over operator expectation values to  provide provable lower bounds on the ground-state energy.
These methods have been explored in different contexts, ranging from quantum chemistry~\cite{mazziotti_first-order_2004,zhao_reduced_2004,nakata_variational_2001,mazziotti_two-electron_2012,scheer_bootstrapping_2026} to condensed matter systems~\cite{baumgratz_lower_2012,barthel_solving_2012,fawzi_certified_2024}.
Under the name of conformal bootstrapping, similar methods have been developed in the high-energy community~\cite{berenstein_anomalous_2022, berenstein_semidefinite_2023, berenstein_bootstrapping_2022, berenstein_bootstrapping_2021,lawrence_semidefinite_2023,li_bootstrapping_2024}.
The main focus of these investigations have been few-body Hamiltonians or systems in the action formalism.
Recently, structured SDP relaxations based on the Navascues-Pironio-Acin (NPA) hierarchy~\cite{navascues_bounding_2007,navascues_convergent_2008}, a non-commuting version of the Lasserre hierarchy~\cite{lasserre_convergent_2006}, have been applied to the genuine many-body setting.
These works do not only bound the ground-state energy, but propagate the bounds to single-body observables and few-body correlators~\cite{wang_certifying_2024,wang_scalable_2026}.
Additionally, these relaxations can be applied to open quantum systems~\cite{mortimer_certifying_2025}, have been supplemented by measurement data~\cite{mortimer_bounding_2026} and been used to explore phase diagrams~\cite{jansen_mapping_2026}.
  
In this work, we leverage SDP relaxations in conjunction with TN methods to obtain certified results for ground-state energies as well as key observables, ranging from single-body observables to extended Wilson loops in $\mathbb{Z}_2$ lattice gauge theories, both in one and two spatial dimensions.
Such rigorous two-sided bounds on observables can not only enable systematic investigations of quantum-many-body physics, they can also serve as a tool of device certification for quantum simulators.
In particular, we obtain these bounds by considering the NPA hierarchy up to level three with a limited interaction range, exploiting the sparsity of the Hamiltonian.
We obtain physically relevant bounds by restricting the feasible set of the optimization problem to the physical sector of the gauge theory and formulating the problem in a gauge adapted, block-diagonal basis.
Extended Hermitian operators (up to eight-body terms) are bounded by adding specific monomials to enhance expressivity.
Additionally, we provide bounds, up to mild continuity assumptions, for the modulus of non-Hermitian mesonic string operators, central in characterizing confinement.

The remainder of this work is structured as follows:
\Cref{sec:sdp} introduces the necessary methods to find upper variational and lower relaxation-based bounds on the energy.
In the next step, these bounds are propagated to observables. 
In \cref{sec:z2_1d}, we demonstrate these techniques for a $\mathbb{Z}_2$ lattice gauge theory in one spatial dimension coupled to matter. 
Furthermore, we present results for a two-dimensional pure gauge $\mathbb{Z}_2$ theory in \cref{sec:z2_2d}.
Finally, we conclude in \cref{sec:conclusion}.

\section{Bounding Observables}
\label{sec:sdp}

\begin{figure*}
    \centering
    \includegraphics[width=\textwidth]{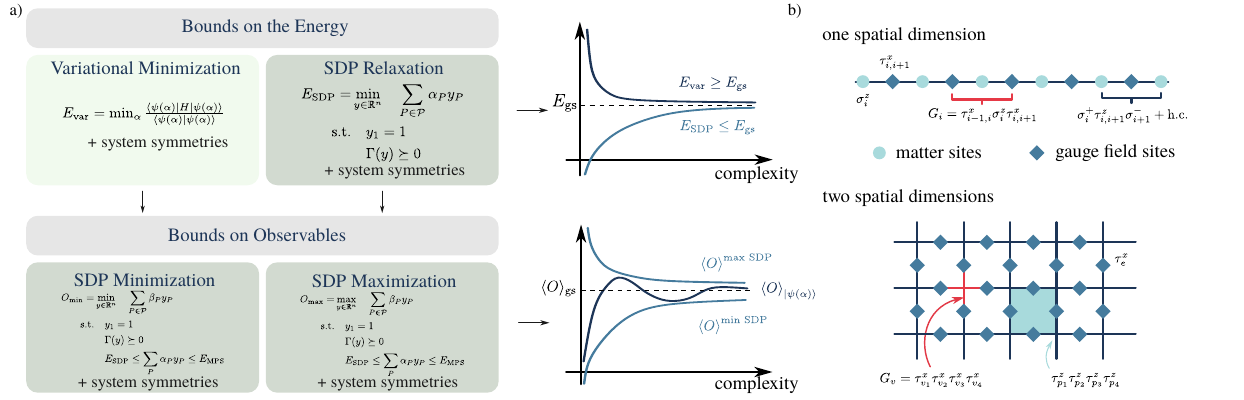}
    \caption{(a) Pipeline of the algorithm. 
    The ground-state energy is bounded from above with a variational ansatz, and from below with an SDP-based relaxation, effectively bracketing the exact value.
    These bounds are used as additional constraints to bound observables with further SDP relaxations.
    The SDP relaxations rely on the Pauli decomposition of the Hamiltonian and the observable: $H = \sum_P \alpha_P P$ and $O = \sum_P \beta_P P$, where $P$ are Pauli strings.
    (b) Illustration of the two systems considered in this work: (top) a $\mathbb{Z}_2$ lattice gauge theory in one spatial dimension with fermionic matter on the vertices and gauge fields on the links. (bottom) the equivalent pure-gauge theory in two spatial dimensions, including a plaquette term.
    In both illustrations, the symmetry generator is marked in red. 
    The different terms of the Hamiltonian are indicated by operators.}
    \label{fig:bounding_groundstate}
\end{figure*}

To find upper and lower bounds on observables, we have to first upper and lower bound the ground-state energy (see~\cref{fig:bounding_groundstate}).
The upper bound is provided by matrix product states (MPS), a variational tensor network approach.
This choice is not fundamental; any other suitable variational model could be employed instead.
To compute the lower bound, we relax the ground state problem via SDP-based methods such as the NPA hierarchy.
Both methods are explained in the following section.
Using further SDPs, we can propagate the energy bounds on the ground-state energy to observables~\cite{wang_certifying_2024}.

\subsection{Optimization problems}

We start by treating the ground state problem as a general non-commutative optimization problem and explore different ways to find either exact or approximate solutions.
In the following, we assume that the Hamiltonian can be expressed as a combination of Pauli strings, i.e. terms acting on two-level systems.
This restriction can easily be lifted, but simplifies the following discussion.

\subsubsection{Ground-state energy problem}
The ground-state energy is obtained by solving 
\begin{equation}
\label{eq:gs_problem}
E_{\mathrm{gs}} = \min_{\rho \in \mathcal{S}} \Tr(H \rho).
\end{equation}
with the Hamiltonian $H := \sum_i \alpha_i P_i$, where $\alpha_i \in \mathbb{R}$ and $P_i$ is a Pauli string of the form $P_i = \sigma_{i_1}^{x} \otimes \sigma_{i_2}^{z} \otimes \sigma_{i_3}^{y}\cdots$.

Such optimization problems can be solved exactly via exact diagonalization when the system size is sufficiently small (typically up to $\sim 20$ qubits). 
For larger systems, variational methods like tensor networks are employed to obtain an approximate solution~\cite{cirac_matrix_2021}.
While these techniques have proven highly successful in practice and have been extensively studied, they generally do not provide rigorous guarantees on the accuracy of the resulting approximation. 
In particular, when used as variational ansätze, MPS methods guarantee only that the estimated energy constitutes an upper bound on the true ground-state energy,
\begin{equation} \label{eq:gs_energy_mps}
E_{\mathrm{gs}} \leq E_{\mathrm{gs}}^{\mathrm{MPS}} := \min_{\rho \in \mathcal{S}_{\mathrm{MPS}}} \mathrm{Tr}(H\rho),
\end{equation}
where $\mathcal{S}_{\mathrm{MPS}}$ denotes the set of states representable as MPS.
This set is a subset of the full state space $\mathcal{S}$, i.e. $\mathcal{S}_{\mathrm{MPS}} \subseteq \mathcal{S}$. 
Increasing the bond dimension enlarges $\mathcal{S}_{\mathrm{MPS}}$, thereby improving the variational approximation and yielding progressively tighter upper bounds on the ground-state energy.
The DMRG algorithm performs this variational optimization efficiently within the MPS manifold by iteratively updating local tensors while controlling the bond dimension~\cite{white_density_1992,schollwock_density-matrix_2011}. 
However, the optimization is not globally convex, and DMRG is therefore not guaranteed to find the minimum over $\mathcal{S}_{\mathrm{MPS}}$.
Consequently, the obtained energy may remain above the optimal MPS variational energy, in addition to the intrinsic error arising from the restriction to a finite bond dimension.

Equivalently, the ground-state energy problem can be formulated as a polynomial optimization problem over the non-commutative Pauli algebra $\mathcal{A}$. 
This formulation treats the Pauli operators as non-commuting variables subject to the defining relations of $\mathcal{A}$, namely $(\sigma^x)^2 = 1$, $(\sigma^x)^{\dagger} = \sigma^x$, $\sigma^x \sigma^y = i\sigma^z$, and so forth. 
The resulting optimization problem is then solved subject to the constraint that the variables satisfy these algebraic relations.
This is a specific instance of non-commutative polynomial optimization (NPO), a widely studied field in mathematics~\cite{pironio_convergent_2010,burgdorf_optimization_2016}.

Specifically, we introduce a variable for each Pauli-string expectation value,
\begin{equation} \label{eq:def_y_variables}
    y_i = \Tr(P_i\rho) .
\end{equation}
Note that because quantum states are normalized, the variable associated with the identity is necessary equal to one, i.e. $y_{\mathbb{I}} := \Tr(\mathbb{I}\rho) = 1$.
We denote $\boldsymbol{y}$ as the vector of expectation values and we define the set of all possible expectation values as 
\begin{equation} \label{eq:algebra_constraint}
    \mathcal{Y} = \{\boldsymbol{y} : y_i = \Tr(P_i \rho) \; , \; \forall \rho \in \mathcal{S}\} .
\end{equation}
The optimization problem in \cref{eq:gs_problem} is equivalent to the NPO problem
\begin{equation} \label{eq:gs_problem_as_npo}
    E_{\mathrm{gs}} = \min_{\boldsymbol{y} \in \mathcal{Y}} \sum_i \alpha_i y_i.
\end{equation}
The constraint resulting from \cref{eq:algebra_constraint} is in general hard to solve, but there exist a formulation, known as SDP relaxation, that yields a \emph{relaxed} version of the problem that is solvable.
The term ``relaxed'' means that the feasible set becomes larger, i.e. $\mathcal{Y}_{\mathrm{SDP}} \supseteq \mathcal{Y}$.
Consequently, the resulting SDP is a convex optimization problem, which enables efficient computation of globally optimal solutions and provides certificates of optimality for the obtained solutions.

The optimal value of the relaxed problem provides a lower bound on the original ground-state energy,
\begin{equation}
E_{\mathrm{gs}} \geq E_{\mathrm{gs}}^{\mathrm{SDP}} := \min_{\boldsymbol{y} \in \mathcal{Y}_{\mathrm{SDP}}} \sum_i \alpha_i y_i,
\end{equation}
since enlarging the feasible set can only decrease (or leave unchanged) the optimal value of a minimization problem.
For the rest of this work, we will denote $E_{\mathrm{gs}}^{\mathrm{MPS}} \rightarrow E_{\max}$ and $E_{\mathrm{gs}}^{\mathrm{SDP}} \rightarrow E_{\min}$.

\subsubsection{Bounding observables} \label{subsubsec:bounding_observables}

A similar approach can be applied to bound observables' expectation values.
Consider the expectation value of an observable $O := \sum_i \beta_i P_i$ within the ground-state manifold, denoted by $\expval{O}_{\mathrm{gs}}$.
When the ground state is non-degenerate, this quantity can be estimated using variational methods such as MPS:
\begin{equation}
    \expval{O}_{\mathrm{gs}}^{\mathrm{MPS}} = \Tr(O \rho_{\mathrm{MPS}})
\end{equation}
where $\rho_{\mathrm{MPS}}$ is the solution of \cref{eq:gs_energy_mps}.
However, in the presence of ground-state degeneracy, such an estimate is generally not well-defined, as different ground states may yield different expectation values.

What remains well-defined in all cases is the range of possible expectation values attained within the ground-state manifold.
These bounds can be expressed as optimization problems.
For example, the lower bound is given by
\begin{equation} \label{eq:obs_problem}
\expval{O}_{\mathrm{gs}}^{\min} :=  \min_{\rho \in \mathcal{S}} \Tr(O \rho) \quad \text{s.t.} \quad \Tr(H \rho) = E_{\mathrm{gs}} ,
\end{equation}
while the upper bound $\expval{O}_{\mathrm{gs}}^{\max}$ is given by a maximization.
In particular, if the ground-state energy is non-degenerate, these bounds coincide, i.e., $\expval{O}_{\mathrm{gs}}^{\min} = \expval{O}_{\mathrm{gs}}^{\max} = \expval{O}_{\mathrm{gs}}$, 
since the ground state is the unique state satisfying the constraint ($\Tr(H\rho)=E_{\mathrm{gs}}$).

In practice, the exact ground-state energy is often unknown. 
However, as discussed previously, lower bounds can be obtained from SDP relaxations, while upper bounds can be computed using variational ansätze. 
In this setting, bounds on the expectation value of the observable can be obtained by solving
\begin{equation} \label{eq:obs_problem2}
\expval{O}_{\sim\mathrm{gs}}^{\min} :=  \min_{\rho \in \mathcal{S}} \Tr(O \rho) \quad \text{s.t.} \quad E_{\mathrm{gs}}^{\mathrm{SDP}} \leq \Tr(H \rho) \leq E_{\mathrm{gs}}^{\mathrm{MPS}}, 
\end{equation}
where $\expval{O}_{\sim\mathrm{gs}}^{\min}$ is the minimal expectation value over all states yielding an energy close to the ground-state energy.
Since the constraints in \cref{eq:obs_problem2} are looser than the ones in \cref{eq:obs_problem}, we have $\expval{O}_{\sim\mathrm{gs}}^{\min} \leq \expval{O}_{\mathrm{gs}}^{\min}$.

Optimization problems such as those in \cref{eq:obs_problem,eq:obs_problem2} also belong to the class of NPO problems. 
In particular, \cref{eq:obs_problem2} can be equivalently formulated as
\begin{equation} \label{eq:obs_problem2_as_npo}
    \expval{O}_{\sim\mathrm{gs}}^{\min} = \min_{\boldsymbol{y} \in \mathcal{Y}} \sum_{i} \beta_i y_i \quad \text{s.t.} \quad E_{\mathrm{gs}}^{\mathrm{SDP}} \leq \sum_i \alpha_i y_i \leq E_{\mathrm{gs}}^{\mathrm{MPS}}.
\end{equation}
As in the ground-state energy case, these NPO problems admit SDP relaxations. 
The resulting relaxed optimization problems provide lower and upper bounds on the optimal values, yielding
\begin{equation}
    \begin{aligned}
        \expval{O}_{\sim\mathrm{gs}}^{\min \; \mathrm{SDP}} \leq \expval{O}_{\sim\mathrm{gs}}^{\min} \leq \expval{O}_{\mathrm{gs}}^{\min} \\
        \expval{O}_{\mathrm{gs}}^{\max} \leq \expval{O}_{\sim\mathrm{gs}}^{\max} \leq \expval{O}_{\sim\mathrm{gs}}^{\max \; \mathrm{SDP}}
    \end{aligned}
\end{equation}
where $\expval{O}_{\sim\mathrm{gs}}^{\min \; \mathrm{SDP}}$ and $\expval{O}_{\sim\mathrm{gs}}^{\max \; \mathrm{SDP}}$ denote the optimal values of the SDP relaxations of \cref{eq:obs_problem2_as_npo}. 
In the rest of this work, we will simplify the notation: $\expval{O}_{\sim\mathrm{gs}}^{\min \; \mathrm{SDP}} \rightarrow \expval{O}_{\min}$ and $\expval{O}_{\sim\mathrm{gs}}^{\max \; \mathrm{SDP}} \rightarrow \expval{O}_{\max}$.

\subsection{SDP relaxation}
In this work, we consider non-commutative polynomial optimization (NPO) problems of the general form
\begin{equation} \label{eq:npo_problem}
\begin{aligned}
    \beta_{\max} = \max_{\boldsymbol{y} \in \mathcal{Y}} \quad & \sum_i b_i y_i \\
    \text{ s.t. } \quad & \sum_i a_{ij} \; y_i \geq 0 , \; \forall_{j = 1 \dots \ell}
\end{aligned}
\end{equation}
where $\mathcal{Y} = \{\boldsymbol{y} : y_i = \Tr(P_i \rho) \; , \; \forall \rho \in \mathcal{S}\}$ as defined in \cref{eq:algebra_constraint}, $b_i$ are the coefficients of the objective function and $a_{ik} \in \mathbb{R}$ are the coefficients of the $k$-th inequality constraints.
Note that equality constraints can be easily included by considering two inequality constraints.

The general NPO problem is, in general, non-convex. 
It can be relaxed to a semidefinite programming (SDP) problem, which is a convex optimization problem. 
The resulting SDP can be expressed as follows:
\begin{equation} \label{eq:npo_problem_sdp_relaxation}
\begin{aligned}
    \max_{\boldsymbol{y} \in \mathbb{R}^n} & \sum_{i} b_i y_i \\
    \text{s.t.} \quad & y_{\mathbb{I}} = 1 \\
    & \sum_{i} a_{ij} y_i \geq 0 , \;  \forall_{j = 1 \dots \ell} \\
    & \Gamma(\boldsymbol{y}) \succeq 0
\end{aligned}
\end{equation}
where $\Gamma(\boldsymbol{y}) = $ is the moment matrix~\cite{lasserre_global_2001,navascues_convergent_2008}, whose construction is described in more detail later.

This problem constitutes a relaxation of the original formulation:  
the solution set of \cref{eq:npo_problem} is contained within the solution set of \cref{eq:npo_problem_sdp_relaxation}.  
The relaxed problem is particularly convenient, as it takes the form of a semidefinite program (SDP).
The \emph{feasible set}, denoted $\mathcal{F}$, is given by the set of $\boldsymbol{y} \in \mathbb{R}^n$ that satisfy the inequality constraints and the positive semidefinite (PSD) constraint.
The function that is maximized is referred as \emph{objective function}.
Such SDP formulations have been extensively studied in the optimization and control communities~\cite{boyd_convex_2004,blekherman_semidefinite_2012}  and a variety of numerical solvers exist, such as \texttt{MOSEK}~\cite{noauthor_mosek_nodate}, which is based on interior-point methods. 
These methods have polynomial-time complexity in the problem dimensions, namely, in the dimension of the PSD matrix and the number of constraints, as well as in the required numerical precision~\cite{nesterov_lectures_2018}.

The moment matrix is a PSD matrix whose entries are linear combinations of variables, i.e. 
\begin{equation}
    \Gamma(\boldsymbol{y}) = \sum_i A^{(i)} y_i \; .
\end{equation}
where the matrices $\{A^{(i)}\}_i$ are $m \times m$ Hermitian matrices.
The moment matrix possesses a highly structured form: it is generated from a sequence $\boldsymbol{s}$ of $m$ operators, with entries defined as 
$\Gamma_{ij} := \Tr\phantom{}(s_i\, s_j^{\dagger} \rho)$.
A detailed example is provided in \cref{app_sdp_relaxation_example}.
By construction, the moment matrix must be positive semidefinite, and this requirement induces a set of linear constraints on products of variables that partially reproduce the relations of the Pauli algebra $\mathcal{A}$. 
The sequence must be chosen such that all the variables appearing in the objective and inequality constraints are also present in the moment matrix, otherwise the problem might end up being unbounded, and thus unsolvable.

The choice of the sequence $\boldsymbol{s}$ plays a crucial role in ensuring that the problem remains computationally manageable, 
as it directly determines the size of the PSD variables. 
Naturally, this choice is not unique. 
A standard strategy is to select a sequence from the NPA hierarchy~\cite{navascues_convergent_2008}, 
which provides a nested family of sequences that progressively converges to the exact solution. 
At hierarchy level~$l$, the sequence contains all Pauli strings with Hamming weight up to~$l$. 
We denote by $\beta^{(l)}_{\max}$ the solution of the corresponding SDP relaxation. 
By construction, these upper-bounds satisfy 
$\beta^{(1)}_{\max} \geq \beta^{(2)}_{\max} \geq \dots \geq \beta_{\max}$. 

Although the NPA hierarchy is guaranteed to converge to the optimal quantum value in the asymptotic limit of infinitely many levels, alternative sequences may provide tighter bounds at comparable levels.
The overarching objective is therefore to identify a sequence that captures as many physically relevant constraints as possible, 
while remaining sufficiently compact to keep the resulting SDP efficiently solvable. 
To achieve this, one can exploit the underlying structure of the physical problem, such as symmetries, the spatial arrangement of the particles, or other problem-specific features~\cite{wang_certifying_2024}.

\subsubsection{SDP relaxation with interaction-range restrictions}
\label{sec:interaction_range}

Reducing the size of the moment matrix provides a way to simplify an SDP relaxation.
In this work, we consider systems on a lattice whose Hamiltonian contains terms involving geometrically local spin operators. 
It is therefore natural to consider an SDP relaxation whose moment matrix is generated by a sequence of Pauli strings restricted to a chosen interaction range $r$. 
As an example, we explicitly give the sequence at level $3$ and interaction range $r$ for $N$ sites:
\begin{equation} \label{eq:NPA_sequence}
\begin{aligned}
    \boldsymbol{s} = \quad & \left\{ \mathbb{I} \right\} \\
    \cup & \left\{\sigma_{i}^{k} \vert i \in \{1 \dots N\} \; , \; k \in \{x,y,z\} \right\} \\
    \cup & \left\{ \sigma_{i}^{k} \sigma_{j}^{l} \vert 0 < \operatorname{dist}(i,j) \leq r \; , \; k,l \in \{x,y,z\} \right\} \\
    \cup & \left\{ \sigma_{i_1}^{k_1} \sigma_{i_2}^{k_2} \sigma_{i_3}^{k_3} \vert 0 < \operatorname{dist}(i_{\alpha},i_{\beta}) \leq r \; , \; k_{\alpha} \in \{x,y,z\} \right\}
\end{aligned}
\end{equation}
where $\operatorname{dist}(i,j)$ denotes the distance between site $i$ and site $j$.
The first (respectively, second, third, and fourth) line of \cref{eq:NPA_sequence} corresponds to the zero- (respectively, one-, two-, and three-)body term, and thus to the zeroth, first, second, and third levels of the hierarchy.
The distance constraint ensures that only monomials within the interaction range are included in the sequence.
We denote by $\beta_{\max}^{(l,r)}$ the solution of the SDP relaxation at hierarchy level $l$ and interaction range $r$. 
By the same argument as before, these bounds satisfy
$\beta_{\max}^{(l,1)} \geq \beta_{\max}^{(l,2)} \geq \dots \geq \beta^{(l)}_{\max}$ for all hierarchy levels~$l$. 

This simplification does not improve the bound, as it further relaxes the constraints, but it substantially reduces the size of the moment matrix $\Gamma$. 
As a result, the optimization problem can be solved for larger system sizes and higher hierarchy levels. 
Thus, this simplification highlights the trade-off between the tightness of the bound and the computational cost of the optimization.

\subsubsection{SDP relaxation with symmetries} \label{subsubsec:sdp_with_symmetries}

When the objective and the inequality constraints satisfy a set of symmetries, the SDP relaxation problem can be further simplified by decomposing the PSD constraint into smaller blocks.
For this we have to introduce the concept of invariant SDP and symmetry adapted basis.

Consider a transformation on the Pauli algebra
\begin{equation}
\begin{aligned}
    g:\; & \mathcal{A} \rightarrow \mathcal{A} \\
         & P \mapsto G P G^{\dagger},
\end{aligned}
\end{equation}
where $P$ is a Pauli string and $G$ is a unitary operator. 
By construction, $g$ satisfies: (i) linearity, $g(aP + bQ) = a g(P) + b g(Q)$ for $a,b \in \mathbb{C}$; (ii) multiplicativity, $g(PQ) = g(P) g(Q)$ for $P,Q \in \mathcal{A}$; and (iii) $*$-preservation, $g(P^{\dagger}) = g(P)^{\dagger}$.
Note that, the third condition is trivially satisfied, since Pauli operators are Hermitian, i.e. $P^{\dagger} = P$.
Such transformations are called $*$-automorphisms, and they naturally describe symmetry transformations in quantum mechanics~\cite{bratteli_operator_1987}.

The action of $g$ on the variables defined in \cref{eq:def_y_variables} is
\begin{equation}
    \mathfrak{g}(y) := \Tr(g(P) \rho).
\end{equation}
Note the font difference to emphasize that $g$ and $\mathfrak{g}$ correspond to the same transformation, but on a different space ($g$ acts on the Pauli algebra $\mathcal{A}$, while $\mathfrak{g}$ acts on variables in $\mathbb{R}$); $\mathfrak{g}$ is also a $*$-automorphism.
For example, the ``$\sigma_1^x$-transformation'' acting on the first qubit of a Pauli string as $g(P) = \sigma_1^x P \sigma_1^x$ induces a transformation on the variable given by $y = \Tr(P\rho)$ as $\mathfrak{g}(y) = \mathrm{Tr}(\sigma_1^x P \sigma_1^x \rho)$.

The SDP in \cref{eq:npo_problem_sdp_relaxation} is said to be \emph{invariant} under $g$ if both its objective function and its feasible set remain unchanged~\cite{gatermann_symmetry_2004}. 
In other words,
\begin{equation} \label{eq:invariant_sdp}
    \begin{aligned}
    & \sum_i b_i\, \mathfrak{g}(y_i) = \sum_i b_i\, y_i, \\
    & \mathfrak{g}(\boldsymbol{y}) \in \mathcal{F} , \quad \forall\, {\boldsymbol{y} \in \mathcal{F}} \; ,
\end{aligned}
\end{equation}
where the transformation on a vector of variables extends naturally to each component separately, i.e. $\mathfrak{g}(\boldsymbol{y}) := (\mathfrak{g}(y_1), \dots, \mathfrak{g}(y_n))^T$. 
More generally, an SDP is said to be invariant (or symmetric) under a group $\mathcal{G}$ if the above conditions hold for all $g \in \mathcal{G}$.
Here and in what follows, we make a slight abuse of notation, directly identifying group elements with their action on the Pauli algebra. It will remain clear from context what is meant.

\begin{theorem} \label{thm:invariant_sdp}
If an SDP is invariant under a group $\mathcal{G}$, we can restrict the feasible set $\mathcal{F}$ to the set of solution that are symmetric
\begin{equation} \label{eq:symmetric_feas_set}
    \tilde{\mathcal{F}} := \mathcal{F} \cap \{ \boldsymbol{y} \vert \mathfrak{g}(\boldsymbol{y}) = \boldsymbol{y} , \,  \mathfrak{g} \in \mathcal{G} \} \quad .
\end{equation}
\end{theorem}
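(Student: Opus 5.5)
The plan is the standard group-averaging (Reynolds operator) argument. The precise claim is that the optimal value of the SDP over $\mathcal{F}$ equals the optimal value over $\tilde{\mathcal{F}}$, and that some optimizer lies in $\tilde{\mathcal{F}}$. I will assume $\mathcal{G}$ is finite; for a compact group one replaces sums by Haar integrals. This holds for the gauge groups considered, which are generated by finitely many commuting involutions. Since $\tilde{\mathcal{F}} \subseteq \mathcal{F}$, the optimum over $\tilde{\mathcal{F}}$ is at most the optimum over $\mathcal{F}$. So only the reverse inequality needs work.

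For the reverse inequality, take any feasible $\boldsymbol{y} \in \mathcal{F}$, for instance an optimizer. Define the averaged point
\begin{equation*}
\bar{\boldsymbol{y}} := \frac{1}{|\mathcal{G}|} \sum_{g \in \mathcal{G}} \mathfrak{g}(\boldsymbol{y}).
\end{equation*}
I will then check three properties of $\bar{\boldsymbol{y}}$.

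\textbf{Feasibility.} Each $\mathfrak{g}(\boldsymbol{y})$ lies in $\mathcal{F}$ by the invariance condition \cref{eq:invariant_sdp}. The set $\mathcal{F}$ is convex, because it is cut out by the affine constraint $y_{\mathbb{I}}=1$, linear inequalities, and the linear matrix inequality $\Gamma(\boldsymbol{y}) \succeq 0$. The PSD cone is convex and $\Gamma$ is linear in $\boldsymbol{y}$. A convex combination of points of $\mathcal{F}$ therefore stays in $\mathcal{F}$.

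\textbf{Objective value.} By the first line of \cref{eq:invariant_sdp} and linearity, $\sum_i b_i \bar y_i = \sum_i b_i y_i$.

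\textbf{Symmetry.} For any $h \in \mathcal{G}$, the induced map $\mathfrak{h}$ is linear on the variables, since it comes from conjugation $P \mapsto HPH^\dagger$ followed by expansion in Pauli strings. Hence
\begin{equation*}
\mathfrak{h}(\bar{\boldsymbol{y}}) = \frac{1}{|\mathcal{G}|}\sum_g \mathfrak{h}\mathfrak{g}(\boldsymbol{y}).
\end{equation*}
Because $g \mapsto hg$ is a bijection of $\mathcal{G}$, this sum equals $\bar{\boldsymbol{y}}$. This step also uses that $g \mapsto \mathfrak{g}$ is a group action, $\mathfrak{h}\mathfrak{g} = \mathfrak{(hg)}$, which follows from $g(P)=GPG^\dagger$ composing correctly, up to phases that cancel under conjugation.

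Together these show that $\bar{\boldsymbol{y}} \in \tilde{\mathcal{F}}$ with the same objective value as $\boldsymbol{y}$. So the two optima coincide. If $\boldsymbol{y}$ was optimal, $\bar{\boldsymbol{y}}$ is a symmetric optimizer.

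The main obstacle is making the linearity and action step rigorous at the level of the variables $\boldsymbol{y}$. The variables are indexed only by Pauli strings in the support of the moment matrix. I need $\mathfrak{g}$ to map this index set to itself, possibly with signs or recombinations; otherwise $\mathfrak{g}(\boldsymbol{y})$ is not even a vector in $\mathbb{R}^n$. I would handle this by requiring that $\mathcal{G}$ maps the sequence $\boldsymbol{s}$ into its linear span. Gauge transformations, being products of Paulis, map Pauli strings to $\pm$ themselves, which makes this immediate. Under this requirement $\Gamma(\mathfrak{g}(\boldsymbol{y})) = U_g \Gamma(\boldsymbol{y}) U_g^\dagger$ for a unitary (here signed permutation) $U_g$. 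This also gives the feasibility part of invariance directly.
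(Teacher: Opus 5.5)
Your proposal is correct and follows essentially the same group-averaging argument as the paper. You apply each $\mathfrak{g}$ to a feasible (or optimal) point and average, so the averaged $\bar{\boldsymbol{y}}$ is symmetric, feasible and has the same objective value. You also spell out what the paper's one-line proof leaves implicit: convexity of $\mathcal{F}$, linearity of the objective and of $\mathfrak{g}$, and closure of the variable index set under the gauge action. One small note: the induced action on $\boldsymbol{y}$ actually composes in reverse order, so $\mathfrak{h}\circ\mathfrak{g}$ is the map induced by $gh$ rather than $hg$. This changes nothing in your reindexing step, and the gauge group here is abelian anyway.
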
 

\begin{proof}
If $\boldsymbol{y}$ is solution, then $\mathfrak{g}(\boldsymbol{y})$ is also solution, because of the SDP invariance. 
And thus $\overline{\boldsymbol{y}} = \frac{1}{|\mathcal{G}|} \sum_{\mathfrak{g} \in \mathcal{G}} \mathfrak{g}(\boldsymbol{y})$ is also solution.
Since $\mathfrak{g}(\overline{\boldsymbol{y}}) = \overline{\boldsymbol{y}}$, we can say without loss of generality that there exists a solution satisfying the constraints in \cref{eq:symmetric_feas_set}.
\end{proof}

In the specific case of SDP relaxations, it is possible to show that the PSD constraint, i.e. the positivity of the moment matrix, is preserved under the action of a group $\mathcal{G}$, if there exists a unitary representation $R:\mathcal{G} \rightarrow \mathbb{C}^{m \times m}$ such that 
\begin{equation} \label{eq:sequnce_transformation}
    g(\boldsymbol{s}) = R(g) \boldsymbol{s} .
\end{equation}
Again, the transformation on a vector of operators is naturally extended to each component individually, i.e. $g(\boldsymbol{s}) = (g(s_1) , \dots , g(s_m))^T$.
This corresponds to the transformation of the generating sequence $\boldsymbol{s}$ under $g \in \mathcal{G}$.
From representation theory, there exists a unitary matrix $P \in \mathbb{C}^{m \times m}$ such that $P^{\dagger} R(g) P$ is block-diagonal for all $g \in \mathcal{G}$, with block sizes $m_k$ satisfying $\sum_k m_k = m$ \cite{serre_linear_1977}. 
The matrix $P$ is known as the symmetry-adapted basis transformation.

It follows from \Cref{thm:invariant_sdp} and \cref{eq:sequnce_transformation} that, in the symmetry-adapted basis, the SDP relaxation is equivalent to
\begin{equation}
    \begin{aligned}
    \max_{\boldsymbol{y} \in \mathbb{R}^n} \quad & \sum_i b_i y_i \\
    \text{s.t.} \quad & y_{\mathbb{I}} = 1 \\
    & \sum_i a_{ij} y_i \geq 0 \quad \forall j \\
    & \sum_i \tilde{A}^{(i)}_k y_i \succeq 0 \quad \forall k
    \end{aligned}
\end{equation}
where $\tilde{A}^{(i)}_k$ denotes the $k$-th block of $\tilde{A}^{(i)} := P A^{(i)} P^{\dagger}$. See \cref{app:block-diag_sdp} for more details. 

This reformulation reduces the PSD constraints to smaller blocks of size $m_k \leq m$, thereby significantly reducing the computational complexity of the semidefinite program.

\section{The $\mathbb{Z}_2$ Kogut-Susskind model in (1+1)d}
\label{sec:z2_1d}

\subsection{Hamiltonian and symmetries}
In this work, we consider the Kogut-Susskind Hamiltonian~\cite{kogut_hamiltonian_1975} of a $\mathbb{Z}_2$ lattice gauge theory including dynamical fermionic matter.
For ease of description, we formulate the whole theory after a Jordan-Wigner transformation in terms of spin-$1/2$ particles~\cite{jordan_uber_1928}.
The Hamiltonian $H=H_J+H_f+H_m$ contains the three distinct terms
\begin{equation}\label{eq:hamiltonian}
\begin{split}
    H_J &= -J\sum_{i=1}^{N-1} (\sigma_i^+ \tau_{i,i+1}^z \sigma_{i+1}^- + \text{h.c.}),\\
    H_f &= -f\sum_{i=1}^{N-1} \tau_{i,i+1}^x,\\
    H_m &= \frac{\mu}{2}\sum_{i=1}^N (-1)^i\sigma_i^z.
\end{split}
\end{equation} 
The term $H_J$ is the gauge-matter interaction, $H_f$ is a background field and $H_m$ is the fermionic mass term.
Since the system is considered in $(1+1)$d, there is no self-interaction of the gauge field, i.e. the plaquette operator term is missing.
Unless stated otherwise, the numerical examples in this work use $J=f=\mu=1$.

The operators $\tau^{\alpha}_{i, i+1}$ and $\sigma^{\alpha}_i$ are both Pauli matrices. 
The $\sigma$'s act on the matter, while the $\tau$'s act on the gauge fields. 
The system size is specified by the number of matter sites $N$; thus the total number of spins is $2N-1$, where the $-1$ corresponds to the missing gauge field spin due to open boundary conditions.

The theory described by the Hamiltonian in \cref{eq:hamiltonian} exhibits a local $\mathbb{Z}_2$ gauge symmetry that is generated by the symmetry operators
\begin{equation} \begin{aligned}
\label{eq:gauge-1d-bulk}
    G_i = (-1)^{i+1}\tau_{i-1, i}^x \sigma_i^z \tau_{i, i+1}^x 
        \quad \forall i \in \{ 2\dots N-1 \}
\end{aligned} \end{equation}
where $N$ is the number of sites.
For the boundary sites, we have 
\begin{equation} \begin{aligned}
\label{eq:gauge-1d-boundary}
    G_1 &= \sigma_1^z \tau_{1, 2}^x \\
    G_N &= (-1)^{N+1}\tau_{N-1, N}^x \sigma_N^z \\
\end{aligned} \end{equation}
Unless stated otherwise, we focus on the physical sector where $G_i\ket{\psi}=+1\ket{\psi}\ \forall\ i = 1 \dots N$.
We enforce the resulting constraints in the variational computation and the semi-definite program.
By leaving out these constraints, we could also consider the corresponding unconstrained gauge theory~\cite{inacio_unconstrained_2026}.

\subsection{Generating sequence} \label{subsec:generating_sequence_1D}

In this work, we select the generating sequence up to level 3 with an interaction range of 1.
The sequence is explicitly given in \cref{eq:NPA_sequence}. 
This choice is motivated by the fact that it maintains a reasonable computational cost while providing robust results. 
Since the field operators act on the links, we treat them as located halfway between neighboring matter sites. 
Consequently, an interaction range of 1 effectively corresponds to a range of 2 when using integer indexing convention.

To simplify the SDP, we compute the symmetry-adapted basis $P$ as explained in \cref{subsubsec:sdp_with_symmetries}, using the group generated by the gauge operators, i.e. $\mathcal{G} = \langle G_1, \dots , G_N \rangle$. 
Additional symmetries, such as translation invariance or mirror symmetry, could also be exploited; we leave this extension to future work.

After computations of several examples with dedicated software such as \texttt{RepLAB}~\cite{rosset_replab_2021} and \texttt{RepnDecomp}~\cite{hymabaccus_repndecomp_2020}, we observe that the symmetry-adapted basis $P$ is a permutation matrix. 
Consequently, arranging the sequence in the correct order ensures that the moment matrix $\Gamma$ becomes block-diagonal. 
We refer to this specific arrangement as the \emph{block-order} of the sequence. 
To construct the block-ordered sequence, we must partition the monomials into disjoint subsequences $t_1, t_2, \dots$, such that each subsequence $t_i$ generates a single block. 
Notably, we observe that if $g(s_i s_j) = s_i s_j$ for all $g \in \mathcal{G}$, then the elements $s_i$ and $s_j$ must belong to the same subsequence.

The underlying intuition is as follows: since both the generating sequence and the gauge operators consist of Pauli strings, the action of a symmetry yields an eigenvalue of either $+1$ or $-1$ (i.e., $g(s_i s_j) = \pm s_i s_j$). 
However, because the SDP is invariant and we restrict our analysis to the symmetric subspace, we must satisfy $\mathfrak{g}(y) = y$. 
This implies $\pm \Tr(s_i s_j \rho) = \Tr(s_i s_j \rho)$. 
Therefore, whenever $g(s_i s_j) = -s_i s_j$, the associated variable $y$ must vanish ($y=0$). 
In such cases, $s_i$ and $s_j$ can be assigned to different subsequences, as their product---which would appear in an off-block-diagonal position---is zero.

In summary, the criterion can be reformulated as follows: we begin by assuming that all Pauli strings lie in distinct subsequences, and then merge subsequences containing $s_i$ and $s_j$ whenever 
\begin{equation} \label{eq:ordering_criterion}
    g(s_i s_j) = s_i s_j.
\end{equation}

Computations reveal that the blocks within the block-ordered sequence fall into five distinct classes. 
These blocks exhibit a high degree of structure: three types of blocks appear for each matter site, while one type is associated to each field site. 
The fifth and final class comprises all remaining elements not accounted for by the preceding four classes. 
A detailed definition of these classes is provided in~\cref{app:blocks_1D_model}.

\subsection{Bounds on the energy}\label{subsec:energy_bounds_1D}

For small system sizes, we compare ground-state energy estimates obtained using three methods: exact diagonalization (ED), matrix product states (MPS), and semidefinite programming (SDP) relaxations. 
These methods address different computational objectives. 
Exact diagonalization computes the exact ground-state energy and therefore serves as a reference whenever it is computationally feasible. 
The availability of the exact solution for small systems allows us to assess the quality of both bounds and to verify the correctness of the proposed SDP formulation before extending the comparison to larger system sizes, where exact diagonalization is no longer tractable.

MPS is a variational method that provides an upper bound on the ground-state energy, which we can use with the SDP relaxation  lower bound. 
In our case, we use \texttt{tenpy}~\cite{hauschild_tensor_2024} to build an MPS at varying bond dimensions, using DMRG to find the ground state. 
The gauge symmetry is enforced by adding a penalty term to the Hamiltonian, as specified in appendix~\ref{app:mps_convergence}.
Unless otherwise stated, all MPS simulations in (1+1)d were performed with a bond dimension of $D=80$. 
As demonstrated in \cref{app:mps_convergence}, this choice is sufficient to ensure convergence.

To obtain an SDP-based estimate of the ground-state energy, we consider a relaxation of the noncommutative polynomial optimization problem introduced in \cref{eq:gs_problem_as_npo}. 
The sequence used to construct the moment matrix is described in \cref{subsec:generating_sequence_1D}. 
To restrict the optimization to the physical gauge sector, we impose the additional constraints $\expval{G_i} = 1, \forall i=1, \dots, N$.
The resulting SDP relaxation is
\begin{equation} \label{eq:sdp_relaxation_energy_1D}
\begin{aligned}
    E_{\min} = \min_{\boldsymbol{y} \in \mathbb{R}^n} \quad & \expval{H} \\
    \text{s.t.} \quad & y_{\mathbb{I}} = 1 \\
    & \expval{G_i} = 1 \quad \forall i = 1 \dots N\\
    & \Gamma(\boldsymbol{y}) \succeq 0, 
\end{aligned}
\end{equation}
where $\expval{H}$ and $\expval{G_i}$ denote the expectation values of the Hamiltonian and the $i$-th gauge symmetry operator after decomposition in the Pauli basis. 
Each Pauli expectation value is replaced by its corresponding moment variable $y$, making both quantities linear functions of $\boldsymbol{y}$.
The equality constraints on $\expval{G_i}$ are a choice: they determine the super-selection sector that we are working in.

\begin{figure*}[t]
    \centering
    \includegraphics[width=0.47\linewidth]{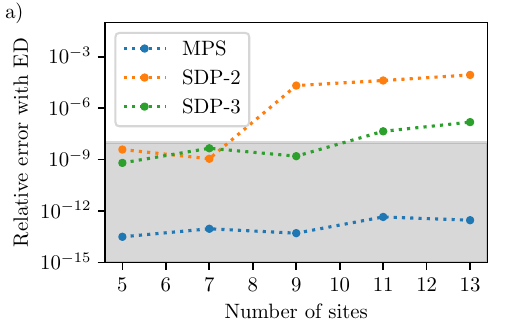}
    \hfill
    \includegraphics[width=0.47\linewidth]{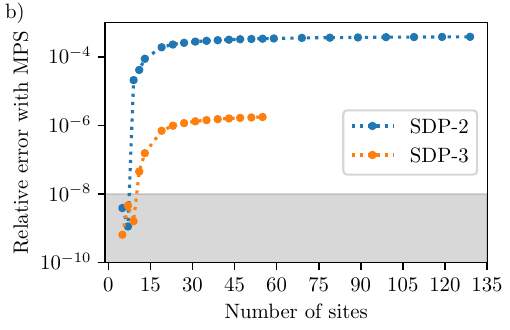} \\
    \caption{Relative error in the ground-state energy estimate of the Kogut-Susskind Hamiltonian with parameters $J=f=\mu=1$. 
    (a) For small system sizes, the reference ground-state energy is obtained by exact diagonalization. 
    (b) For larger system sizes, the reference is provided by MPS.
    The grey shaded region indicates errors below the guaranteed numerical precision of the SDP solver.}
    \label{fig:energies_1D}
\end{figure*}
The relative errors of the MPS approximation and of the level-2 and level-3 SDP relaxations with respect to the exact ground-state energy are shown in \cref{fig:energies_1D}~(a). 
The figure is limited to small system sizes, where exact diagonalization is still feasible. 
For larger systems, we instead report the relative error of the level-2 and level-3 SDP relaxations with respect to the MPS energy, as shown in \cref{fig:energies_1D}~(b).

As expected, the MPS energies are always above the exact values, while the SDP relaxations provide lower bounds (this distinction is not visible in \cref{fig:energies_1D}, since absolute values of the errors are plotted on a logarithmic scale).

Increasing either the relaxation level or the interaction range systematically improves the quality of the SDP lower bound.
Notably, increasing the level generally implies a larger generating sequence, whereas increasing the interaction range does not, and therefore the former typically leads to better bounds.
Furthermore, incorporating the extensive amount of gauge-sector constraints leads to a tighter relaxation than the unconstrained formulation, as the additional linear constraints reduce the feasible set of the SDP and therefore yield a more accurate approximation of the original optimization problem.

\subsection{Bounding few-body observables}

The main advantage of the SDP approach in comparison to variational methods is its ability to certify expectation values of observables. 
Bounds on such observables are obtained by formulating the corresponding NPO problem in \cref{eq:obs_problem2_as_npo}, which is then relaxed to an SDP.
As in the energy case, the sequence defining the moment matrix is constructed as described in \cref{subsec:generating_sequence_1D}. 
The block-diagonalization procedure applies only when the objective function is invariant under the relevant symmetries, i.e., when \cref{eq:invariant_sdp} is satisfied. 
Here, this condition holds for all gauge-invariant operators, which yield the non-vanishing operators in physical states~\cite{elitzur_impossibility_1975}.
In particular, \cref{eq:invariant_sdp} is fulfilled by the one-body observables $\sigma^z_i$, representing matter occupation on site $i$. 
Furthermore, we again impose sector constraints in order to restrict the optimization to the physically relevant subspace.
The resulting SDP for the lower bound takes the form:
\begin{equation} \label{eq:omin}
    \begin{aligned}
        \expval{\sigma^z_i}_{\min} = \min_{\boldsymbol{y} \in \mathbb{R}^n} \quad & \expval{\sigma^z_i} \\
        \text{s.t.} \quad & y_{\mathbb{I}} = 1 \\
        & \expval{G_k} = 1 \quad \forall k = 1 \dots N\\
        & E_{\max} - \expval{H} \geq 0 \\
        & \expval{H} - E_{\min} \geq 0 \\
        & \Gamma(\boldsymbol{y}) \succeq 0 
    \end{aligned}
\end{equation}
where $\expval{H}$ and $\expval{G_k}$ depend on $\boldsymbol{y}$.
Following the idea proposed in \cref{subsubsec:bounding_observables}, the two inequality constraints are introduced to ensure that the system is in a state close to the ground-state.
The upper-bound $\expval{\sigma^z_i}_{\max}$ is obtained by maximizing the same objective function over the same feasible set.

\begin{figure*}[t]
    \centering
    \includegraphics[width=0.47\linewidth]{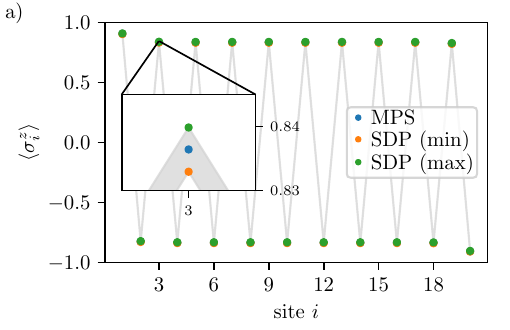}
    \hfill
    \includegraphics[width=0.47\linewidth]{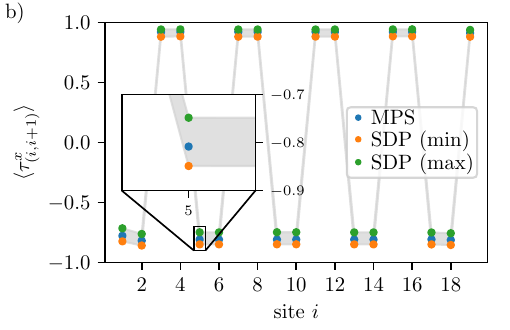} \\
    \caption{Bounds on one-body observables at the ground state energy of the Kogut-Susskind Hamiltonian in a 39-spin system with parameters $J=f=\mu=1$. 
    The observable is computed for each of the $N=20$ matter sites.  
    The expectation values are computed using MPS, while SDP relaxations are used to certify the admissible range (gray area). 
    (a) Estimates of the spatial distribution of the mass field $\langle \sigma_i^z \rangle$, with parameters $J=\mu=f=1$.
    (b) Estimates of the spatial distribution of the electric field terms $\langle \tau_{(i,i+1)}^z \rangle$, in the sector $\langle G_i\rangle=(-1)^i$ with parameters $J=1$, $\mu=1.5$ and $f=0.5$.}
    \label{fig:energies_1D_spatial_distrib}
\end{figure*}
\cref{fig:energies_1D_spatial_distrib}~(a) displays the estimation of the matter occupation $\expval{\sigma^z_i}$ for each matter site in a 39-spin system, hosting 20 matter sites.
The expectation values are obtained via MPS simulations, while the admissible area is provided by SDP relaxations at level 3. 
\cref{fig:energies_1D_spatial_distrib}~(a) reveals the distinctive alternating pattern in $\langle\sigma_i^z\rangle$ that is induced in the chosen gauge-invariant sector by the mass and field term of \cref{eq:hamiltonian}. 
Translation symmetry is explicitly broken by one lattice spacing and the gauge-invariant local density is biased toward two alternating sublattice matter occupations.
In the strong-$|\mu|$ (or strong-$f$) limit, the system approaches a maximally staggered pattern with $\lvert\expval{\sigma_i^z}\rvert\rightarrow1$. 
For finite hopping, field, and mass terms ($0\neq J;f,\mu\ll\infty$), as used here, quantum fluctuations reduce the density oscillations to the observed pattern $\expval{\sigma_i^z}\simeq(-1)^{i+1}Z_0$ with $0<Z_0<1$, which is well-resolved by the two-sided bounds.

Similarly, \cref{fig:energies_1D_spatial_distrib}~(b) shows the estimation of the electric field $\expval{\tau^x_{i,i+1}}$ for each of the 19 links in a 39-spin system, now considering the sector $\langle G_i\rangle=(-1)^i$. 
In the strong-$\mu$ (or small-f) limit, the electric field approaches $\lvert\tau^x_{i,i+1}\rvert\rightarrow1$ while exhibiting a staggered pattern of 2-link domains with a characteristic period of four sites. 
The sign of the electric field changes upon crossing each antiparticle site, $\tau^x_{i-1,i}\tau^x_{i,i+1}=(-1)^i$. 
For the finite hopping, field, and mass parameters considered here, this takes the form $\expval{\tau^x_{i,i+1}}\simeq(-1)^{\lfloor(i+1)/2\rfloor}X_{\lfloor(i+1)/2\rfloor}$ with $0<X_1<X_0<1$, where the asymmetry in amplitudes $X_1,X_0$ is a consequence of the non-vanishing field term $f$ favoring a positive electric field. 
This pattern is well described by the two-sided bounds.

\subsection{Bounding many-body observables}

Interestingly, the SDP relaxation approach can be extended to certify expectation values of many-body observables whose support extends beyond the interaction range of the generating sequence.
In the (1+1)-dimensional KS model, an important class of long-range operators are \emph{mesonic strings}~\cite{zohar_local_2020,borla_confined_2020,kebric_confinement_2021,davoudi_scattering_2024,chai_scalable_2026}.
A mesonic string operator connecting matter sites $i$ and $j$ is defined as
\begin{equation} \label{eq:mesonic_string}
    S(i,j) = \sigma^{+}_{i} \tau_{(i,i+1)}^z \prod_{k=i+1}^{j-1} \left( -\sigma^{z}_{k} \tau^{z}_{(k,k+1)}  \right) \sigma^{-}_{j}.
\end{equation}
For example, $S(1,4) = \sigma^{+}_{1} \tau^{z}_{(1,2)} \sigma^{z}_{2} \tau^{z}_{(2,3)} \sigma^{z}_{3} \tau^{z}_{(3,4)} \sigma^{-}_{4}$.
The length of the mesonic string is defined as $r = j-i$.
In what follows, we will fix the starting point of the mesonic string to the second left-most matter site $i=2$---to avoid boundary effects---and write $S(r) \equiv S(2,2+r)$.

Since $S(r)$ is not Hermitian, its expectation value cannot be certified directly via the SDP relaxation. 
Moreover, the quantity of physical interest is typically not $\expval{S(r)}$ itself but its modulus $|\expval{S(r)}|$, whose decay as a function of the length $r$ encodes information about confinement~\cite{borla_confined_2020}.
After decomposition in the Pauli basis, the expectation value $\expval{S(r)}$ is linear in $\boldsymbol{y}$ (see \cref{eq:def_y_variables}), however the modulus,
\begin{equation}
    |\expval{S}| = \sqrt{\left(\Re \expval{S} \right)^2 + \left( \Im \expval{S} \right)^2}\,,
\end{equation}
is not and, thus, it cannot be bounded directly by the SDP relaxation.

A first strategy is to bound the real and imaginary parts individually, and combine the resulting bounds via error propagation. 
However, this approach yields loose bounds, since the real and imaginary parts are optimized independently and the overall bound is effectively determined by the worst case of each.

To obtain tighter bounds, we instead express the modulus as a function of the phase. 
For a complex expression $z$, fix a phase $\theta$ and define
\begin{equation} \label{eq:modulus_def}
    |z|_{\theta} =
    \begin{cases}
        \dfrac{\Re z }{\cos{\theta}} & \text{if } \theta \in \left[-\frac{\pi}{4}, \frac{\pi}{4}\right) \cup \left[\frac{3\pi}{4}, \frac{5\pi}{4}\right), \\[2mm]
        \dfrac{\Im z}{\sin{\theta}} & \text{otherwise},
    \end{cases}
\end{equation}
where the real and imaginary parts satisfy the constraint
\begin{equation} \label{eq:angle_constraint}
    \Re z \sin \theta = \Im z \cos \theta.
\end{equation}
Both \cref{eq:modulus_def} and \cref{eq:angle_constraint} are linear in $z$, and hence in the variables $\boldsymbol{y}$ once $z$ is expressed as a linear function of $\boldsymbol{y}$ in the SDP relaxation. 
The problem
\begin{equation}
    \begin{split}
        \max_{\boldsymbol{y} \in \mathbb{R}^n} \; & |z|_{\theta} \\
        \text{s.t.} \; & y_{\mathbb{I}} = 1, \\
        & \Gamma(\boldsymbol{y}) \succeq 0, \\
        & \Re z \sin \theta = \Im z \cos \theta,
    \end{split}
\end{equation}
is therefore a valid SDP for any fixed $\theta$. 
An upper bound on $|z|$ is then obtained by an outer optimization over $\theta$: for each phase we solve the SDP above, and the modulus bound is the maximum over all phases.

Applying this scheme to the mesonic string, the upper bound on $|\expval{S(r)}|$ is obtained from
\begin{equation} \label{eq:mesonic_string_optimization}
    \begin{split}
        |\expval{S(r)}|_{\max} = \max_{\theta} & \max_{\boldsymbol{y} \in \mathbb{R}^n} \; |\expval{S(r)}|_{\theta} \\
        \text{s.t.} \; & y_{\mathbb{I}} = 1, \\
        & \expval{G_k} = 1 \quad \forall k = 1, \dots, N, \\
        & E_{\max} - \expval{H} \geq 0, \\
        & \expval{H} - E_{\min} \geq 0, \\
        & \Gamma(\boldsymbol{y}) \succeq 0, \\
        & \Re\expval{S(r)} \sin \theta = \Im \expval{S(r)} \cos \theta,
    \end{split}
\end{equation}
where $\expval{S(r)}$ is a linear (complex) expression in the variables $\boldsymbol{y}$. 
The lower bound is obtained analogously, by replacing the double maximization with a double minimization.

In practice, the outer maximization over $\theta$ is performed by choosing a finite number of points around the trigonometric circle, $\theta \in [0, 2\pi)$, and solving the corresponding SDP at each sampled phase. 
This procedure does not constitute a proof of optimality as does the certification of single-body observables or the energy, as the true optimal phase could in principle fall between two sampled points. 
However, by inspecting the feasible set under the linear map given by $\expval{S}$ and assuming it varies smoothly with $\theta$, we can reliably extrapolate the optimum from the sampled values.
More details are given in \Cref{app:feas_set}.

Since the support of the mesonic string can extend beyond the interaction range, the generating sequence underpinning the SDP relaxation must be augmented with additional monomials so that the desired observable appears naturally in the moment matrix.
In this case, we choose to include the following set of Pauli strings to the generating sequence:
\begin{equation}
    \left\{ \sigma^\alpha_i \tau_{(i,i+1)}^{\gamma} \prod_{k=i+1}^{j-1} \left( \sigma^{\gamma}_{k} \tau^{\gamma}_{(k,k+1)}  \right) \sigma^{\beta}_j : \alpha,\beta,\gamma \in \{x,y,z\} \right\}
\end{equation}
This choice is purely empirical, based on our observation that it performs well in practice. 
There may exist better choices that introduce fewer additional operators while imposing tighter constraints on the feasible set.

To preserve the block ordering of the generating sequence, we apply the criterion in \cref{eq:ordering_criterion}. 
This ensures that each newly introduced monomial is assigned to the appropriate subsequence.
Note that each of these extra monomials only add a single row and a single column to the moment matrix, leading to only minimal impact on the runtime.

\begin{figure}
    \centering
    \includegraphics[width=\linewidth]{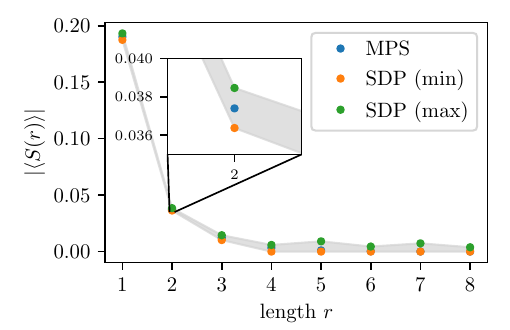}
    \caption{Bounds on the expectation value of the modulus of the mesonic string operator $|\expval{S(r)}|$ as a function of the string length $r$. 
    Simulations are performed on a $39$-spin system, with matter sites labeled from $1$ to $20$. 
    The Hamiltonian parameter are fixed to $J = f = \mu = 1$.
    The mesonic string operator extends from matter site $2$ to matter site $r+2$, with $r = 1, \ldots, 8$; strings starting or ending too close to the boundary are excluded to minimize finite-size effects. 
    Expectation values are obtained from matrix product state simulations (solid line/markers), while the SDP relaxation, combined with an optimization over a discrete set of $20$ phases $\theta \in [0,2\pi)$, provides upper and lower bounds, shown as the grey shaded region.}
    \label{fig:mesonic_strings}
\end{figure}

Figure~\ref{fig:mesonic_strings} shows the value of $|\expval{S(r)}|$ as a function of the string length $r$ on a 39-spin system. 
For each admissible endpoint, the MPS result is compared with the interval certified by the SDP relaxation. 
Boundary endpoints are excluded to suppress finite-size effects. 
The agreement between the MPS values and the certified intervals demonstrates that the relaxation can provide non-trivial bounds on the expectation value of non-local string observables over a broad range of string lengths. 
The certified intervals can be tightened by strengthening the relaxation, for example by increasing the level of the hierarchy, enlarging the interaction range, or including additional monomials in the moment sequence.

\section{Pure Gauge Model in (2+1) dimensions}
\label{sec:z2_2d}

The Kogut-Susskind model can also be formulated in two spatial dimensions. 
As before, we consider the model with open boundary conditions.
In contrast to the one-dimensional case, we limit our computations here to the pure gauge case, i.e. we do not consider fermions.
Consequently, the degrees of freedom lie only on the edges of the 2D lattice.
For a $m \times n$ lattice, this corresponds effectively to a $2mn-m-n$ spin system.
In analogy with the one-dimensional case, we describe the model in terms of Pauli operators.

The Hamiltonian is the sum of the two terms $ H = H_\mathrm{E} + H_\mathrm{B}$ with
\begin{equation}
\begin{aligned}
    H_\mathrm{E} & = -f\sum_{e \in \mathcal{E}} \tau^{x}_{e} \\
    H_\mathrm{B} & = -h\sum_{p \in \mathcal{P}} \tau^z_{p_1} \tau^z_{p_2} \tau^z_{p_3} \tau^z_{p_4}
\end{aligned}
\end{equation}
where $f$ and $h$ are electric and magnetic couplings and $\mathcal{E}$ denotes the set of all edges and $\mathcal{P}$ the set of all plaquettes. 
With a slight abuse of notation, we represent each plaquette by the set of its four boundary edges, $p = \{p_1, p_2, p_3, p_4\}$.
In analogy with the $U(1)$ theory, the four-body term is called the magnetic term $H_B$ and the terms acting on the edges are referred to as the electric term $H_E$.

The model is invariant with respect to the local gauge operators 
\begin{equation} \label{eq:gauge_operators_2D}
    G_{v} = \tau^x_{v_1} \tau^x_{v_2} \tau^x_{v_3} \tau^x_{v_4} \quad \forall v \in \mathcal{V}
\end{equation}
where $\mathcal{V}$ denotes the set of vertices. 
Again, with a slight abuse of notation, we represent each vertex $v \in \mathcal{V}$ by the set of its four incident edges, $v = \{v_1, v_2, v_3, v_4\}$.
For vertices at the boundaries, the $\tau^x$ whose support is not on the lattice is dropped, for example, the gauge operator associated to a corner vertex has only two terms: $G_{\text{corner}} = \tau^x_{c_1} \tau^x_{c_2}$.

\subsection{Generating sequence}

Similarly to the one-dimensional case, we consider the NPA hierarchy up to level~3 with interaction range~1. 
\cref{fig:interaction_range_1_on_2D_lattice} shows all field sites whose Manhattan (1-norm) distance from the reference site is at most one.
\begin{figure}
    \centering
    \includegraphics[width=\linewidth]{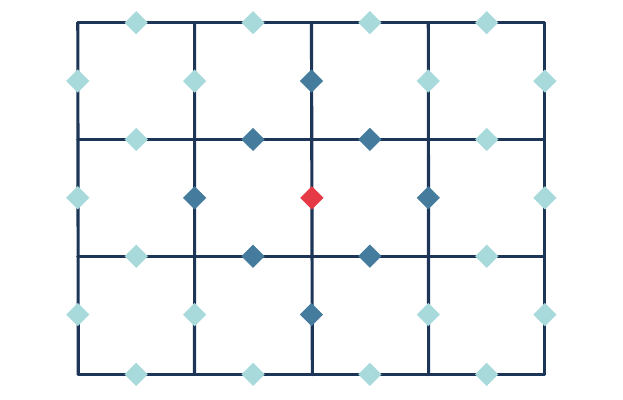}
    \caption{Illustration of the interaction range on a two-dimensional lattice. 
    Sites highlighted in dark blue are those whose Manhattan (1-norm) distance from the reference site (red) is equal to one.}
    \label{fig:interaction_range_1_on_2D_lattice}
\end{figure}
The same observations and reasoning as those presented in \cref{subsec:generating_sequence_1D} carry over to this setting. 
In particular, the symmetry-adapted basis matrix~$P$ is a permutation matrix, which implies that there exists an ordering of the monomial sequence for which the moment matrix has a block-diagonal form.

The underlying intuition is analogous to the one-dimensional case and the block-ordered sequence is obtained by applying the same criterion as in \cref{eq:ordering_criterion}. 

As in the one-dimensional case, the subsequences generating the blocks can be grouped into a finite number of classes. 
Each class consists of subsequences sharing the same geometric pattern, as illustrated in \cref{fig:classes_in_2D_sequence}. 
These patterns are obtained by representing the monomials appearing in the subsequences geometrically: each Pauli operator is associated with a lattice site, while monomials involving two Pauli operators are represented by edges connecting the corresponding sites. 
By construction, the resulting patterns encode the geometric structure of the monomials defining each class.
\begin{figure}
    \centering
    \includegraphics[width=\linewidth]{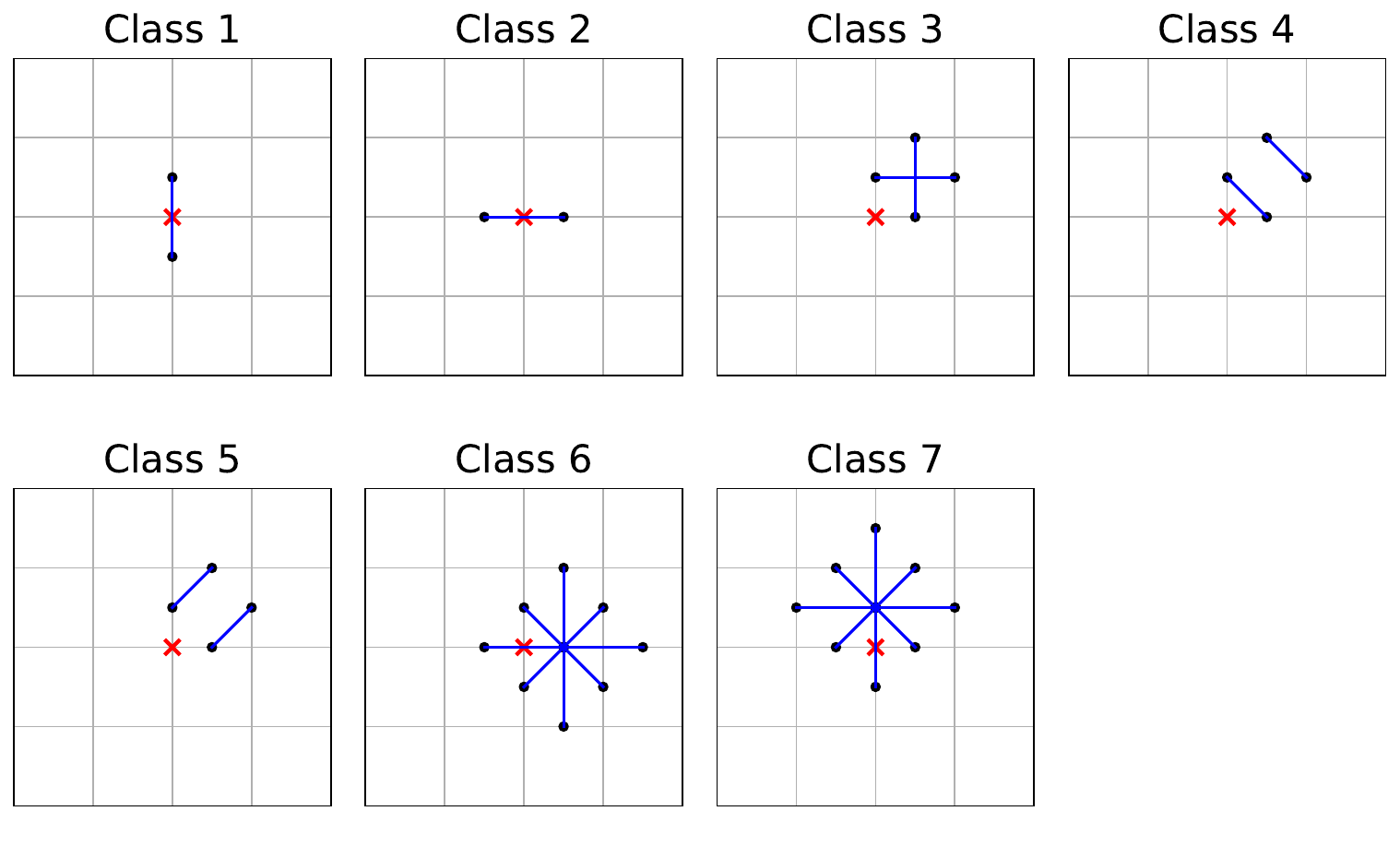}
    \caption{Classification of the subsequences contributing to the level~2, interaction-range~1 moment matrix. 
    Each lattice vertex $(i,j)$ (indicated by a red cross) is associated with seven distinct subsequences, each belonging to a different class. 
    The remaining subsequence, not shown in the figure, collects all monomials not included in the first seven classes. 
    It consists of all $\sigma^x$ Pauli strings up to level~2 and interaction range~1 that are not assigned to the other classes and therefore has size scaling linearly with the system size.}
    \label{fig:classes_in_2D_sequence}
\end{figure}

At level~2 with interaction range~1, the construction of the moment matrix can be systematically decomposed by partitioning all contributing subsequences into this finite set of classes. 
Every subsequence, except for a single \emph{remaining subsequence}, can be uniquely assigned to one of the classes shown in \cref{fig:classes_in_2D_sequence} according to its geometric structure. 
In particular, for every vertex $(i,j)$, represented by a red cross in the diagram, there are seven distinct subsequences associated with that vertex, one from each class. 
Together, these classes account for all admissible local monomial configurations.
The remaining subsequence contains all monomials that are present in any of the seven classes.
More precisely, it contains all $\sigma^x$ Pauli strings up to level~2 and interaction range~1 that are not included in the first seven classes. 
Its size scales linearly with the system size.

A detailed definition of the corresponding classes at level~3 is provided in \cref{app:blocks_2D_model}.

\subsection{Bounds on the energy}

We now extend the SDP approach described in \cref{subsec:energy_bounds_1D} to the two-dimensional Kogut-Susskind lattice gauge theory in the pure gauge setting. 
The construction of the relaxation follows the same procedure as in the one-dimensional case. 
In particular, the moment matrix, positivity constraints, and objective function are defined analogously, with the only modification being that the one-dimensional Hamiltonian, gauge operators and generating sequence are replaced by their two-dimensional counterparts. 
No additional conceptual changes to the SDP formulation presented in \cref{eq:sdp_relaxation_energy_1D} are required.

To assess the performance of the relaxation, we estimate the ground-state energy of the two-dimensional Kogut-Susskind Hamiltonian and compare the resulting certified lower bounds with reference values obtained from MPS calculations.
The MPS results are computed using a 1D chain in a snake pattern; it is expected that for large system sizes the results deteriorate due to non-local interactions, even for large bond dimension.
For such systems, better results could be found by using tensor network approaches tailored for higher dimensions, such as tree tensor networks~\cite{magnifico_lattice_2021, magnifico_tensor_2025}.
In this work, we employ a bond dimension of $D = 500$, which provides a suitable compromise between numerical convergence and computational efficiency, see \cref{app:mps_convergence} for more details.
Here, we do not focus on specific variational approaches, but on the relaxation bounds via SDP computations.

Figure~\ref{fig:energies_2D} reports the relative error of the certified lower bound for two relaxation levels, namely levels $2$ and $3$. 
As expected, increasing the relaxation level systematically improves the estimate of the ground-state energy, yielding tighter lower bounds. 
The grey shaded region indicates errors below the guaranteed numerical precision of the SDP solver.

Interestingly, both relaxation levels (SDP-2 and SDP-3) exhibit spikes at the same system sizes and with nearly identical amplitudes. 
More precisely, the data alternate between two well-defined branches: a higher-error branch corresponding to relatively \emph{wide} lattices of size $m \times n$ with $m,n>3$ (highlighted by the star markers in the figure), and a lower-error branch associated with \emph{narrow} geometries, i.e. quasi-1D systems, of the form $3 \times m$ (dot markers).
These features do not originate from the SDP relaxation but rather from the MPS reference energies used in the comparison. 
The energy given by the SDP relaxation is smooth over the whole range of lattices.
This behavior rather reflects the varying accuracy of the MPS approximation across different lattice geometries. 
For narrow systems, the MPS ansatz provides a more accurate approximation of the ground-state energy, resulting in a smaller discrepancy between the MPS energy and the SDP lower bound. 
In contrast, for wider lattices the MPS approximation deteriorates, leading to a larger apparent gap between the two methods.
This dependence on the lattice geometry is consistent with the structure of the MPS ansatz. 
The SDP relaxation, by contrast, is formulated directly in terms of local operator constraints and does not rely on such a mapping. 
Consequently, its performance is significantly less sensitive to the aspect ratio of the lattice and remains comparatively robust as the system becomes more genuinely two-dimensional.

These results demonstrate that the SDP hierarchy can be transferred directly to two-dimensional lattice gauge theories while retaining its ability to produce certified energy bounds. 
Moreover, they highlight an intrinsic advantage of the SDP approach over strongly geometry dependent methods for wider two-dimensional lattices.

\begin{figure}
    \centering
    \includegraphics[width=\linewidth]{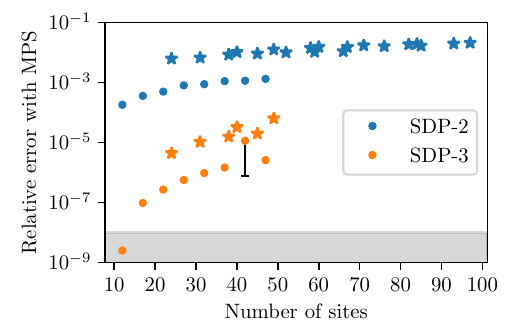}
    \caption{Relative error of the certified ground-state energy lower bound for the two-dimensional Kogut--Susskind Hamiltonian in the pure-gauge regime. 
    Results are shown for SDP relaxations at levels $2$ and $3$, using matrix product state (MPS) calculations with bond dimension $500$ as the reference. 
    The grey shaded region indicates errors below the guaranteed numerical precision of the SDP solver. 
    Dot markers denote narrow lattices of the form $3 \times m$, whereas star markers correspond to wider lattices of the form $m \times n$, with $m,n>3$.
    For the $3 \times 9$ lattice ($42$ sites), the level-$3$ relaxation did not fully converge. 
    The SDP solver nevertheless returned feasible primal and dual solutions with a relative duality gap below $2 \times 10^{-5}$, so the optimal value of the relaxation lies within the interval shown by the black error bar. 
    The orange marker gives the worst-case bound over this interval.}
    \label{fig:energies_2D}
\end{figure}

\subsection{Bounding few-body observables}

As in the the one-dimensional case, we consider observables invariant under the relevant symmetries, in particular the gauge symmetry operators in \cref{eq:gauge_operators_2D}.
One important class of such operators are Wilson loops~\cite{wilson_confinement_1974,kogut_introduction_1979}, which, for the $\mathbb{Z}_2$ theory considered here, are given by
\begin{equation}
    W_\mathcal{C}=\prod_{e\in\mathcal{C}}\tau_e^z
\end{equation}
where $\mathcal{C}$ is a closed loop of edges over the lattice.

In the limit of $h\rightarrow 0$, i.e. $H=H_\mathrm{E}$, the ground state takes the shape of the simple product state $\ket{\psi_\mathrm{E}}=\bigotimes_{e\in\mathcal{E}}\ket{+}_e$, corresponding to an absence of excitations of the electric field, and yielding a vanishing expectation value for all Wilson loops.

For the limiting case of $h\rightarrow\infty$, where $H_\mathrm{B}$ dominates, the ground state is the normalized simultaneous $+1$ eigenstate of all plaquette operators, corresponding to an absence of magnetic flux. Accordingly, it is $T_p\ket{\psi_{\mathrm{B}^+}}=\ket{\psi_{\mathrm{B}^+}}\,\forall p\in\mathcal{P}$ with the plaquette operator
\begin{equation}
    T_p=\tau^z_{p_1} \tau^z_{p_2} \tau^z_{p_3} \tau^z_{p_4}\,,
    \label{equ:plaquette_operator}
\end{equation} 
where $p_1$, $p_2$, $p_3$ and $p_4$ denote the edges surrounding the plaquette $p$.
For non-periodic boundary conditions, as considered here, the ground state uniquely reads
\begin{equation}
    \ket{\psi_{\mathrm{B}^+}}=\prod_{p\in\mathcal{P}}\left(\frac{1+T_p}{\sqrt{2}}\right)\ket{\psi_\mathrm{E}}=2^{-\lvert\mathcal{P}\rvert/2}\sum_\mathcal{C}W_\mathcal{C}\ket{\psi_\mathrm{E}}\,,
\end{equation}
where $\lvert\mathcal{P}\rvert$ denotes the total number of plaquettes in the system. Here, one utilizes that all Wilson loops can be expressed as a product of plaquette operators.
In the opposite limit $h\rightarrow-\infty$,
\begin{equation}
    \ket{\psi_{\mathrm{B}^-}}=\prod_{p\in\mathcal{P}}\left(\frac{1-T_p}{\sqrt{2}}\right)\ket{\psi_\mathrm{E}}=2^{-\lvert\mathcal{P}\rvert/2}\sum_\mathcal{C}(-1)^{A_\mathcal{C}}W_\mathcal{C}\ket{\psi_\mathrm{E}}\,,
\end{equation}
where the area $A_\mathcal{C}$ of a path $\mathcal{C}$ is given by the number of plaquettes it encloses.

Consequently, in the limiting cases $h\rightarrow0,\pm\infty$, one observes for all closed paths $\mathcal{C}$:
\begin{equation}
\begin{aligned}
    \bra{\psi_\mathrm{E}}W_\mathcal{C}\ket{\psi_\mathrm{E}}&=0 &&h\rightarrow0\\
    \bra{\psi_\mathrm{B^+}}W_\mathcal{C}\ket{\psi_\mathrm{B^+}}&=1 &&h\rightarrow+\infty\\
    \bra{\psi_\mathrm{B^-}}W_\mathcal{C}\ket{\psi_\mathrm{B^-}}&=(-1)^{A_\mathcal{C}} \hspace{2ex} &&h\rightarrow-\infty
\end{aligned}
\end{equation}
For finite $h$, the system encounters a topological phase transition in the thermodynamic limit at a critical magnetic coupling $\pm h_\mathrm{c}$, and the Wilson loop expectation values decay with an area law (of the area enclosed by $\mathcal{C}$) for $\lvert h\rvert<h_\mathrm{c}$, where large loops are suppressed, signaling confinement, and a perimeter law (of the perimeter described by $\mathcal{C}$) in a deconfined phase $\lvert h\rvert>h_\mathrm{c}$~\cite{wilson_confinement_1974,kogut_introduction_1979,fradkin_order_1978,horn_hamiltonian_1979,zohar_wilson_2021,lumia_two-dimensional_2022}.

The minimal Wilson loop operator is the one forming the perimeter around a single plaquette $p\in\mathcal{P}$, i.e.
$W_1 = T_p = \tau^z_{p_1}\tau^z_{p_2}\tau^z_{p_3}\tau^z_{p_4}$.
The corresponding SDP relaxation used to obtain certified bounds is analogous to \cref{eq:omin}, with the objective function, Hamiltonian, and gauge operators replaced by their two-dimensional counterparts.
\Cref{fig:observables_2D} shows the expectation value of the Wilson loop around the central plaquette of a $4 \times 4$ two-dimensional lattice, corresponding to a $24$-qubit system, as a function of the magnetic-field parameter $h$. 
The gray shaded region represents the interval certified by the level-3 SDP relaxation, which align closely with the Wilson loop behavior described above. The MPS estimates lie entirely within these certified bounds.

\begin{figure}[t]
    \centering
    \includegraphics[width=\linewidth]{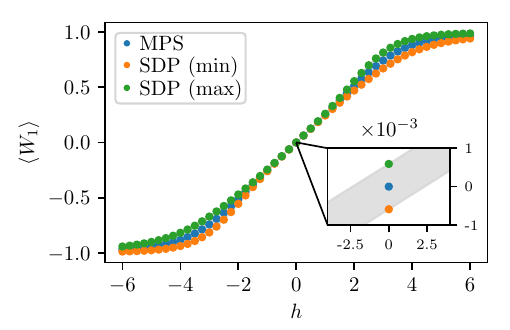}
    \caption{Expectation value of the Wilson loop operator around the central plaquette of a $4 \times 4$ two-dimensional lattice (24-qubit system) as a function of the magnetic-field parameter $h$. 
    The gray shaded region denotes the interval certified by the level-3 SDP relaxation, while the MPS estimates (blue) lie entirely within the certified bounds.}
    \label{fig:observables_2D}
\end{figure}

\subsection{Bounding many-body observables}

Analogously to the one-dimensional case, where we demonstrated that the SDP relaxation can be used to certify expectation values of non-local many-body observables such as mesonic string operators, the same approach can be extended to two-dimensional systems.

As an illustrative example, we consider the Wilson loop enclosing the four central plaquettes of a $5 \times 5$ lattice (i.e. $16$ plaquettes in total). 
We label the four central plaquettes by $p,q,r,s$ and their corresponding plaquette operators are given by \cref{equ:plaquette_operator}.
The plaquette operators $T^x_p,T^y_p$, etc. are defined analogously.
The Wilson operator corresponding to our Wilson loop is given by the product of the four plaquette operators
\begin{equation}
    W_4 = T^{z}_{p} T^{z}_{q} T^{z}_{r} T^{z}_{s}
\end{equation}

The Wilson loop operator is not contained in the moment matrix generated by the standard level-3 relaxation with interaction range $1$, since it involves the product of Pauli operators acting on eight distinct sites. 
To certify its expectation value, we therefore augment the operator sequence used to construct the moment matrix with all products of plaquette operators.
Specifically, we choose to include the operators
\begin{equation} \label{eq:wilson_loop_extra_operators}
    \left\{\prod_{p \in P} T^{\alpha}_{p} : \forall P \subset \{p,q,r,s\} \text{ and } \alpha=x,y,z\right\},
\end{equation}
where we consider all possible subsets of plaquettes $P$. 
Note that the choice of operator is purely empirical, based on our observation that it performs well in practice. 
There may exist alternative choices that introduce fewer additional operators while imposing tighter constraints on the feasible set.

\Cref{eq:wilson_loop_extra_operators} provides a list of $48$ operators that are added to the generating sequence.
These additional operators guarantee that the Wilson loop operator is represented in the moment matrix, allowing its expectation value to be bounded within the SDP relaxation.
As in the one-dimensional case, each newly introduced monomial is assigned to the appropriate subsequence according to the criterion given in \cref{eq:ordering_criterion}.
This guarantees that the generating sequence is block-ordered, and hence that the moment matrix is block-diagonal.

\begin{figure}
    \centering
    \includegraphics[width=\linewidth]{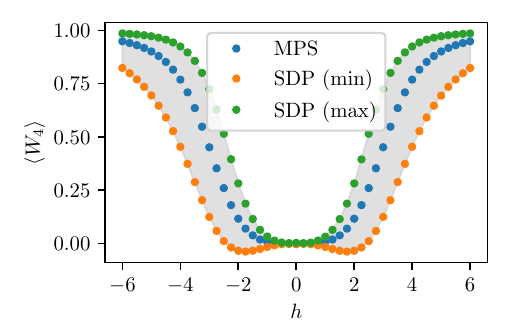}
    \caption{Expectation value of the Wilson loop operator enclosing the four central plaquettes of a $5 \times 5$ lattice as a function of the magnetic-field parameter $h$. 
    The grey shaded region denotes the interval certified by the level-3 SDP relaxation with the augmented operator sequence, while the tensor-network estimates (blue) lie entirely within the certified bounds.}
    \label{fig:wilson_loops}
\end{figure}

Figure~\ref{fig:wilson_loops} shows the certified expectation value of the Wilson loop enclosing the four central plaquettes of the $5 \times 5$ lattice as a function of the magnetic-field parameter $h$. 
The grey shaded region represents the interval certified by the level-3 SDP relaxation with the augmented operator sequence, while the tensor-network estimates lie entirely within the certified bounds.

\section{Conclusion}
\label{sec:conclusion}

Semi-definite programming relaxations are a powerful complement to variational methods.
Variational ansätze such as tensor networks find upper bounds to the ground-state energy and estimates for the observables.
Their quality is often difficult to certify: a low variational energy does not rule out the existence of a lower-lying state outside the ansatz manifold or a failure to find the global minimum inside of it.
An SDP relaxation, by contrast, produces a certificate --- an explicit witness that no quantum state consistent with the imposed operator constraints can have energy below the reported bound. 
This certificate can be propagated to other observables, is checkable independently of the solver, and constitutes a machine-verifiable proof of the bound. 

In this work, we have presented two-sided bounds on the ground-state energy and on observables of an Abelian lattice gauge theories by combining variational methods with a hierarchy of semidefinite programming relaxations. 
Applied to a $(1+1)$-dimensional $\mathbb{Z}_2$ gauge theory with dynamical fermionic matter and to a pure $\mathbb{Z}_2$ gauge theory in $(2+1)$ dimensions, the method yields rigorously certified bounds that tighten monotonically as the SDP level increases.
For a one-dimensional system with $129$ sites, the certified interval has a relative spread of $0.04\%$. 
For a two-dimensional system with $97$ sites (corresponding to a $7 \times 8$ lattice), the relative spread is $2.08\%$.
Beyond energy estimates, the proposed method provides rigorous bounds on both short- and long-range observables. 
As a proof of concept, we demonstrate non-trivial bounds for physically relevant quantities, including the mesonic string magnitude and Wilson loops spanning multiple plaquettes.

The same certification logic extends naturally to quantum simulation experiments. 
Given a measured expectation value of the Hamiltonian on a prepared state, the two-sided bounds on energy and observables guarantee that the system has been correctly simulated, independently of the experimental platform and without requiring classical simulation of the full system.

A distinctive feature of the approach is that it accommodates the gauge symmetry and any additional global symmetries of the theory in a way that both sharpens the bounds and dramatically reduces the computational cost. 
Since the number of independent gauge symmetries is an extensive quantity, the size of the symmetry group grows with system size. 
In a symmetry-adapted basis, the moment matrix is block diagonal, and larger symmetry groups lead to a finer block decomposition. 
Together, these properties yield a favorable scaling of the block sizes and, consequently, of the computational cost.

Several natural extensions can be considered:
First, the construction of the block-ordered operator sequence underlying the moment matrix currently proceeds by enumerating all pairs of operators $s_i$, $s_j$ and then grouping them in the same block if they satisfy the condition $g(s_i s_j^{\dagger})  = s_i s_j^{\dagger}$.
A more efficient method to apply the gauge symmetry and additional, global symmetries would enable even larger computations.

Second, only a subset of the available symmetries has been exploited here.
Incorporating additional symmetries, such as translational or rotation invariance, would further block-diagonalize the moment matrix and allow the hierarchy to reach higher levels and larger interaction ranges at fixed computational cost.

Third, the present formulation treats the operator algebra in a bosonic framework. 
Extending the hierarchy to handle fermionic algebras natively --- rather than via a Jordan-Wigner mapping, which obscures locality in more than one spatial dimension --- would open the method to a broader class of fermionic lattice models and would be a natural next step toward $(2+1)$-dimensional theories with dynamical matter.
Additionally, investigating extensions of the symmetries to compact Lie or non-Abelian groups would be of particular interest.

Together, these directions suggest that SDP-based certification is a practical and scalable tool for the study of strongly correlated lattice gauge theories, with applications ranging from classical benchmarking to the verification of near-term quantum hardware.

\begin{acknowledgments}
The authors acknowledge fruitful discussions with Erez Zohar.
P.E.~acknowledges the support received from the Dutch National Growth Fund (NGF) as part of the QDNL program in the NWO-Quantum Technology program (Grant No.~NGF.1623.23.006).
P.E.~also acknowledges funding from the Carl-Zeiss-Stiftung (CZS Center QPhoton) and the financial support programmes for early career researcher by the Graduate and Professional Training Center of Ulm University. 
J.T.~and E.V.~acknowledge the support received from the European Union's Horizon Europe research and innovation programme through the ERC StG FINE-TEA-SQUAD (Grant No.~101040729).
A.K.~acknowledges support from the European Union Research Council (ERC, OverSign, 101122583).
J.T and J.M~acknowledge support from the Dutch National Growth Fund (NGF), as part of the QDNL Programme.
\end{acknowledgments}

\section*{Data availability}
This work was performed using the ALICE high-performance computing resources provided by Leiden University. 
The data as well as the code used to generate the plots presented in this work is publicly available in the accompanying Zenodo repository~\cite{vallee_data_2026}. 
Unless stated otherwise, all computations were carried out using 16 CPU cores per task, with a maximum wall-clock time of 6 hours and a memory limit of 150~GB allocated to each data point. 

\onecolumngrid
\FloatBarrier
\appendix

\section{SDP relaxation example} \label{app_sdp_relaxation_example}

In this section, we illustrate the application of SDP relaxation to a non-commutative polynomial optimization problem over a single-qubit system.
Consider the optimization problem:
\begin{equation}
\begin{aligned}
    \max_{\rho \in \mathcal{S}} \expval{\sigma^x} + \expval{\sigma^y} \\
    \text{s.t.} \expval{\sigma^z} = 0,
\end{aligned}
\end{equation}
where we omit the explicit reference to the state $\rho$ in the expectation value, i.e. $\expval{ \cdot } = \Tr( \; \cdot \; \rho)$.

To construct the SDP relaxation of this problem, we begin by associating a variable with the expectation value of every word in the algebra (in this case the algebra is the Pauli algebra of a single qubit). 
It satisfies the Pauli multiplication rules: $\sigma^i \sigma^j = \delta_{ij} + \mathrm{i} \epsilon_{ijk} \sigma^k$.
We define the variables $y_k$ as the expectation values of the elements:
\begin{equation}
y_1 = \langle\mathbb{I}\rangle, \quad y_2 = \langle\sigma^x\rangle, \quad y_3 = \langle\sigma^y\rangle, \quad y_4 = \langle\sigma^z\rangle.
\end{equation}
Note that due to state normalization, we inherently have the constraint $y_1 = 1$. 

Next, we build the moment matrix. 
We choose the generating sequence $\boldsymbol{s}$ as:
\begin{equation}
\boldsymbol{s} = 
\begin{pmatrix}
\mathbb{I} \\
\sigma^x \\
\sigma^y \\
\sigma^z
\end{pmatrix}.
\end{equation}
By definition, the entries of the corresponding moment matrix $\Gamma$ are given by
\begin{equation}
\Gamma_{ij} = \Tr(s_i s_j^{\dagger} \rho),
\end{equation}
where $\rho$ is the density matrix of the qubit. 
Using the Pauli algebra rules, the explicit form of the moment matrix is
\begin{equation}
\Gamma =
\begin{pmatrix}
\expval{\mathbb{I}} & \langle\sigma^x\rangle & \langle\sigma^y\rangle & \langle\sigma^z\rangle \\
\langle\sigma^x\rangle & \expval{\mathbb{I}} & \mathrm{i}\langle\sigma^z\rangle & -\mathrm{i}\langle\sigma^y\rangle \\
\langle\sigma^y\rangle & -\mathrm{i}\langle\sigma^z\rangle & \expval{\mathbb{I}} & \mathrm{i}\langle\sigma^x\rangle \\
\langle\sigma^z\rangle & \mathrm{i}\langle\sigma^y\rangle & -\mathrm{i}\langle\sigma^x\rangle & \expval{\mathbb{I}}
\end{pmatrix}.
\end{equation}
The moment matrix can then be written as a linear combination of Hermitian matrices:
\begin{equation}
\Gamma(\boldsymbol{y}) = \sum_{k=1}^4 y_k A^{(k)} = y_1 A^{(1)} + y_2 A^{(2)} + y_3 A^{(3)} + y_4 A^{(4)},
\end{equation}
where the matrices $A^{(k)}$ are defined as:
\begin{align}
A^{(1)} = 
\begin{pmatrix}
1 & 0 & 0 & 0 \\
0 & 1 & 0 & 0 \\
0 & 0 & 1 & 0 \\
0 & 0 & 0 & 1
\end{pmatrix}, 
\quad
A^{(2)} = 
\begin{pmatrix}
0 & 1 & 0 & 0 \\
1 & 0 & 0 & 0 \\
0 & 0 & 0 & \mathrm{i} \\
0 & 0 & -\mathrm{i} & 0
\end{pmatrix},
\quad
A^{(3)} = 
\begin{pmatrix}
0 & 0 & 1 & 0 \\
0 & 0 & 0 & -\mathrm{i} \\
1 & 0 & 0 & 0 \\
0 & \mathrm{i} & 0 & 0
\end{pmatrix}, 
\quad
A^{(4)} = 
\begin{pmatrix}
0 & 0 & 0 & 1 \\
0 & 0 & \mathrm{i} & 0 \\
0 & -\mathrm{i} & 0 & 0 \\
1 & 0 & 0 & 0
\end{pmatrix}.
\end{align}
They serve as weighted mask matrices, with each entry specifying the contribution (i.e., coefficient) of the $i$-th variable to the corresponding entry of $\Gamma(\boldsymbol{y})$.

Finally the SDP relaxation takes the form
\begin{equation}
    \begin{aligned}
        \max_{\boldsymbol{y} \in \mathbb{R}^4} \quad & y_2+y_3 \\
        \text{s.t.} \quad & y_1 = 1 \quad \text{(normalization constraint)}\\
        & y_4 = 0  \quad \text{(original problem constraint)}\\
        & \Gamma(\boldsymbol{y}) \succeq 0 \quad \text{(moment matrix constraint)} .
    \end{aligned}
\end{equation}

Note that, in the more general case, this procedure can be carried out for any free quotient algebra $\mathcal{A}$. 
Consequently, all expressions must be systematically reduced to a standard representation, commonly referred to as the \emph{normal form}.
For example, in the single-qubit Pauli algebra, the product $\sigma^x \sigma^y$ is always reduced to $\mathrm{i} \sigma^z$, etc.

\section{Invariant SDP relaxation} \label{app:block-diag_sdp}

In this section, we show that, under the assumptions detailed in \cref{subsubsec:sdp_with_symmetries} and summarized below, the SDP relaxation admits a block-diagonalization. 
This structure allows the original semidefinite program to be decomposed into smaller, independent blocks, which can substantially simplify both its analysis and its numerical solution.
 
Consider the SDP relaxation given in \cref{eq:npo_problem_sdp_relaxation}. The positive-semidefinite (PSD) constraint corresponds to an $m \times m$ moment matrix associated with a sequence $\boldsymbol{s}$, namely
\begin{equation}
    \Gamma_{ij}(\boldsymbol{y}) := \Tr\!\left(s_i s_j^{\dagger} \rho\right) .
\end{equation}
Without loss of generality, the moment matrix can be written as $\Gamma(\boldsymbol{y}) = \sum_k A^{(k)} y_k$, where the $A^{(k)}$ are Hermitian matrices.
 
As explained in \cref{subsubsec:sdp_with_symmetries}, each symmetry transformation — denoted $g$ or $\mathfrak{g}$, depending on the space on which it acts — is a $*$-automorphism. 
We further assume that there exists a unitary matrix $R(g) \in \mathbb{C}^{m \times m}$ such that
\begin{equation}
    g(\boldsymbol{s}) = R(g)\, \boldsymbol{s} .
\end{equation}
This condition is crucial, as it guarantees that the transformed generating sequence can always be expressed as a linear combination of the elements of the original sequence $\boldsymbol{s}$.
Taking the adjoint, we have $g(\boldsymbol{s})^{\dagger} = \boldsymbol{s}^{\dagger} R(g)^{\dagger}$. Using this, each entry of the moment matrix transforms as
\begin{equation}
\label{eq:transformation_law}
\begin{split}
    \mathfrak{g}\bigl(\Gamma_{ij}(\boldsymbol{y})\bigr)
    &= \sum_k A_{ij}^{(k)} \, \mathfrak{g}(y_k)
     = \Tr\!\left(g(s_i)\, g(s_j)^{\dagger} \rho\right) = \sum_{i',j'} R_{ii'} R_{jj'}^{*} \, \Tr\!\left(s_{i'} s_{j'}^{\dagger} \rho\right) \\
    &= \sum_{i',j',k} R_{ii'} R_{jj'}^{*} A_{i'j'}^{(k)} \, y_k = \sum_{i',j'} R_{ii'} R_{jj'}^{*} \, \Gamma_{i'j'}(\boldsymbol{y}).
\end{split}
\end{equation}
In matrix form, this reads
\begin{equation}
\label{eq:transformation_matrix_form}
    \mathfrak{g}\bigl(\Gamma(\boldsymbol{y})\bigr) = R(g) \Gamma(\boldsymbol{y}) R(g)^{\dagger},
\end{equation}
where the action of $\mathfrak{g}$ on a matrix of variables extends naturally to each of its components individually. We are now in a position to state the main result of this section.
 
\begin{theorem}
Suppose the objective and all the (in)equality constraints are invariant under a group $\mathcal{G}$, and suppose there exists a unitary representation $R : \mathcal{G} \to \mathbb{C}^{m \times m}$ such that
\begin{equation}
\label{eq:group_action_on_s}
    g(\boldsymbol{s}) = R(g)\, \boldsymbol{s} .
\end{equation}
Then:
\begin{enumerate}
    \item[(1)] the SDP relaxation is invariant under $\mathcal{G}$;
    \item[(2)] the matrix $P$ that block-diagonalizes $R$ also block-diagonalizes the moment matrix. We refer to $P$ as the symmetry-adapted basis.
\end{enumerate}
\end{theorem}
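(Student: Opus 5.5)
The proof rests entirely on the transformation law \cref{eq:transformation_matrix_form}, $\mathfrak{g}(\Gamma(\boldsymbol{y})) = R(g)\,\Gamma(\boldsymbol{y})\,R(g)^{\dagger}$, which was derived above from \cref{eq:group_action_on_s} and the fact that $g$ is a $*$-automorphism. The plan is to prove (1) by checking each ingredient of the definition of invariance in \cref{eq:invariant_sdp}. We then prove (2) by combining \Cref{thm:invariant_sdp} with the block-diagonal form of $R$.

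For (1), the objective and the linear (in)equality constraints are invariant by hypothesis. The normalization $y_{\mathbb{I}}=1$ is preserved because $g(\mathbb{I}) = G\mathbb{I}G^{\dagger} = \mathbb{I}$, so $\mathfrak{g}(y_{\mathbb{I}}) = y_{\mathbb{I}}$. Positivity is the only remaining ingredient. If $\Gamma(\boldsymbol{y}) \succeq 0$, then by \cref{eq:transformation_matrix_form} the matrix $\Gamma(\mathfrak{g}(\boldsymbol{y}))$ equals $R(g)\Gamma(\boldsymbol{y})R(g)^{\dagger}$. This is a unitary conjugation of a PSD matrix and is therefore PSD. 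To make this rigorous, I would note that the entries of $\Gamma(\mathfrak{g}(\boldsymbol{y}))$ are $\sum_k A^{(k)}_{ij}\mathfrak{g}(y_k)$, which is exactly the left-hand side of \cref{eq:transformation_law}. Hence $\mathfrak{g}(\boldsymbol{y}) \in \mathcal{F}$ whenever $\boldsymbol{y} \in \mathcal{F}$, and the SDP is invariant.

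For (2), I would apply \Cref{thm:invariant_sdp} and restrict to symmetric $\boldsymbol{y}$, i.e.\ $\mathfrak{g}(\boldsymbol{y}) = \boldsymbol{y}$ for all $g$. For such $\boldsymbol{y}$, \cref{eq:transformation_matrix_form} gives $R(g)\Gamma(\boldsymbol{y}) = \Gamma(\boldsymbol{y})R(g)$, so $\Gamma(\boldsymbol{y})$ lies in the commutant of the representation $R$. Next conjugate by the symmetry-adapted basis, setting $\tilde\Gamma = P^{\dagger}\Gamma P$ and $\tilde R(g) = P^{\dagger}R(g)P$. Then $\tilde\Gamma$ commutes with every $\tilde R(g)$, and $\tilde R(g) = \bigoplus_k \tilde R_k(g)$ is block diagonal, with the blocks chosen as the isotypic components, i.e.\ grouping the irreducibles by isomorphism class. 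Write $\tilde\Gamma$ in blocks $\tilde\Gamma_{kl}$. The intertwining relation $\tilde R_k(g)\tilde\Gamma_{kl} = \tilde\Gamma_{kl}\tilde R_l(g)$ then holds for all $g$. By Schur's lemma, $\tilde\Gamma_{kl} = 0$ for $k \neq l$ because the components contain no common irreducible. Therefore $\tilde\Gamma$ is block diagonal. Since conjugation by a unitary preserves positivity, $\Gamma \succeq 0$ is equivalent to $\tilde\Gamma_{kk} = \sum_i \tilde A^{(i)}_k y_i \succeq 0$ for every $k$, which is the reformulated SDP stated earlier.

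The main obstacle is the Schur-lemma step, in particular making precise what "the $P$ that block-diagonalizes $R$" means. If $P$ only splits $R$ into irreducibles and two blocks carry equivalent irreps, the off-diagonal blocks between them need not vanish; they are only proportional to intertwiners. I would handle this by taking the blocks to be isotypic components, which is the natural reading. Further refinement within a component, e.g.\ $\mathbb{I}_{n_k}\otimes$ irrep structure, is then optional. I would also remark that for the abelian gauge groups used here every irrep is one-dimensional and $P$ is a permutation, so this subtlety reduces to grouping sequence elements by character.
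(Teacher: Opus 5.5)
Your proposal is correct and follows the same route as the paper. Part (1) comes from $\mathfrak{g}(\Gamma(\boldsymbol{y}))=R(g)\Gamma(\boldsymbol{y})R(g)^{\dagger}$ together with unitary invariance of positivity. Part (2) restricts to symmetric $\boldsymbol{y}$, puts the moment matrix in the commutant of $R$, and block-diagonalizes over isotypic components; your Schur-lemma argument is the standard proof of the commutant structure the paper cites.

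Your commutant step is also more accurate than the paper's. The paper claims each $A^{(i)}$ individually commutes with $R(g)$. That does not follow from an identity that holds only on the symmetric subspace, and it fails whenever $g(P_i)=-P_i$, where $R(g)A^{(i)}R(g)^{\dagger}=-A^{(i)}$. Your statement, that $\Gamma(\boldsymbol{y})$ commutes with $R(g)$ for symmetric $\boldsymbol{y}$, is exactly what the argument needs.
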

 
\begin{proof}[Proof (part 1)]
By hypothesis, the objective and the (in)equality constraints are invariant. Invariance of the PSD constraint follows from the identity
\begin{equation}
\label{eq:psd_transform}
    \mathfrak{g}\bigl(\Gamma(\boldsymbol{y})\bigr) = R(g)\, \Gamma(\boldsymbol{y})\, R(g)^{\dagger} ,
\end{equation}
together with the fact that, since $R(g)$ is unitary,
\begin{equation}
\label{eq:psd_preserved}
    R(g)\, \Gamma(\boldsymbol{y})\, R(g)^{\dagger} \succeq 0 \iff \Gamma(\boldsymbol{y}) \succeq 0 .
\end{equation}
\end{proof}
 
\begin{proof}[Proof (part 2)]
Since the SDP is invariant under $\mathcal{G}$, the feasible set may be restricted to the symmetric subspace $\tilde{\mathcal{F}}$, as described in \cref{eq:symmetric_feas_set}, and any solution therefore satisfies $\mathfrak{g}(\boldsymbol{y}) = \boldsymbol{y}$. Consequently, the moment matrix itself must be invariant under $\mathfrak{g}$, i.e.\ $\Gamma(\boldsymbol{y}) = \mathfrak{g}\bigl(\Gamma(\boldsymbol{y})\bigr)$, which yields
\begin{equation}
\label{eq:invariance_condition}
    \sum_i A^{(i)} y_i = \sum_i R(g)\, A^{(i)}\, R(g)^{\dagger} \, y_i .
\end{equation}
Since this identity holds for every $\boldsymbol{y} \in \tilde{\mathcal{F}}$, it follows that
\begin{equation}
\label{eq:commutant_condition}
    A^{(i)} = R(g)\, A^{(i)}\, R(g)^{\dagger}
\end{equation}
for all $g \in \mathcal{G}$ and all $i$. The matrices $A^{(i)}$ therefore lie in the commutant of the representation $R$.
 
By a standard result of representation theory, if a matrix $A$ lies in the commutant of a representation $R$ — that is, $A R(g) = R(g) A$ for all $g \in \mathcal{G}$ — then there exists a unitary change of basis $P$ under which the representation decomposes into irreducible components and both operators become simultaneously block diagonal. In this basis,
\begin{equation}
\label{eq:block_diagonalization}
\begin{aligned}
    P^\dagger R(g) P &= \bigoplus_k \left(\mathbb{I}_{\mu_k} \otimes \tilde{R}_k(g)\right), \\
    P^\dagger A P &= \bigoplus_k \left(\tilde{A}_k \otimes \mathbb{I}\right),
\end{aligned}
\end{equation}
where the $\tilde{R}_k$ denote the inequivalent irreducible representations occurring in $R$, and $\mu_k$ their respective multiplicities. In particular, the decomposition is indexed by $k$, and the action of $A$ reduces to independent blocks of size $m_k := \dim(\tilde{R}_k)\, \mu_k$, one for each irreducible sector, with $\sum_k m_k = m$ \cite{serre_linear_1977}.
 
Finally, positive semidefiniteness is preserved under unitary transformations, i.e.
\begin{equation}
\label{eq:psd_unitary_invariance}
    \Gamma(\boldsymbol{y}) \succeq 0 \iff P^{\dagger} \Gamma(\boldsymbol{y}) P \succeq 0 .
\end{equation}
Hence the single PSD constraint in \cref{eq:npo_problem_sdp_relaxation} may be replaced by the collection of smaller PSD constraints
\begin{equation}
\label{eq:reduced_psd_constraints}
    \sum_i \tilde{A}_k^{(i)} \, y_i \succeq 0 \qquad \forall k ,
\end{equation}
where $\tilde{A}_k^{(i)}$ denotes the $k$-th irreducible block of $A^{(i)}$ in the basis $P$, i.e.\ $P^\dagger A^{(i)} P = \bigoplus_k \bigl(\tilde{A}_k^{(i)} \otimes \mathbb{I}\bigr)$. This establishes the desired block-diagonal reformulation of the SDP announced at the beginning of this section.
\end{proof}

\section{MPS convergence} \label{app:mps_convergence}

To verify the numerical accuracy of the tensor network calculations, we performed a systematic convergence analysis with respect to the bond dimension $D$. 
The bond dimension determines the amount of entanglement that can be represented by the matrix product state (MPS) ansatz; consequently, increasing $D$ systematically improves the variational approximation while also increasing the computational cost.
To assess convergence, we computed the ground-state energy for a sequence of increasing bond dimensions and monitored its dependence on $D$.

\begin{figure}
    \centering
    \includegraphics[width=0.45\linewidth]{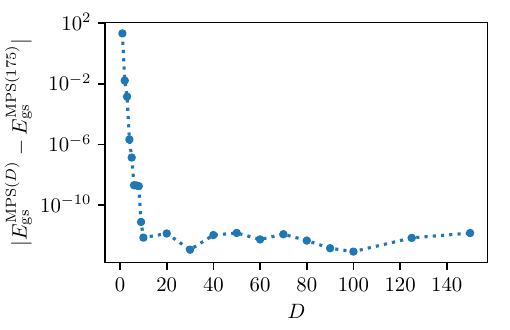}
    \hfill
    \includegraphics[width=0.45\linewidth]{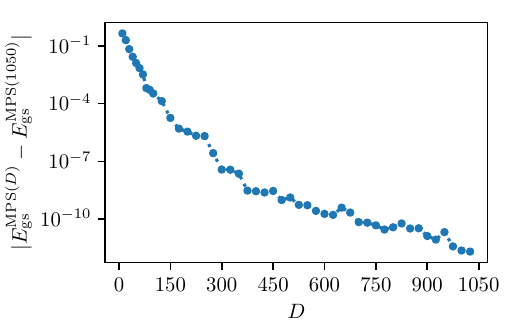} \\
    (a) \hspace{26em} (b)
    \caption{Difference between the ground-state energy and a reference value as a function of the bond dimension $D$. 
    (a) Convergence of the ground-state energy for the one-dimensional Kogut-Susskind Hamiltonian on a 159-spin system. 
    The reference value is the ground-state energy obtained from an MPS simulation with bond dimension $D=150$. 
    (b) Convergence of the ground-state energy for the two-dimensional pure-gauge Kogut-Susskind Hamiltonian on an 84-spin system. 
    The reference value is the ground-state energy obtained from an MPS simulation with bond dimension $D=1000$.}
    \label{fig:mps_convergence}
\end{figure}

Figure~\ref{fig:mps_convergence} shows the convergence of the ground-state energy with increasing MPS bond dimension for the one- and two-dimensional Kogut-Susskind Hamiltonians. 
In both cases, the energy decreases as $D$ increases, consistent with the variational nature of the MPS ansatz.

For the one-dimensional model, shown in \cref{fig:mps_convergence}~(a), the ground-state energy converges rapidly and is essentially converged already at $D=30$. 
Beyond this point, increasing the bond dimension leads to only negligible changes in the energy, essentially due to numerical errors, indicating that the MPS ansatz has reached numerical convergence. 
Based on this observation, we use a bond dimension of $D=80$ for all one-dimensional simulations, providing a comfortable margin while keeping the computational cost low.

The two-dimensional pure-gauge model exhibits a significantly slower convergence with bond dimension. 
As shown in \cref{fig:mps_convergence}~(b), the energy continues to decrease even at the largest bond dimension considered, $D=1000$, indicating that full convergence has not yet been reached. 
Nevertheless, the energy differences between successive bond dimensions become progressively smaller, suggesting that the approximation systematically improves with increasing $D$. 
This incomplete convergence does not affect the validity of our study, since the MPS calculations are used solely to obtain variational upper bounds on the ground-state energy. 
An insufficiently converged MPS can only produce a less tight upper bound on the ground-state energy and therefore cannot invalidate our methodology or conclusions.
Considering both the computational cost and the achieved accuracy, we choose $D=500$ as the default bond dimension for all two-dimensional simulations.

For both the one- and two-dimensional systems, the gauge symmetry was enforced by adding a penalty term to the Hamiltonian. 
The term was 
\begin{equation}
    \sum_i (G_i - 1)^2
\end{equation}
where $G_i$ is defined in ~\cref{eq:gauge-1d-bulk,eq:gauge-1d-boundary,eq:gauge_operators_2D}. 
The choice of $1$ is simply the choice of sector for the gauge constraint.
The coupling for this term was set as $100\max\{\abs{J}, \abs{\mu}, \abs{f}\}$ for the 1D case and $100\max\{\abs{f}, \abs{h}\}$ for the 2D case.
\Cref{fig:gauss_law} shows the expectation value of each gauge operator evaluated on the MPS state for both the 1D and 2D systems. 
As expected, all operators satisfy the physical sector constraint $\expval{G_i} = 1$ in each case.

\begin{figure}
    \centering
    \includegraphics[width=0.45\linewidth]{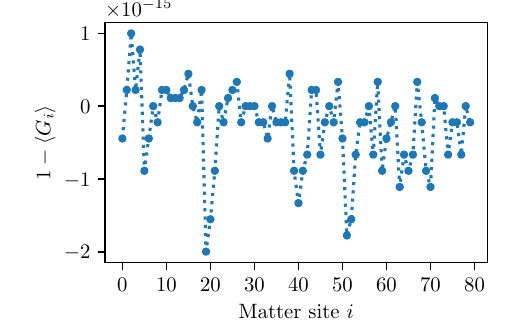}
    \hfill
    \includegraphics[width=0.45\linewidth]{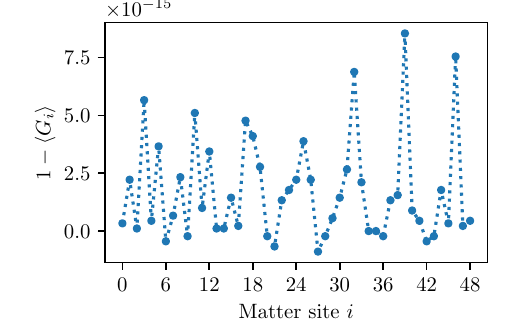} \\
    (a) \hspace{30em} (b)
    \caption{Deviation from the physical sector constraint, $1 - \expval{G_i}$, for each gauge operator $G_i$.
    (a) Results for the one-dimensional Kogut-Susskind Hamiltonian on a 159-spin system, obtained from an MPS simulation with bond dimension $D=80$.
    (b) Results for the two-dimensional pure-gauge Kogut-Susskind Hamiltonian on an 84-spin system, obtained from an MPS simulation with bond dimension $D=500$.}
    \label{fig:gauss_law}
\end{figure}

\section{Blocks of the SDP relaxation for the Kogut-Susskind model in (1+1)d}
\label{app:blocks_1D_model}

As explained in the main text, the block-ordered sequence at level 3 with interaction range 1 results into a highly-structured form.
We observe five types of blocks:

\hspace{1em}

\textbf{Class 1} (for each matter site $i$ not at the boundary): \\
\noindent
$\tau_{i-1,i}^z \sigma_{i}^z \tau_{i,i+1}^z$, $\tau_{i-1,i}^y \sigma_{i}^z \tau_{i,i+1}^z$, $\tau_{i-1,i}^z \tau_{i,i+1}^z$, $\tau_{i-1,i}^z \sigma_{i}^z \tau_{i,i+1}^y$, $\tau_{i-1,i}^y \sigma_{i}^z \tau_{i,i+1}^y$, $\tau_{i-1,i}^z \tau_{i,i+1}^y$, $\tau_{i-1,i}^y \tau_{i,i+1}^z$, $\tau_{i-1,i}^y \tau_{i,i+1}^y$.

\hspace{1em}

\textbf{Class 2} (for each matter site $i$ not at the boundary): \\
\noindent
$\tau_{i-1,i}^y \sigma_{i}^y \tau_{i,i+1}^y$, $\tau_{i-1,i}^z \sigma_{i}^y \tau_{i,i+1}^y$, $\tau_{i-1,i}^z \sigma_{i}^y \tau_{i,i+1}^z$, $\tau_{i-1,i}^z \sigma_{i}^x \tau_{i,i+1}^y$, $\tau_{i-1,i}^y \sigma_{i}^x \tau_{i,i+1}^y$, $\tau_{i-1,i}^z \sigma_{i}^x \tau_{i,i+1}^z$, $\tau_{i-1,i}^y \sigma_{i}^y \tau_{i,i+1}^z$, $\tau_{i-1,i}^y \sigma_{i}^x \tau_{i,i+1}^z$.

\hspace{1em}

\textbf{Class 3} (for each matter site $i$): \\
\noindent
$\sigma_{i}^y $, $\sigma_{i}^x $, $\tau_{i,i+1}^z \sigma_{i+1}^x $, $\tau_{i-1,i}^x \sigma_{i}^x $, $\sigma_{i-1}^y \tau_{i-1,i}^y $, $\sigma_{i}^y \tau_{i,i+1}^x $, $\tau_{i,i+1}^z \sigma_{i+1}^y $, $\sigma_{i}^y \sigma_{i+1}^z $, $\tau_{i,i+1}^y \sigma_{i+1}^x $, $\sigma_{i-1}^z \sigma_{i}^y $, $\sigma_{i}^x \tau_{i,i+1}^x $, $\sigma_{i-1}^x \tau_{i-1,i}^z $, $\sigma_{i}^x \sigma_{i+1}^z $, $\tau_{i,i+1}^y \sigma_{i+1}^y $, $\tau_{i-1,i}^x \sigma_{i}^y $, $\sigma_{i-1}^y \tau_{i-1,i}^z $, $\sigma_{i-1}^z \sigma_{i}^x $, $\sigma_{i-1}^x \tau_{i-1,i}^y $, $\tau_{i-1,i}^x \sigma_{i}^y \tau_{i,i+1}^x $, $\sigma_{i}^z \tau_{i,i+1}^y \sigma_{i+1}^x $, $\sigma_{i-1}^y \tau_{i-1,i}^y \sigma_{i}^z $, $\tau_{i-2,i-1}^x \sigma_{i-1}^x \tau_{i-1,i}^z $, $\tau_{i-1,i}^x \sigma_{i}^x \tau_{i,i+1}^x $, $\tau_{i,i+1}^z \sigma_{i+1}^y \tau_{i+1,i+2}^x $, $\sigma_{i}^x \tau_{i,i+1}^x \sigma_{i+1}^z $, $\sigma_{i}^z \tau_{i,i+1}^y \sigma_{i+1}^y $, $\tau_{i,i+1}^y \sigma_{i+1}^x \tau_{i+1,i+2}^x $, $\sigma_{i-1}^y \tau_{i-1,i}^z \sigma_{i}^z $, $\tau_{i-2,i-1}^x \sigma_{i-1}^y \tau_{i-1,i}^z $, $\tau_{i-2,i-1}^x \sigma_{i-1}^x \tau_{i-1,i}^y $, $\sigma_{i}^z \tau_{i,i+1}^z \sigma_{i+1}^x $, $\sigma_{i-1}^z \tau_{i-1,i}^x \sigma_{i}^y $, $\sigma_{i-1}^x \tau_{i-1,i}^y \sigma_{i}^z $, $\tau_{i-2,i-1}^x \sigma_{i-1}^y \tau_{i-1,i}^y $, $\sigma_{i-1}^x \tau_{i-1,i}^z \sigma_{i}^z $, $\tau_{i,i+1}^y \sigma_{i+1}^y \tau_{i+1,i+2}^x $, $\tau_{i,i+1}^z \sigma_{i+1}^x \tau_{i+1,i+2}^x $, $\sigma_{i}^z \tau_{i,i+1}^z \sigma_{i+1}^y $, $\sigma_{i-1}^z \tau_{i-1,i}^x \sigma_{i}^x $, $\sigma_{i}^y \tau_{i,i+1}^x \sigma_{i+1}^z $.

\hspace{1em}

\textbf{Class 4} (for each field site $(i,i+1)$): \\
\noindent
$\tau_{i,i+1}^y $, $\tau_{i,i+1}^z $, $\sigma_{i}^y \sigma_{i+1}^x $, $\sigma_{i}^y \sigma_{i+1}^y $, $\sigma_{i}^z \tau_{i,i+1}^y $, $\tau_{i-1,i}^x \tau_{i,i+1}^y $, $\tau_{i-1,i}^x \tau_{i,i+1}^z $, $\sigma_{i}^z \tau_{i,i+1}^z $, $\tau_{i,i+1}^y \sigma_{i+1}^z $, $\tau_{i,i+1}^y \tau_{i+1,i+2}^x $, $\tau_{i,i+1}^z \sigma_{i+1}^z $, $\sigma_{i}^x \sigma_{i+1}^x $, $\tau_{i,i+1}^z \tau_{i+1,i+2}^x $, $\sigma_{i}^x \sigma_{i+1}^y $, $\tau_{i,i+1}^y \sigma_{i+1}^z \tau_{i+1,i+2}^x $, $\sigma_{i}^z \tau_{i,i+1}^y \sigma_{i+1}^z $, $\sigma_{i}^x \tau_{i,i+1}^x \sigma_{i+1}^x $, $\sigma_{i}^x \tau_{i,i+1}^x \sigma_{i+1}^y $, $\sigma_{i}^y \tau_{i,i+1}^x \sigma_{i+1}^x $, $\tau_{i-1,i}^x \sigma_{i}^z \tau_{i,i+1}^y $, $\sigma_{i}^z \tau_{i,i+1}^z \sigma_{i+1}^z $, $\tau_{i-1,i}^x \sigma_{i}^z \tau_{i,i+1}^z $, $\sigma_{i}^y \tau_{i,i+1}^x \sigma_{i+1}^y $, $\tau_{i,i+1}^z \sigma_{i+1}^z \tau_{i+1,i+2}^x $.

\hspace{1em}

\textbf{Class 5:} all the remaining operators that are in the sequence (level 3 and interaction range 1) but do not appear in the previous blocks. This last class contributes to a single block.

\hspace{1em}

Counting the total number of blocks is straightforward: the first two classes contribute each $N-2$ blocks (one for each matter site, except the ones at the boundary).
The third class contributes $N$ blocks (one for each matter site) and the fourth class contributes $N-1$ blocks (one for each field site).
The final class contributes a single block, yielding a total of  $2(N-2) + N + (N-1) + 1 = 4(N-1)$ blocks.

However, enumerating the size of these blocks is more complex for two reasons. 
First, the explicit size of the blocks generated by the fifth class is not immediately known. 
In practice, this could be calculated by enumerating all possible Pauli strings in the sequence at level 3 with range 1 and subtracting the sizes of the blocks from the preceding classes. 
Second, open boundary conditions imply that blocks associated with matter sites at the boundaries contain fewer terms; specifically, any term in the third class involving a Pauli operator with support extending outside the 1D chain is discarded. 
Consequently, while estimating the block sizes remains feasible in practice, the presence of boundary effects complicates the derivation.

\section{Blocks of the SDP relaxation for the Kogut-Susskind model in (2+1)d } \label{app:blocks_2D_model}
For interaction range~1, the level-3 block-ordered sequence exhibits a highly structured organization. 
In this case, we identify twelve distinct classes of subsequences. 
As in the lower levels, each subsequence can be uniquely assigned to one of these classes when applied to a vertex $(i,j)$. 

Each class admits a geometric representation of the monomials contained in the corresponding subsequence. 
As in the previous cases, Pauli strings are associated with nodes on the lattice, while monomials involving two Pauli operators are represented by edges connecting the corresponding nodes. 
In addition, at level~3, monomials involving three Pauli operators are represented by hyper-edges.
On \Cref{fig:geom_rep} these are depicted as yellow triangles connecting the three corresponding lattice sites. 
For example, the Pauli string $\tau_{(i-1,j),(i,j)}^y \, \tau_{(i,j-1),(i,j)}^z$ is represented by an edge connecting the lattice edges $((i-1,j),(i,j))$ and $((i,j-1),(i,j))$. 
Similarly, three-body monomials are represented by triangular hyper-edges connecting the three involved lattice points.

The twelve block types are listed explicitly below together with their corresponding geometric representations. 
Due to open boundary conditions, some of the monomials appearing in the definitions of these blocks may have support extending beyond the lattice. 
Such monomials are omitted when constructing the block-ordered sequence.

\begin{flushleft}
\textbf{Class 1:} \\
\noindent
$\tau_{(i-1,j),(i,j)}^y \tau_{(i,j-1),(i,j)}^z \tau_{(i,j),(i,j+1)}^z $, $\tau_{(i-1,j),(i,j)}^y \tau_{(i,j-1),(i,j)}^z \tau_{(i,j),(i,j+1)}^y $, $\tau_{(i-1,j),(i,j)}^z \tau_{(i,j-1),(i,j)}^y \tau_{(i,j),(i,j+1)}^z $, $\tau_{(i-1,j),(i,j)}^z \tau_{(i,j-1),(i,j)}^y \tau_{(i,j),(i,j+1)}^y $, $\tau_{(i-1,j),(i,j)}^z \tau_{(i,j-1),(i,j)}^z \tau_{(i,j),(i,j+1)}^y $, $\tau_{(i-1,j),(i,j)}^z \tau_{(i,j-1),(i,j)}^z \tau_{(i,j),(i,j+1)}^z $, $\tau_{(i-1,j),(i,j)}^y \tau_{(i,j-1),(i,j)}^y \tau_{(i,j),(i,j+1)}^y $, $\tau_{(i-1,j),(i,j)}^y\tau_{(i,j-1),(i,j)}^y \tau_{(i,j),(i,j+1)}^z$
\end{flushleft}

\hspace{1em}

\begin{flushleft}
\textbf{Class 2:} \\
\noindent
$\tau_{(i-1,j),(i,j)}^z \tau_{(i,j),(i,j+1)}^y \tau_{(i,j),(i+1,j)}^y $, $\tau_{(i-1,j),(i,j)}^z \tau_{(i,j),(i,j+1)}^z \tau_{(i,j),(i+1,j)}^y $, $\tau_{(i-1,j),(i,j)}^z \tau_{(i,j),(i,j+1)}^z \tau_{(i,j),(i+1,j)}^z $, $\tau_{(i-1,j),(i,j)}^y \tau_{(i,j),(i,j+1)}^z \tau_{(i,j),(i+1,j)}^z $, $\tau_{(i-1,j),(i,j)}^z \tau_{(i,j),(i,j+1)}^y \tau_{(i,j),(i+1,j)}^z $, $\tau_{(i-1,j),(i,j)}^y \tau_{(i,j),(i,j+1)}^y \tau_{(i,j),(i+1,j)}^z $, $\tau_{(i-1,j),(i,j)}^y \tau_{(i,j),(i,j+1)}^y \tau_{(i,j),(i+1,j)}^y $, $\tau_{(i-1,j),(i,j)}^y\tau_{(i,j),(i,j+1)}^z \tau_{(i,j),(i+1,j)}^y $
\end{flushleft}

\hspace{1em}

\begin{flushleft}
\textbf{Class 3:} \\
\noindent
$\tau_{(i,j-1),(i,j)}^y \tau_{(i,j),(i,j+1)}^z \tau_{(i,j),(i+1,j)}^z $, $\tau_{(i,j-1),(i,j)}^z \tau_{(i,j),(i,j+1)}^z \tau_{(i,j),(i+1,j)}^z $, $\tau_{(i,j-1),(i,j)}^y \tau_{(i,j),(i,j+1)}^y \tau_{(i,j),(i+1,j)}^y $, $\tau_{(i,j-1),(i,j)}^z \tau_{(i,j),(i,j+1)}^y \tau_{(i,j),(i+1,j)}^y $, $\tau_{(i,j-1),(i,j)}^z \tau_{(i,j),(i,j+1)}^y \tau_{(i,j),(i+1,j)}^z $, $\tau_{(i,j-1),(i,j)}^y \tau_{(i,j),(i,j+1)}^y \tau_{(i,j),(i+1,j)}^z $, $\tau_{(i,j-1),(i,j)}^z \tau_{(i,j),(i,j+1)}^z \tau_{(i,j),(i+1,j)}^y $, $\tau_{(i,j-1),(i,j)}^y\tau_{(i,j),(i,j+1)}^z \tau_{(i,j),(i+1,j)}^y $
\end{flushleft}

\hspace{1em}

\begin{flushleft}
\textbf{Class 4:} \\
\noindent
$\tau_{(i-1,j),(i,j)}^y \tau_{(i,j-1),(i,j)}^y \tau_{(i,j),(i+1,j)}^z $, $\tau_{(i-1,j),(i,j)}^z \tau_{(i,j-1),(i,j)}^z \tau_{(i,j),(i+1,j)}^y $, $\tau_{(i-1,j),(i,j)}^z \tau_{(i,j-1),(i,j)}^y \tau_{(i,j),(i+1,j)}^z $, $\tau_{(i-1,j),(i,j)}^y \tau_{(i,j-1),(i,j)}^z \tau_{(i,j),(i+1,j)}^y $, $\tau_{(i-1,j),(i,j)}^z \tau_{(i,j-1),(i,j)}^z \tau_{(i,j),(i+1,j)}^z $, $\tau_{(i-1,j),(i,j)}^y \tau_{(i,j-1),(i,j)}^y \tau_{(i,j),(i+1,j)}^y $, $\tau_{(i-1,j),(i,j)}^y \tau_{(i,j-1),(i,j)}^z \tau_{(i,j),(i+1,j)}^z $, $\tau_{(i-1,j),(i,j)}^z\tau_{(i,j-1),(i,j)}^y \tau_{(i,j),(i+1,j)}^y$
\end{flushleft}

\hspace{1em}

\begin{flushleft}
\textbf{Class 5:} \\
\noindent
$\tau_{(i,j-1),(i,j)}^y \tau_{(i,j),(i,j+1)}^y $, $\tau_{(i-1,j),(i,j)}^x \tau_{(i,j-1),(i,j)}^z \tau_{(i,j),(i,j+1)}^y $, $\tau_{(i,j-1),(i,j)}^z \tau_{(i,j),(i,j+1)}^y \tau_{(i,j),(i+1,j)}^x $, $\tau_{(i,j-1),(i,j)}^z \tau_{(i,j),(i,j+1)}^z \tau_{(i,j),(i+1,j)}^x $, $\tau_{(i-1,j),(i,j)}^x \tau_{(i,j-1),(i,j)}^z \tau_{(i,j),(i,j+1)}^z $, $\tau_{(i,j-1),(i,j)}^y \tau_{(i,j),(i,j+1)}^z $, $\tau_{(i,j-1),(i,j)}^z \tau_{(i,j),(i,j+1)}^z $, $\tau_{(i-1,j),(i,j)}^x \tau_{(i,j-1),(i,j)}^y \tau_{(i,j),(i,j+1)}^z $, $\tau_{(i,j-1),(i,j)}^y \tau_{(i,j),(i,j+1)}^z \tau_{(i,j),(i+1,j)}^x $, $\tau_{(i,j-1),(i,j)}^y \tau_{(i,j),(i,j+1)}^y \tau_{(i,j),(i+1,j)}^x $, $\tau_{(i-1,j),(i,j)}^x \tau_{(i,j-1),(i,j)}^y \tau_{(i,j),(i,j+1)}^y $, $\tau_{(i,j-1),(i,j)}^z \tau_{(i,j),(i,j+1)}^y$
\end{flushleft}

\hspace{1em}

\begin{flushleft}
\textbf{Class 6:} \\
\noindent
$\tau_{(i-1,j),(i,j)}^y \tau_{(i,j-1),(i,j)}^x \tau_{(i,j),(i+1,j)}^z $, $\tau_{(i-1,j),(i,j)}^z \tau_{(i,j-1),(i,j)}^x \tau_{(i,j),(i+1,j)}^y $, $\tau_{(i-1,j),(i,j)}^z \tau_{(i,j),(i+1,j)}^z $, $\tau_{(i-1,j),(i,j)}^y \tau_{(i,j),(i,j+1)}^x \tau_{(i,j),(i+1,j)}^z $, $\tau_{(i-1,j),(i,j)}^y \tau_{(i,j),(i+1,j)}^y $, $\tau_{(i-1,j),(i,j)}^z \tau_{(i,j),(i+1,j)}^y $, $\tau_{(i-1,j),(i,j)}^y \tau_{(i,j),(i,j+1)}^x \tau_{(i,j),(i+1,j)}^y $, $\tau_{(i-1,j),(i,j)}^z \tau_{(i,j),(i,j+1)}^x \tau_{(i,j),(i+1,j)}^y $, $\tau_{(i-1,j),(i,j)}^y \tau_{(i,j),(i+1,j)}^z $, $\tau_{(i-1,j),(i,j)}^z \tau_{(i,j),(i,j+1)}^x \tau_{(i,j),(i+1,j)}^z $, $\tau_{(i-1,j),(i,j)}^y \tau_{(i,j-1),(i,j)}^x \tau_{(i,j),(i+1,j)}^y $, $\tau_{(i-1,j),(i,j)}^z \tau_{(i,j-1),(i,j)}^x \tau_{(i,j),(i+1,j)}^z$
\end{flushleft}

\hspace{1em}

\begin{flushleft}
\textbf{Class 7:} \\
\noindent
$\tau_{(i,j),(i,j+1)}^y \tau_{(i+1,j),(i+1,j+1)}^y $, $\tau_{(i,j),(i,j+1)}^z \tau_{(i+1,j),(i+1,j+1)}^y $, $\tau_{(i,j),(i,j+1)}^x \tau_{(i,j),(i+1,j)}^z \tau_{(i,j+1),(i+1,j+1)}^z $, $\tau_{(i,j),(i,j+1)}^y \tau_{(i,j),(i+1,j)}^x \tau_{(i+1,j),(i+1,j+1)}^y $, $\tau_{(i,j),(i+1,j)}^z \tau_{(i,j+1),(i+1,j+1)}^y \tau_{(i+1,j),(i+1,j+1)}^x $, $\tau_{(i,j),(i+1,j)}^y \tau_{(i,j+1),(i+1,j+1)}^y \tau_{(i+1,j),(i+1,j+1)}^x $, $\tau_{(i,j),(i,j+1)}^y \tau_{(i,j),(i+1,j)}^x \tau_{(i+1,j),(i+1,j+1)}^z $, $\tau_{(i,j),(i+1,j)}^z \tau_{(i,j+1),(i+1,j+1)}^z $, $\tau_{(i,j),(i+1,j)}^z \tau_{(i,j+1),(i+1,j+1)}^y $, $\tau_{(i,j),(i,j+1)}^y \tau_{(i,j+1),(i+1,j+1)}^x \tau_{(i+1,j),(i+1,j+1)}^y $, $\tau_{(i,j),(i,j+1)}^x \tau_{(i,j),(i+1,j)}^y \tau_{(i,j+1),(i+1,j+1)}^y $, $\tau_{(i,j),(i,j+1)}^y \tau_{(i,j+1),(i+1,j+1)}^x \tau_{(i+1,j),(i+1,j+1)}^z $, $\tau_{(i,j),(i,j+1)}^x \tau_{(i,j),(i+1,j)}^z \tau_{(i,j+1),(i+1,j+1)}^y $, $\tau_{(i,j),(i+1,j)}^z \tau_{(i,j+1),(i+1,j+1)}^z \tau_{(i+1,j),(i+1,j+1)}^x $, $\tau_{(i,j),(i,j+1)}^z \tau_{(i+1,j),(i+1,j+1)}^z $, $\tau_{(i,j),(i+1,j)}^y \tau_{(i,j+1),(i+1,j+1)}^y $, $\tau_{(i,j),(i,j+1)}^y \tau_{(i+1,j),(i+1,j+1)}^z $, $\tau_{(i,j),(i+1,j)}^y \tau_{(i,j+1),(i+1,j+1)}^z $, $\tau_{(i,j),(i,j+1)}^z \tau_{(i,j+1),(i+1,j+1)}^x \tau_{(i+1,j),(i+1,j+1)}^y $, $\tau_{(i,j),(i,j+1)}^x \tau_{(i,j),(i+1,j)}^y \tau_{(i,j+1),(i+1,j+1)}^z $, $\tau_{(i,j),(i,j+1)}^z \tau_{(i,j+1),(i+1,j+1)}^x \tau_{(i+1,j),(i+1,j+1)}^z $, $\tau_{(i,j),(i,j+1)}^z \tau_{(i,j),(i+1,j)}^x \tau_{(i+1,j),(i+1,j+1)}^y $, $\tau_{(i,j),(i+1,j)}^y \tau_{(i,j+1),(i+1,j+1)}^z \tau_{(i+1,j),(i+1,j+1)}^x $, $\tau_{(i,j),(i,j+1)}^z \tau_{(i,j),(i+1,j)}^x \tau_{(i+1,j),(i+1,j+1)}^z$
\end{flushleft}

\hspace{1em}

\begin{flushleft}
\textbf{Class 8:} \\
\noindent
$\tau_{(i,j+1),(i+1,j+1)}^y \tau_{(i+1,j),(i+1,j+1)}^y \tau_{(i+1,j+1),(i+1,j+2)}^x $, $\tau_{(i,j+1),(i+1,j+1)}^y \tau_{(i+1,j),(i+1,j+1)}^z \tau_{(i+1,j+1),(i+1,j+2)}^x $, $\tau_{(i,j+1),(i+1,j+1)}^z \tau_{(i+1,j),(i+1,j+1)}^y \tau_{(i+1,j+1),(i+2,j+1)}^x $, $\tau_{(i,j-1),(i,j)}^x \tau_{(i,j),(i,j+1)}^y \tau_{(i,j),(i+1,j)}^y $, $\tau_{(i,j),(i,j+1)}^y \tau_{(i,j),(i+1,j)}^y \tau_{(i+1,j),(i+1,j+1)}^x $, $\tau_{(i,j),(i,j+1)}^y \tau_{(i,j),(i+1,j)}^z \tau_{(i+1,j),(i+1,j+1)}^x $, $\tau_{(i,j+1),(i+1,j+1)}^y \tau_{(i+1,j),(i+1,j+1)}^z $, $\tau_{(i,j),(i,j+1)}^z \tau_{(i,j),(i+1,j)}^z $, $\tau_{(i,j),(i,j+1)}^y \tau_{(i,j),(i+1,j)}^z $, $\tau_{(i,j+1),(i+1,j+1)}^y \tau_{(i+1,j),(i+1,j+1)}^z \tau_{(i+1,j+1),(i+2,j+1)}^x $, $\tau_{(i,j),(i,j+1)}^y \tau_{(i,j),(i+1,j)}^z \tau_{(i,j+1),(i+1,j+1)}^x $, $\tau_{(i,j+1),(i+1,j+1)}^z \tau_{(i+1,j),(i+1,j+1)}^z \tau_{(i+1,j+1),(i+1,j+2)}^x $, $\tau_{(i,j-1),(i,j)}^x \tau_{(i,j),(i,j+1)}^z \tau_{(i,j),(i+1,j)}^y $, $\tau_{(i,j+1),(i+1,j+1)}^z \tau_{(i+1,j),(i+1,j+1)}^y $, $\tau_{(i,j+1),(i+1,j+1)}^y \tau_{(i+1,j),(i+1,j+1)}^y $, $\tau_{(i-1,j),(i,j)}^x \tau_{(i,j),(i,j+1)}^y \tau_{(i,j),(i+1,j)}^y $, $\tau_{(i,j),(i+1,j)}^x \tau_{(i,j+1),(i+1,j+1)}^y \tau_{(i+1,j),(i+1,j+1)}^y $, $\tau_{(i,j),(i,j+1)}^z \tau_{(i,j),(i+1,j)}^z \tau_{(i,j+1),(i+1,j+1)}^x $, $\tau_{(i,j),(i,j+1)}^x \tau_{(i,j+1),(i+1,j+1)}^z \tau_{(i+1,j),(i+1,j+1)}^z $, $\tau_{(i,j+1),(i+1,j+1)}^z \tau_{(i+1,j),(i+1,j+1)}^y \tau_{(i+1,j+1),(i+1,j+2)}^x $, $\tau_{(i-1,j),(i,j)}^x \tau_{(i,j),(i,j+1)}^z \tau_{(i,j),(i+1,j)}^z $, $\tau_{(i,j),(i,j+1)}^z \tau_{(i,j),(i+1,j)}^y \tau_{(i+1,j),(i+1,j+1)}^x $, $\tau_{(i,j-1),(i,j)}^x \tau_{(i,j),(i,j+1)}^y \tau_{(i,j),(i+1,j)}^z $, $\tau_{(i,j-1),(i,j)}^x \tau_{(i,j),(i,j+1)}^z \tau_{(i,j),(i+1,j)}^z $,$\tau_{(i,j),(i+1,j)}^x \tau_{(i,j+1),(i+1,j+1)}^y \tau_{(i+1,j),(i+1,j+1)}^z $, $\tau_{(i,j),(i,j+1)}^x \tau_{(i,j+1),(i+1,j+1)}^y \tau_{(i+1,j),(i+1,j+1)}^z $, $\tau_{(i,j),(i,j+1)}^y \tau_{(i,j),(i+1,j)}^y \tau_{(i,j+1),(i+1,j+1)}^x $, $\tau_{(i,j),(i,j+1)}^z \tau_{(i,j),(i+1,j)}^y \tau_{(i,j+1),(i+1,j+1)}^x $, $\tau_{(i,j+1),(i+1,j+1)}^z \tau_{(i+1,j),(i+1,j+1)}^z \tau_{(i+1,j+1),(i+2,j+1)}^x $, $\tau_{(i,j),(i,j+1)}^z \tau_{(i,j),(i+1,j)}^y $, $\tau_{(i,j),(i,j+1)}^z \tau_{(i,j),(i+1,j)}^z \tau_{(i+1,j),(i+1,j+1)}^x $, $\tau_{(i-1,j),(i,j)}^x \tau_{(i,j),(i,j+1)}^z \tau_{(i,j),(i+1,j)}^y $, $\tau_{(i,j+1),(i+1,j+1)}^y \tau_{(i+1,j),(i+1,j+1)}^y \tau_{(i+1,j+1),(i+2,j+1)}^x $, $\tau_{(i,j),(i,j+1)}^x \tau_{(i,j+1),(i+1,j+1)}^z \tau_{(i+1,j),(i+1,j+1)}^y $, $\tau_{(i,j),(i,j+1)}^x \tau_{(i,j+1),(i+1,j+1)}^y \tau_{(i+1,j),(i+1,j+1)}^y $, $\tau_{(i,j),(i+1,j)}^x \tau_{(i,j+1),(i+1,j+1)}^z \tau_{(i+1,j),(i+1,j+1)}^y $, $\tau_{(i,j),(i+1,j)}^x \tau_{(i,j+1),(i+1,j+1)}^z \tau_{(i+1,j),(i+1,j+1)}^z $, $\tau_{(i,j),(i,j+1)}^y \tau_{(i,j),(i+1,j)}^y $, $\tau_{(i-1,j),(i,j)}^x \tau_{(i,j),(i,j+1)}^y \tau_{(i,j),(i+1,j)}^z $, $\tau_{(i,j+1),(i+1,j+1)}^z \tau_{(i+1,j),(i+1,j+1)}^z$
\end{flushleft}

\hspace{1em}

\begin{flushleft}
\textbf{Class 9:} \\
\noindent
$\tau_{(i-1,j+1),(i,j+1)}^x \tau_{(i,j),(i,j+1)}^z \tau_{(i,j+1),(i+1,j+1)}^y $, $\tau_{(i,j),(i,j+1)}^y \tau_{(i,j+1),(i+1,j+1)}^z \tau_{(i+1,j),(i+1,j+1)}^x $, $\tau_{(i,j),(i,j+1)}^z \tau_{(i,j+1),(i+1,j+1)}^y $, $\tau_{(i,j),(i,j+1)}^x \tau_{(i,j),(i+1,j)}^y \tau_{(i+1,j),(i+1,j+1)}^y $, $\tau_{(i,j),(i,j+1)}^y \tau_{(i,j+1),(i+1,j+1)}^y \tau_{(i+1,j),(i+1,j+1)}^x $, $\tau_{(i,j),(i,j+1)}^x \tau_{(i,j),(i+1,j)}^z \tau_{(i+1,j),(i+1,j+1)}^y $, $\tau_{(i,j),(i,j+1)}^y \tau_{(i,j+1),(i,j+2)}^x \tau_{(i,j+1),(i+1,j+1)}^y $, $\tau_{(i,j),(i+1,j)}^y \tau_{(i+1,j-1),(i+1,j)}^x \tau_{(i+1,j),(i+1,j+1)}^y $, $\tau_{(i,j),(i,j+1)}^z \tau_{(i,j+1),(i,j+2)}^x \tau_{(i,j+1),(i+1,j+1)}^y $, $\tau_{(i,j),(i,j+1)}^z \tau_{(i,j+1),(i+1,j+1)}^z $, $\tau_{(i,j),(i+1,j)}^y \tau_{(i+1,j),(i+1,j+1)}^z \tau_{(i+1,j),(i+2,j)}^x $, $\tau_{(i,j),(i+1,j)}^z \tau_{(i+1,j-1),(i+1,j)}^x \tau_{(i+1,j),(i+1,j+1)}^y $, $\tau_{(i,j),(i,j+1)}^x \tau_{(i,j),(i+1,j)}^z \tau_{(i+1,j),(i+1,j+1)}^z $, $\tau_{(i,j),(i+1,j)}^y \tau_{(i+1,j),(i+1,j+1)}^y $, $\tau_{(i,j),(i,j+1)}^y \tau_{(i,j+1),(i+1,j+1)}^y $, $\tau_{(i,j),(i+1,j)}^z \tau_{(i+1,j),(i+1,j+1)}^z \tau_{(i+1,j),(i+2,j)}^x $, $\tau_{(i,j),(i+1,j)}^y \tau_{(i+1,j),(i+1,j+1)}^z $, $\tau_{(i,j),(i,j+1)}^x \tau_{(i,j),(i+1,j)}^y \tau_{(i+1,j),(i+1,j+1)}^z $, $\tau_{(i,j),(i,j+1)}^z \tau_{(i,j+1),(i+1,j+1)}^z \tau_{(i+1,j),(i+1,j+1)}^x $, $\tau_{(i,j),(i,j+1)}^y \tau_{(i,j+1),(i+1,j+1)}^z $, $\tau_{(i,j),(i+1,j)}^z \tau_{(i+1,j-1),(i+1,j)}^x \tau_{(i+1,j),(i+1,j+1)}^z $, $\tau_{(i-1,j+1),(i,j+1)}^x \tau_{(i,j),(i,j+1)}^y \tau_{(i,j+1),(i+1,j+1)}^z $, $\tau_{(i,j),(i+1,j)}^y \tau_{(i+1,j-1),(i+1,j)}^x \tau_{(i+1,j),(i+1,j+1)}^z $, $\tau_{(i,j),(i,j+1)}^z \tau_{(i,j),(i+1,j)}^x \tau_{(i,j+1),(i+1,j+1)}^z $, $\tau_{(i-1,j+1),(i,j+1)}^x \tau_{(i,j),(i,j+1)}^y \tau_{(i,j+1),(i+1,j+1)}^y $, $\tau_{(i,j),(i,j+1)}^z \tau_{(i,j+1),(i+1,j+1)}^y \tau_{(i+1,j),(i+1,j+1)}^x $, $\tau_{(i,j),(i,j+1)}^z \tau_{(i,j),(i+1,j)}^x \tau_{(i,j+1),(i+1,j+1)}^y $, $\tau_{(i,j),(i+1,j)}^y \tau_{(i,j+1),(i+1,j+1)}^x \tau_{(i+1,j),(i+1,j+1)}^z $, $\tau_{(i,j),(i,j+1)}^y \tau_{(i,j),(i+1,j)}^x \tau_{(i,j+1),(i+1,j+1)}^z $, $\tau_{(i,j),(i+1,j)}^z \tau_{(i+1,j),(i+1,j+1)}^y $, $\tau_{(i,j),(i,j+1)}^y \tau_{(i,j+1),(i,j+2)}^x \tau_{(i,j+1),(i+1,j+1)}^z $, $\tau_{(i,j),(i+1,j)}^z \tau_{(i,j+1),(i+1,j+1)}^x \tau_{(i+1,j),(i+1,j+1)}^z $, $\tau_{(i-1,j+1),(i,j+1)}^x \tau_{(i,j),(i,j+1)}^z \tau_{(i,j+1),(i+1,j+1)}^z $, $\tau_{(i,j),(i,j+1)}^y \tau_{(i,j),(i+1,j)}^x\tau_{(i,j+1),(i+1,j+1)}^y $, $\tau_{(i,j),(i+1,j)}^z \tau_{(i+1,j),(i+1,j+1)}^y \tau_{(i+1,j),(i+2,j)}^x $, $\tau_{(i,j),(i+1,j)}^z \tau_{(i,j+1),(i+1,j+1)}^x \tau_{(i+1,j),(i+1,j+1)}^y $, $\tau_{(i,j),(i,j+1)}^z \tau_{(i,j+1),(i,j+2)}^x \tau_{(i,j+1),(i+1,j+1)}^z $, $\tau_{(i,j),(i+1,j)}^z \tau_{(i+1,j),(i+1,j+1)}^z $, $\tau_{(i,j),(i+1,j)}^y \tau_{(i+1,j),(i+1,j+1)}^y \tau_{(i+1,j),(i+2,j)}^x $, $\tau_{(i,j),(i+1,j)}^y \tau_{(i,j+1),(i+1,j+1)}^x \tau_{(i+1,j),(i+1,j+1)}^y$
\end{flushleft}

\hspace{1em}

\begin{flushleft}
\textbf{Class 10:} \\
\noindent
$\tau_{(i,j),(i,j+1)}^z \tau_{(i,j+1),(i+1,j+1)}^z \tau_{(i+1,j),(i+1,j+1)}^z $, $\tau_{(i,j),(i,j+1)}^z \tau_{(i,j+1),(i+1,j+1)}^z \tau_{(i+1,j),(i+1,j+1)}^y $, $\tau_{(i,j),(i+1,j)}^z \tau_{(i,j+1),(i+1,j+1)}^x \tau_{(i+1,j),(i+1,j+1)}^x $, $\tau_{(i,j),(i+1,j)}^y \tau_{(i+1,j),(i+2,j)}^x $, $\tau_{(i,j),(i,j+1)}^x \tau_{(i,j),(i+1,j)}^z \tau_{(i,j+1),(i+1,j+1)}^x $, $\tau_{(i,j),(i+1,j)}^y \tau_{(i+1,j),(i+1,j+1)}^x $, $\tau_{(i,j-1),(i,j)}^x \tau_{(i,j),(i+1,j)}^z \tau_{(i+1,j-1),(i+1,j)}^x $, $\tau_{(i,j),(i,j+1)}^y \tau_{(i,j+1),(i+1,j+1)}^z \tau_{(i+1,j),(i+1,j+1)}^y $, $\tau_{(i-1,j),(i,j)}^x \tau_{(i,j),(i,j+1)}^x \tau_{(i,j),(i+1,j)}^z $, $\tau_{(i,j-1),(i,j)}^z \tau_{(i,j-1),(i+1,j-1)}^z \tau_{(i+1,j-1),(i+1,j)}^y $, $\tau_{(i,j),(i+1,j)}^z \tau_{(i+1,j),(i+1,j+1)}^x \tau_{(i+1,j),(i+2,j)}^x $, $\tau_{(i,j-1),(i,j)}^x \tau_{(i,j-1),(i+1,j-1)}^x \tau_{(i,j),(i+1,j)}^y$, $\tau_{(i,j-1),(i,j)}^z \tau_{(i,j-1),(i+1,j-1)}^y \tau_{(i+1,j-1),(i+1,j)}^z $, $\tau_{(i,j),(i+1,j)}^y $, $\tau_{(i,j),(i+1,j)}^z \tau_{(i+1,j-1),(i+1,j)}^x \tau_{(i+1,j),(i+1,j+1)}^x $, $\tau_{(i,j-1),(i,j)}^y \tau_{(i,j-1),(i+1,j-1)}^z \tau_{(i+1,j-1),(i+1,j)}^y $, $\tau_{(i,j-1),(i,j)}^x \tau_{(i,j),(i+1,j)}^z $, $\tau_{(i,j-1),(i,j)}^y \tau_{(i,j-1),(i+1,j-1)}^y \tau_{(i+1,j-1),(i+1,j)}^z $, $\tau_{(i,j),(i+1,j)}^y \tau_{(i+1,j-1),(i+1,j)}^x $, $\tau_{(i,j-1),(i+1,j-1)}^x \tau_{(i,j),(i+1,j)}^y $, $\tau_{(i,j-1),(i,j)}^x \tau_{(i,j),(i+1,j)}^y \tau_{(i+1,j-1),(i+1,j)}^x $, $\tau_{(i,j),(i+1,j)}^z \tau_{(i+1,j-1),(i+1,j)}^x \tau_{(i+1,j),(i+2,j)}^x $, $\tau_{(i,j),(i+1,j)}^y \tau_{(i+1,j),(i+1,j+1)}^x \tau_{(i+1,j),(i+2,j)}^x $, $\tau_{(i,j),(i,j+1)}^y \tau_{(i,j+1),(i+1,j+1)}^y \tau_{(i+1,j),(i+1,j+1)}^z $, $\tau_{(i-1,j),(i,j)}^x \tau_{(i,j),(i+1,j)}^y $, $\tau_{(i,j),(i,j+1)}^z \tau_{(i,j+1),(i+1,j+1)}^y \tau_{(i+1,j),(i+1,j+1)}^z $, $\tau_{(i,j),(i,j+1)}^z \tau_{(i,j+1),(i+1,j+1)}^y \tau_{(i+1,j),(i+1,j+1)}^y $, $\tau_{(i,j-1),(i+1,j-1)}^x\tau_{(i,j),(i+1,j)}^y \tau_{(i+1,j-1),(i+1,j)}^x $, $\tau_{(i,j),(i+1,j)}^z \tau_{(i+1,j),(i+2,j)}^x $, $\tau_{(i,j),(i,j+1)}^x \tau_{(i,j),(i+1,j)}^y \tau_{(i,j+1),(i+1,j+1)}^x $, $\tau_{(i,j-1),(i,j)}^x \tau_{(i,j),(i,j+1)}^x \tau_{(i,j),(i+1,j)}^y $, $\tau_{(i,j),(i,j+1)}^x \tau_{(i,j),(i+1,j)}^y $, $\tau_{(i,j),(i,j+1)}^y \tau_{(i,j+1),(i+1,j+1)}^y\tau_{(i+1,j),(i+1,j+1)}^y $, $\tau_{(i,j),(i+1,j)}^y \tau_{(i+1,j-1),(i+1,j)}^x \tau_{(i+1,j),(i+2,j)}^x $, $\tau_{(i,j),(i+1,j)}^z $, $\tau_{(i,j-1),(i,j)}^y \tau_{(i,j-1),(i+1,j-1)}^z \tau_{(i+1,j-1),(i+1,j)}^z $, $\tau_{(i-1,j),(i,j)}^x \tau_{(i,j-1),(i,j)}^x \tau_{(i,j),(i+1,j)}^y $, $\tau_{(i,j),(i,j+1)}^x \tau_{(i,j),(i+1,j)}^z \tau_{(i+1,j),(i+1,j+1)}^x $, $\tau_{(i,j-1),(i,j)}^y \tau_{(i,j-1),(i+1,j-1)}^y \tau_{(i+1,j-1),(i+1,j)}^y $, $\tau_{(i,j),(i+1,j)}^y \tau_{(i,j+1),(i+1,j+1)}^x \tau_{(i+1,j),(i+1,j+1)}^x $, $\tau_{(i,j-1),(i,j)}^x \tau_{(i,j),(i,j+1)}^x \tau_{(i,j),(i+1,j)}^z $, $\tau_{(i,j),(i+1,j)}^z \tau_{(i+1,j),(i+1,j+1)}^x $, $\tau_{(i,j),(i+1,j)}^z \tau_{(i,j+1),(i+1,j+1)}^x $, $\tau_{(i,j-1),(i+1,j-1)}^x \tau_{(i,j),(i+1,j)}^z $, $\tau_{(i,j-1),(i,j)}^x \tau_{(i,j-1),(i+1,j-1)}^x \tau_{(i,j),(i+1,j)}^z $, $\tau_{(i-1,j),(i,j)}^x \tau_{(i,j),(i+1,j)}^z $, $\tau_{(i,j),(i,j+1)}^x \tau_{(i,j),(i+1,j)}^y \tau_{(i+1,j),(i+1,j+1)}^x $, $\tau_{(i-1,j),(i,j)}^x \tau_{(i,j-1),(i,j)}^x \tau_{(i,j),(i+1,j)}^z $, $\tau_{(i,j-1),(i,j)}^x \tau_{(i,j),(i+1,j)}^y $, $\tau_{(i,j-1),(i+1,j-1)}^x \tau_{(i,j),(i+1,j)}^z \tau_{(i+1,j-1),(i+1,j)}^x $, $\tau_{(i,j),(i+1,j)}^z \tau_{(i+1,j-1),(i+1,j)}^x $, $\tau_{(i,j),(i,j+1)}^x \tau_{(i,j),(i+1,j)}^z $, $\tau_{(i-1,j),(i,j)}^x \tau_{(i,j),(i,j+1)}^x \tau_{(i,j),(i+1,j)}^y $, $\tau_{(i,j-1),(i,j)}^z \tau_{(i,j-1),(i+1,j-1)}^z \tau_{(i+1,j-1),(i+1,j)}^z $, $\tau_{(i,j),(i+1,j)}^y \tau_{(i,j+1),(i+1,j+1)}^x $, $\tau_{(i,j-1),(i,j)}^z \tau_{(i,j-1),(i+1,j-1)}^y \tau_{(i+1,j-1),(i+1,j)}^y $, $\tau_{(i,j),(i,j+1)}^y \tau_{(i,j+1),(i+1,j+1)}^z \tau_{(i+1,j),(i+1,j+1)}^z $, $\tau_{(i,j),(i+1,j)}^y \tau_{(i+1,j-1),(i+1,j)}^x \tau_{(i+1,j),(i+1,j+1)}^x$
\end{flushleft}

\hspace{1em}

\begin{flushleft}
\textbf{Class 11:} \\
\noindent
$\tau_{(i,j),(i,j+1)}^y \tau_{(i,j),(i+1,j)}^x \tau_{(i,j+1),(i+1,j+1)}^x $, $\tau_{(i,j),(i,j+1)}^z \tau_{(i,j),(i+1,j)}^x \tau_{(i,j+1),(i+1,j+1)}^x $, $\tau_{(i-1,j),(i-1,j+1)}^y \tau_{(i-1,j),(i,j)}^y \tau_{(i-1,j+1),(i,j+1)}^z $, $\tau_{(i,j),(i+1,j)}^z \tau_{(i,j+1),(i+1,j+1)}^y \tau_{(i+1,j),(i+1,j+1)}^y $, $\tau_{(i-1,j),(i,j)}^x \tau_{(i,j),(i,j+1)}^y $, $\tau_{(i,j),(i+1,j)}^y \tau_{(i,j+1),(i+1,j+1)}^z \tau_{(i+1,j),(i+1,j+1)}^y $, $\tau_{(i,j),(i+1,j)}^y \tau_{(i,j+1),(i+1,j+1)}^y \tau_{(i+1,j),(i+1,j+1)}^y $, $\tau_{(i,j),(i+1,j)}^z \tau_{(i,j+1),(i+1,j+1)}^z \tau_{(i+1,j),(i+1,j+1)}^y $, $\tau_{(i-1,j),(i,j)}^x \tau_{(i,j),(i,j+1)}^z \tau_{(i,j),(i+1,j)}^x $, $\tau_{(i,j-1),(i,j)}^x \tau_{(i,j),(i,j+1)}^y \tau_{(i,j),(i+1,j)}^x $, $\tau_{(i,j),(i,j+1)}^z \tau_{(i,j+1),(i+1,j+1)}^x $, $\tau_{(i-1,j),(i-1,j+1)}^z \tau_{(i-1,j),(i,j)}^y \tau_{(i-1,j+1),(i,j+1)}^z $, $\tau_{(i,j),(i+1,j)}^z \tau_{(i,j+1),(i+1,j+1)}^y \tau_{(i+1,j),(i+1,j+1)}^z $, $\tau_{(i,j),(i+1,j)}^y \tau_{(i,j+1),(i+1,j+1)}^z \tau_{(i+1,j),(i+1,j+1)}^z $, $\tau_{(i-1,j+1),(i,j+1)}^x \tau_{(i,j),(i,j+1)}^y $, $\tau_{(i-1,j+1),(i,j+1)}^x \tau_{(i,j),(i,j+1)}^z $, $\tau_{(i-1,j+1),(i,j+1)}^x \tau_{(i,j),(i,j+1)}^y \tau_{(i,j+1),(i+1,j+1)}^x $, $\tau_{(i,j),(i,j+1)}^z \tau_{(i,j),(i+1,j)}^x \tau_{(i+1,j),(i+1,j+1)}^x $, $\tau_{(i,j),(i,j+1)}^y \tau_{(i,j),(i+1,j)}^x \tau_{(i+1,j),(i+1,j+1)}^x $, $\tau_{(i-1,j),(i-1,j+1)}^z \tau_{(i-1,j),(i,j)}^z \tau_{(i-1,j+1),(i,j+1)}^z $, $\tau_{(i-1,j),(i-1,j+1)}^x \tau_{(i-1,j+1),(i,j+1)}^x \tau_{(i,j),(i,j+1)}^y $, $\tau_{(i-1,j),(i-1,j+1)}^x \tau_{(i-1,j),(i,j)}^x \tau_{(i,j),(i,j+1)}^z $, $\tau_{(i,j),(i,j+1)}^y \tau_{(i,j+1),(i+1,j+1)}^x $, $\tau_{(i-1,j),(i,j)}^x \tau_{(i-1,j+1),(i,j+1)}^x \tau_{(i,j),(i,j+1)}^y $, $\tau_{(i-1,j),(i,j)}^x \tau_{(i-1,j+1),(i,j+1)}^x \tau_{(i,j),(i,j+1)}^z $, $\tau_{(i-1,j),(i-1,j+1)}^y \tau_{(i-1,j),(i,j)}^z \tau_{(i-1,j+1),(i,j+1)}^y $, $\tau_{(i-1,j),(i,j)}^x \tau_{(i,j-1),(i,j)}^x \tau_{(i,j),(i,j+1)}^y $, $\tau_{(i-1,j),(i,j)}^x \tau_{(i,j-1),(i,j)}^x \tau_{(i,j),(i,j+1)}^z $, $\tau_{(i-1,j),(i,j)}^x \tau_{(i,j),(i,j+1)}^z $, $\tau_{(i,j),(i,j+1)}^z \tau_{(i,j+1),(i,j+2)}^x \tau_{(i,j+1),(i+1,j+1)}^x $, $\tau_{(i,j),(i,j+1)}^y \tau_{(i+1,j),(i+1,j+1)}^x $, $\tau_{(i-1,j+1),(i,j+1)}^x \tau_{(i,j),(i,j+1)}^z \tau_{(i,j+1),(i+1,j+1)}^x $, $\tau_{(i,j),(i,j+1)}^y \tau_{(i,j+1),(i,j+2)}^x $, $\tau_{(i,j),(i,j+1)}^z \tau_{(i,j+1),(i+1,j+1)}^x \tau_{(i+1,j),(i+1,j+1)}^x $, $\tau_{(i-1,j),(i-1,j+1)}^x \tau_{(i-1,j),(i,j)}^x \tau_{(i,j),(i,j+1)}^y $, $\tau_{(i-1,j),(i-1,j+1)}^x \tau_{(i-1,j+1),(i,j+1)}^x \tau_{(i,j),(i,j+1)}^z $, $\tau_{(i,j),(i,j+1)}^y \tau_{(i,j),(i+1,j)}^x$, $\tau_{(i-1,j+1),(i,j+1)}^x \tau_{(i,j),(i,j+1)}^y \tau_{(i,j+1),(i,j+2)}^x $, $\tau_{(i-1,j),(i-1,j+1)}^y \tau_{(i-1,j),(i,j)}^z \tau_{(i-1,j+1),(i,j+1)}^z $, $\tau_{(i-1,j+1),(i,j+1)}^x \tau_{(i,j),(i,j+1)}^z \tau_{(i,j+1),(i,j+2)}^x $, $\tau_{(i-1,j),(i-1,j+1)}^z \tau_{(i-1,j),(i,j)}^z \tau_{(i-1,j+1),(i,j+1)}^y $, $\tau_{(i,j),(i,j+1)}^z \tau_{(i+1,j),(i+1,j+1)}^x $, $\tau_{(i,j),(i,j+1)}^y \tau_{(i,j+1),(i,j+2)}^x \tau_{(i,j+1),(i+1,j+1)}^x $, $\tau_{(i-1,j),(i-1,j+1)}^z \tau_{(i-1,j),(i,j)}^y \tau_{(i-1,j+1),(i,j+1)}^y$, $\tau_{(i-1,j),(i-1,j+1)}^x \tau_{(i,j),(i,j+1)}^z $, $\tau_{(i,j),(i,j+1)}^z \tau_{(i,j),(i+1,j)}^x $, $\tau_{(i,j),(i+1,j)}^y \tau_{(i,j+1),(i+1,j+1)}^y \tau_{(i+1,j),(i+1,j+1)}^z $, $\tau_{(i,j),(i+1,j)}^z \tau_{(i,j+1),(i+1,j+1)}^z \tau_{(i+1,j),(i+1,j+1)}^z $, $\tau_{(i,j),(i,j+1)}^z $, $\tau_{(i,j),(i,j+1)}^y $, $\tau_{(i,j-1),(i,j)}^x \tau_{(i,j),(i,j+1)}^y $, $\tau_{(i,j-1),(i,j)}^x \tau_{(i,j),(i,j+1)}^z $, $\tau_{(i,j),(i,j+1)}^z \tau_{(i,j+1),(i,j+2)}^x $, $\tau_{(i,j),(i,j+1)}^y \tau_{(i,j+1),(i+1,j+1)}^x \tau_{(i+1,j),(i+1,j+1)}^x $, $\tau_{(i-1,j),(i-1,j+1)}^x \tau_{(i,j),(i,j+1)}^y $, $\tau_{(i-1,j),(i,j)}^x \tau_{(i,j),(i,j+1)}^y \tau_{(i,j),(i+1,j)}^x $, $\tau_{(i-1,j),(i-1,j+1)}^y \tau_{(i-1,j),(i,j)}^y \tau_{(i-1,j+1),(i,j+1)}^y $, $\tau_{(i,j-1),(i,j)}^x \tau_{(i,j),(i,j+1)}^z \tau_{(i,j),(i+1,j)}^x$
\end{flushleft}

\textbf{Class 12:} all the remaining operators that are in the sequence (level 3 and interaction range 1) but do not appear in the previous blocks. 
This last class contributes to a single block.

\begin{figure}
    \centering
    \includegraphics[width=0.7\linewidth]{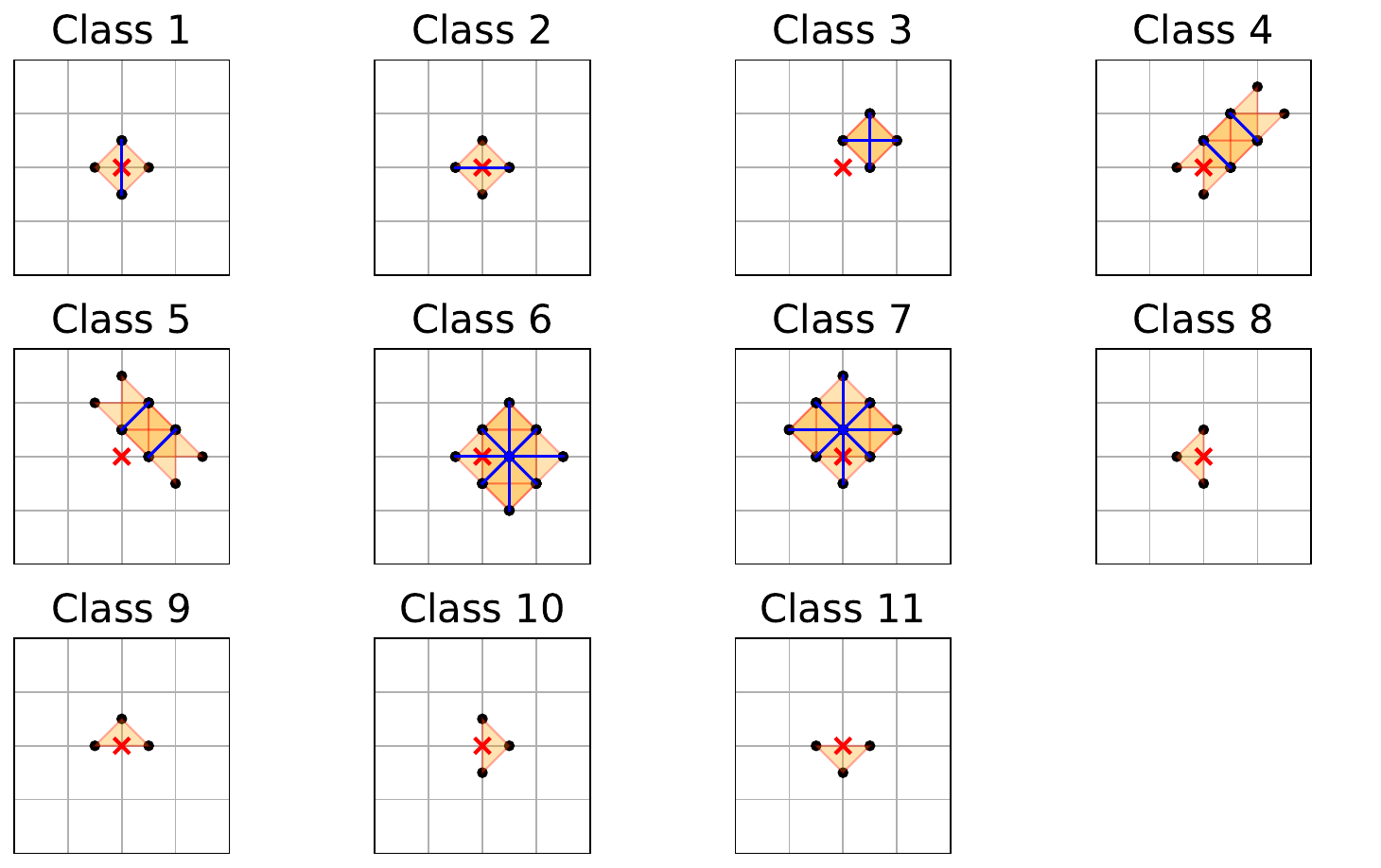}
    \caption{Geometrical representation of classes in the block-ordered sequence.
    The red cross indicates the reference vertex $(i,j)$.
    Each class generates a set of subsequences contributing to a distinct block.
    Nodes correspond to Pauli operators; blue edges represent Pauli strings of Hamming weight~2, and yellow triangles represent Pauli strings of Hamming weight~3. 
    The remaining subsequence, not shown in the figure, consists of all remaining Pauli strings not included in the other classes.}
    \label{fig:geom_rep}
\end{figure}

\section{Projection of the feasible set onto the complex plane} \label{app:feas_set}

\begin{figure}
    \centering
    \includegraphics[width=\linewidth]{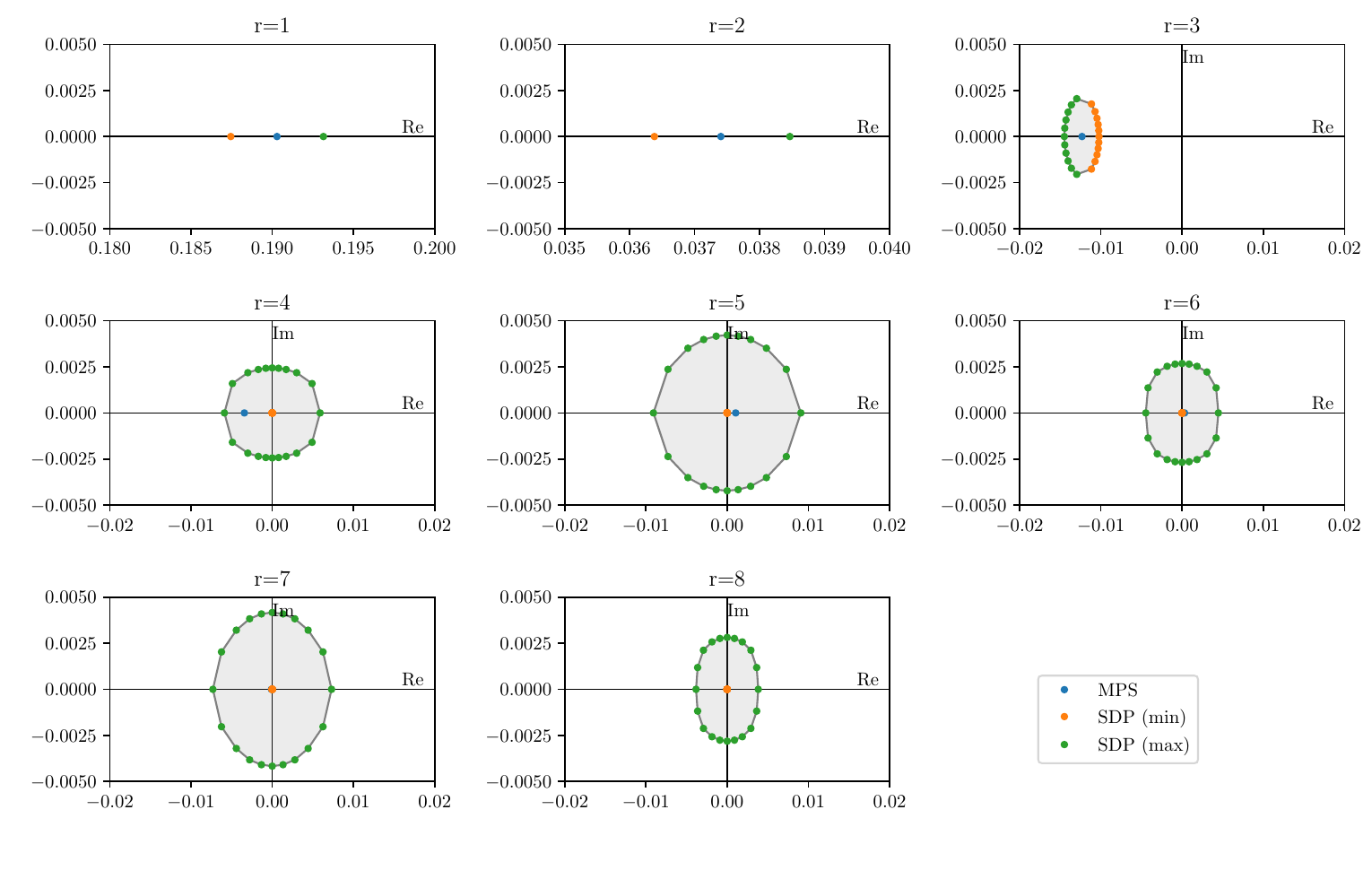}
    \caption{Approximation of the projection of the feasible set onto the complex plane for mesonic strings of lengths $r=1,\ldots,8$. 
    Each subplot corresponds to a fixed string length $r$. 
    The projected feasible set is approximated by discretizing the angular variable $\theta$ and solving the corresponding radial optimization problems at the selected values of $\theta$. 
    The approximation therefore results from the finite discretization of $\theta\in[0,2\pi)$.}
    \label{fig:feas_set}
\end{figure}

In \Cref{fig:feas_set}, we plot the projected feasible set of the mesonic string expectation value $\expval{S(r)}$ in the complex plane, for a fixed length $r$. Concretely, this is the region
\begin{equation}
\mathcal{K}_{S(r)}
=
\left\{
\expval{S(r)} : \boldsymbol{y} \in Y
\right\}
\subset \mathbb{C},
\end{equation}
where $Y$ is the feasible set of the SDP relaxation (the moment matrix PSD condition together with the energy and sector constraints), and $\expval{S(r)}$ is regarded as a linear function of $\boldsymbol{y}$. 
In other words, $\mathcal{K}_{S(r)}$ is the set of all complex values that $\expval{S(r)}$ can take for any state/process consistent with the observed data. 
The certified bounds $|\expval{S(r)}|_{\min}$ and $|\expval{S(r)}|_{\max}$ reported in the main text are simply the closest and farthest distances from the origin to this region.

We obtain this region numerically by choosing a finite set of angles $\theta_1,\dots,\theta_m \in [0,2\pi)$ and, at each $\theta_k$, solving a pair of SDPs for the minimal and maximal radial coordinate of $\expval{S(r)}$ along the direction $\theta_k$. 

For the cases $r=1$ and $r=2$, we observe that the projected feasible sets appear to be one-dimensional. 
This observation has been verified numerically by sampling angles exponentially close to $\theta=0$ and observing that the corresponding SDPs are either infeasible, unbounded or numerically unstable whenever $\theta\neq 0$.

Plotting the resulting points and taking their convex hull traces out a convex polygon, which we take as an approximation of $\mathcal{K}_{S(r)}$. 
Below, we justify why this radial-sampling procedure is mathematically guaranteed to recover $\mathcal{K}_{S(r)}$ (in the limit of dense sampling), for any compact convex feasible set and any linear complex-valued function.

\subsection*{Estimating the feasible set through the radial-sampling procedure}

Let $Y\subset\mathbb{R}^n$ denote the feasible set of the problem. In particular, we assume that $Y$ is a compact spectrahedron, and hence convex.
Let $f(\boldsymbol{y})=f_{\mathrm{R}}(\boldsymbol{y})+\mathrm{i}f_{\mathrm{I}}(\boldsymbol{y})$ be a complex-valued linear function of $\boldsymbol{y}\in Y$, where $f_{\mathrm{R}}$ and $f_{\mathrm{I}}$ are real-valued linear functions. 
In the application above, $f(\boldsymbol{y}) = \expval{S(r)}$, where $S(r)$ is the mesonic string defined in \cref{eq:mesonic_string} and is a non-hermitian operator.

We are interested in the image of the feasible set in the complex plane induced by $f$.
We refer to this set as the projected feasible set and denote it by
\begin{equation}
\mathcal{K}_f := f(Y) \subset \mathbb{C} .
\end{equation}
We will occasionally identify this set with a subset of $\mathbb{R}^2$ via the canonical correspondence between $\mathbb{C}$ and $\mathbb{R}^2$:
\begin{equation}
    \mathcal{K}_f \simeq
    \left\{
        \begin{pmatrix}
            f_{\mathrm{R}}(\boldsymbol{y}) \\
            f_{\mathrm{I}}(\boldsymbol{y})
        \end{pmatrix}
        : \boldsymbol{y} \in Y
    \right\}
    \subset \mathbb{R}^2.
\end{equation}
It will be clear from the context whether $\mathcal{K}_f$ is regarded as a subset of $\mathbb{C}$ or of $\mathbb{R}^2$.
Since $Y$ is compact and convex and $f$ is linear, $\mathcal{K}_f$ is a compact convex subset of the complex plane.

For a given angle $\theta\in[0,2\pi)$, we define the unit vector
\begin{equation}
\boldsymbol{u}_\theta
:=
\begin{pmatrix}
\cos\theta\\
\sin\theta
\end{pmatrix}.
\end{equation}
The corresponding ray from the origin is
\begin{equation}
R_\theta
:=
\left\{
r \boldsymbol{u}_{\theta}:r\geq0
\right\}.
\end{equation}
A point $f(\boldsymbol{y}) \in \mathcal{K}_f$ belongs to this ray if and only if there exists $r\geq0$ such that
\begin{equation}
f_{\mathrm{R}}(\boldsymbol{y})=r\cos\theta,
\qquad
f_{\mathrm{I}}(\boldsymbol{y})=r\sin\theta.
\end{equation}
Equivalently,
\begin{equation}
f_{\mathrm{R}}(\boldsymbol{y})\sin\theta
=
f_{\mathrm{I}}(\boldsymbol{y})\cos\theta,
\end{equation}
with the additional requirement that the corresponding radial coordinate is nonnegative.

Thus, for each $\theta$, we define the minimum and maximum radial coordinates by
\begin{equation}
\begin{aligned}
    r_{\min}(\theta) := \quad & \min_{\boldsymbol{y} \in Y} r \\
    \text{s.t.} \quad & r \geq 0 \\
    & f_{\mathrm{R}}(\boldsymbol{y})=r\cos\theta\\
    & f_{\mathrm{I}}(\boldsymbol{y})=r\sin\theta
\end{aligned}
\end{equation}
and
\begin{equation}
\begin{aligned}
    r_{\max}(\theta) := \quad & \max_{\boldsymbol{y} \in Y} r \\
    \text{s.t.} \quad & r \geq 0 \\
    & f_{\mathrm{R}}(\boldsymbol{y})=r\cos\theta\\
    & f_{\mathrm{I}}(\boldsymbol{y})=r\sin\theta,
\end{aligned}
\end{equation}
whenever the corresponding feasible set is nonempty.
The points obtained from these radial optimizations are
\begin{equation}
\boldsymbol{p}_{\min}(\theta)
=
r_{\min}(\theta)\boldsymbol{u}_{\theta},
\qquad
\boldsymbol{p}_{\max}(\theta)
=
r_{\max}(\theta)\boldsymbol{u}_{\theta}.
\end{equation}

\begin{proposition}[Convex hull recovery of the projected feasible set]
Let $Y\subset\mathbb{R}^n$ be a compact convex set and let $f:Y\to\mathbb{C}$ be linear. Define $\mathcal{K}_f=f(Y)$ and let $r_{\min}(\theta)$ and $r_{\max}(\theta)$ be defined as above. Let
\begin{equation}
\mathcal{E}
:=
\left\{
\boldsymbol{p}_{\min}(\theta),\boldsymbol{p}_{\max}(\theta):
\theta\in[0,2\pi),\
\mathcal{K}_f\cap R_{\theta}\neq\varnothing
\right\}.
\end{equation}
Then
\begin{equation}
\mathcal{K}_f
=
\operatorname{conv}(\mathcal{E}).
\end{equation}
\end{proposition}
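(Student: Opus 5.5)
We prove the two inclusions separately. Throughout, we use that $\mathcal{K}_f$ is compact and convex, being the image of a compact convex set under a linear map.

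\emph{Inclusion $\operatorname{conv}(\mathcal{E})\subseteq\mathcal{K}_f$.} Fix $\theta$ with $\mathcal{K}_f\cap R_\theta\neq\varnothing$. The set of admissible radii $\{r\geq 0: r\boldsymbol{u}_\theta\in\mathcal{K}_f\}$ is the intersection of a compact convex set with a closed ray. It is therefore a nonempty compact interval, so $r_{\min}(\theta)$ and $r_{\max}(\theta)$ are attained. The points $\boldsymbol{p}_{\min}(\theta)$ and $\boldsymbol{p}_{\max}(\theta)$ thus lie in $\mathcal{K}_f$, which gives $\mathcal{E}\subseteq\mathcal{K}_f$. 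Convexity of $\mathcal{K}_f$ then yields $\operatorname{conv}(\mathcal{E})\subseteq\mathcal{K}_f$.

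\emph{Inclusion $\mathcal{K}_f\subseteq\operatorname{conv}(\mathcal{E})$.} Take any $\boldsymbol{z}\in\mathcal{K}_f$. If $\boldsymbol{z}=0$, then every ray through the origin meets $\mathcal{K}_f$, and $r_{\min}(\theta)=0$ for any such $\theta$, so $0=\boldsymbol{p}_{\min}(\theta)\in\mathcal{E}$. Otherwise, write $\boldsymbol{z}=|\boldsymbol{z}|\boldsymbol{u}_\theta$ for the unique $\theta\in[0,2\pi)$. Then $\boldsymbol{z}\in\mathcal{K}_f\cap R_\theta$, so this ray is admissible and
\[
r_{\min}(\theta)\leq |\boldsymbol{z}|\leq r_{\max}(\theta).
\]
Hence $\boldsymbol{z}$ lies on the segment $[\boldsymbol{p}_{\min}(\theta),\boldsymbol{p}_{\max}(\theta)]$, which is a convex combination of two points of $\mathcal{E}$. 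This gives $\boldsymbol{z}\in\operatorname{conv}(\mathcal{E})$. In fact we obtain the stronger statement that $\mathcal{K}_f$ equals the union of these radial segments, and the convex hull is not even needed for this direction.

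\emph{Remaining subtleties.} The only care needed is in attainment and degenerate cases, namely when the origin belongs to $\mathcal{K}_f$ or when $\mathcal{K}_f$ is a segment, as observed for $r=1,2$. Attainment relies on compactness of $Y$, which is exactly the hypothesis that the spectrahedron is bounded. Boundedness holds here because the moment matrix has unit diagonal entries for Pauli strings, so every moment is bounded. The harder issue is the link to the numerical procedure. With finitely many angles one only obtains an inner approximation $\operatorname{conv}(\mathcal{E}_m)$. Its convergence in Hausdorff distance as the sampling becomes dense would require continuity of $r_{\max}$ in $\theta$, which can fail at angles tangent to the boundary. That is why the proposition is stated for the full set of angles, and why the main text invokes a smoothness assumption.
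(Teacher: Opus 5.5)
Your proof is correct and follows essentially the same route as the paper. Compactness of $\mathcal{K}_f$ gives attainment of the radial extrema, so $\mathcal{E}\subseteq\mathcal{K}_f$ and convexity yields $\operatorname{conv}(\mathcal{E})\subseteq\mathcal{K}_f$, while every nonzero point of $\mathcal{K}_f$ lies on the segment $[\boldsymbol{p}_{\min}(\theta),\boldsymbol{p}_{\max}(\theta)]$ of its own ray; your explicit treatment of $\boldsymbol{z}=0$ (where $r_{\min}(\theta)=0$ for every $\theta$) also closes a case that the paper's proof, which only considers $x\neq 0$, leaves implicit.
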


\begin{proof}
Since $Y$ is convex and $f$ is linear, $\mathcal{K}_f=f(Y)$ is convex. 
Since $Y$ is compact, $\mathcal{K}_f$ is also compact, so the radial extrema are attained.

Fix $\theta$ such that $\mathcal{K}_f\cap R_{\theta}\neq\varnothing$. Since $\mathcal{K}_f$ is convex, its intersection with the ray $R_{\theta}$ is a compact interval. 
Hence
\begin{equation}
\mathcal{K}_f\cap R_{\theta}
=
\left\{
r \boldsymbol{u}_{\theta}:
r_{\min}(\theta)\leq r\leq r_{\max}(\theta)
\right\}.
\end{equation}
In particular, $\boldsymbol{p}_{\min}(\theta)$ and $\boldsymbol{p}_{\max}(\theta)$ belong to $\mathcal{K}_f$.

Now let $x\in\mathcal{K}_f$ with $x\neq0$. There exists $\theta\in[0,2\pi)$ and $r\geq0$ such that
\begin{equation}
x=r \boldsymbol{u}_{\theta}.
\end{equation}
Since $x\in\mathcal{K}_f\cap R_{\theta}$,
\begin{equation}
r_{\min}(\theta)
\leq r
\leq r_{\max}(\theta).
\end{equation}
Consequently, $x$ lies on the line segment joining $\boldsymbol{p}_{\min}(\theta)$ and $\boldsymbol{p}_{\max}(\theta)$, and therefore
\begin{equation}
x\in\operatorname{conv}(\mathcal{E}).
\end{equation}
Thus
\begin{equation}
\mathcal{K}_f\subseteq\operatorname{conv}(\mathcal{E}).
\end{equation}
Conversely, $\mathcal{E}\subseteq\mathcal{K}_f$ and $\mathcal{K}_f$ is convex. Hence
\begin{equation}
\operatorname{conv}(\mathcal{E})
\subseteq
\mathcal{K}_f.
\end{equation}
The two inclusions establish
\begin{equation}
\mathcal{K}_f
=
\operatorname{conv}(\mathcal{E}).
\end{equation}
\end{proof}
The proposition provides a direct procedure for computing the projected feasible set without explicitly constructing the image $f(\boldsymbol{y})$. 
For each $\theta\in[0,2\pi)$, one solves two optimization problems over the original spectrahedron $Y$, corresponding to the minimum and maximum radial coordinates along the ray $R_{\theta}$. 
The resulting points are mapped to the complex plane and their convex hull is taken:
\begin{equation}
\mathcal{K}_f = \operatorname{conv}
\left\{
r_{\min}(\theta)e^{\mathrm{i}\theta},
r_{\max}(\theta)e^{\mathrm{i}\theta}
:
\theta\in[0,2\pi)
\right\}.
\end{equation}
In terms of the original complex quantity, the radial constraint can equivalently be written as
\begin{equation}
f(\boldsymbol{y})=r e^{\mathrm{i}\theta},
\qquad r\geq0.
\end{equation}
When $\cos\theta\neq0$ or $\sin\theta\neq0$, the radial coordinate is equivalently given by
\begin{equation}
r = \frac{\operatorname{Re}f(\boldsymbol{y})}{\cos\theta} = \frac{\operatorname{Im}f(\boldsymbol{y})}{\sin\theta},
\end{equation}
whenever the corresponding quotient is defined. This formulation makes clear that the optimization computes the radial intersections of the projected feasible set with the rays emanating from the origin.

In practice, since only a finite set of angles $\theta_1,\dots,\theta_m$ is evaluated, the resulting polygon $\operatorname{conv}\{r_{\min}(\theta_k)e^{\mathrm{i}\theta_k}, r_{\max}(\theta_k)e^{\mathrm{i}\theta_k}\}_{k=1}^m$ is, by the proposition, always an \emph{inner} approximation of $\mathcal{K}_f$, with equality recovered only in the limit of dense sampling. 
Assuming, as discussed in the main text, that $|\expval{S(r)}|_\theta$ varies smoothly with $\theta$, this inscribed polygon converges quickly and provides a reliable stand-in for the true projected feasible set $\mathcal{K}_{S(r)}$ shown in the figures.

\FloatBarrier
\twocolumngrid
\bibliography{references}

@article{affleck_large-n_1988,
  title = {Large-n Limit of the {{Heisenberg-Hubbard}} Model: {{Implications}} for High-${T}_{c}$ Superconductors},
  shorttitle = {Large-n Limit of the {{Heisenberg-Hubbard}} Model},
  author = {Affleck, Ian and Marston, J. Brad},
  year = 1988,
  month = mar,
  journal = {Physical Review B},
  volume = {37},
  number = {7},
  pages = {3774--3777},
  publisher = {American Physical Society},
  doi = {10.1103/PhysRevB.37.3774},
  urldate = {2026-06-30}
}

@article{aidelsburger_cold_2021,
  title = {Cold Atoms Meet Lattice Gauge Theory},
  author = {Aidelsburger, Monika and Barbiero, Luca and Bermudez, Alejandro and Chanda, Titas and Dauphin, Alexandre and {Gonz{\'a}lez-Cuadra}, Daniel and Grzybowski, Przemys{\l}aw R. and Hands, Simon and Jendrzejewski, Fred and J{\"u}nemann, Johannes and Juzeli{\=u}nas, Gediminas and Kasper, Valentin and Piga, Angelo and Ran, Shi-Ju and Rizzi, Matteo and Sierra, Germ{\'a}n and Tagliacozzo, Luca and Tirrito, Emanuele and Zache, Torsten V. and Zakrzewski, Jakub and Zohar, Erez and Lewenstein, Maciej},
  year = 2021,
  month = dec,
  journal = {Philosophical Transactions of the Royal Society A: Mathematical, Physical and Engineering Sciences},
  volume = {380},
  number = {2216},
  pages = {20210064},
  issn = {1364-503X},
  doi = {10.1098/rsta.2021.0064},
  urldate = {2026-07-01}
}

@article{aoki_flag_2026,
  title = {{{FLAG}} Review 2024},
  author = {Aoki, Y. and Blum, T. and Collins, S. and Del Debbio, L. and Della Morte, M. and Dimopoulos, P. and Feng, X. and Golterman, M. and Gottlieb, Steven and Gupta, R. and Herdo{\'\i}za, G. and Hernandez, P. and J{\"u}ttner, A. and Kaneko, T. and Lunghi, E. and Meinel, S. and Monahan, C. and Nicholson, A. and Onogi, T. and Petreczky, P. and Portelli, A. and Ramos, A. and Sharpe, S. R. and Simone, J. N. and Sint, S. and Sommer, R. and Tantalo, N. and {Van de Water}, R. and Vaquero, A. and Wenger, U. and Wittig, H.},
  year = 2026,
  month = jan,
  journal = {Physical Review D: Particles and Fields},
  volume = {113},
  number = {1},
  pages = {014508},
  publisher = {American Physical Society},
  doi = {10.1103/nfzp-p5dn},
  collaboration = {Flavour Lattice Averaging Group (FLAG)}
}

@article{balents_spin_2010,
  title = {Spin Liquids in Frustrated Magnets},
  author = {Balents, Leon},
  year = 2010,
  journal = {Nature},
  volume = {464},
  number = {7286},
  pages = {199--208},
  doi = {10.1038/nature08917}
}

@article{banuls_efficient_2017,
  title = {Efficient {{Basis Formulation}} for (1+1)-{{Dimensional SU}}(2) {{Lattice Gauge Theory}}: {{Spectral Calculations}} with {{Matrix Product States}}},
  shorttitle = {Efficient {{Basis Formulation}} for (\$1+1\$)-{{Dimensional SU}}(2) {{Lattice Gauge Theory}}},
  author = {Ba{\~n}uls, Mari Carmen and Cichy, Krzysztof and Cirac, J. Ignacio and Jansen, Karl and K{\"u}hn, Stefan},
  year = 2017,
  month = nov,
  journal = {Physical Review X},
  volume = {7},
  number = {4},
  pages = {041046},
  publisher = {American Physical Society},
  doi = {10.1103/PhysRevX.7.041046},
  urldate = {2026-07-01}
}

@article{banuls_simulating_2020,
  title = {Simulating Lattice Gauge Theories within Quantum Technologies},
  author = {Ba{\~n}uls, Mari Carmen and Blatt, Rainer and Catani, Jacopo and Celi, Alessio and Cirac, Juan Ignacio and Dalmonte, Marcello and Fallani, Leonardo and Jansen, Karl and Lewenstein, Maciej and Montangero, Simone and Muschik, Christine A. and Reznik, Benni and Rico, Enrique and Tagliacozzo, Luca and Van Acoleyen, Karel and Verstraete, Frank and Wiese, Uwe-Jens and Wingate, Matthew and Zakrzewski, Jakub and Zoller, Peter},
  year = 2020,
  month = aug,
  journal = {The European Physical Journal D},
  volume = {74},
  number = {8},
  pages = {165},
  issn = {1434-6079},
  doi = {10.1140/epjd/e2020-100571-8},
  urldate = {2026-07-01},
  langid = {english}
}

@article{banuls_tensor_2018,
  ids = {banuls_tensor_2018-3},
  title = {Tensor {{Network}} Study of the (1+1)-Dimensional {{Thirring Model}}},
  author = {Ba{\~n}uls, Mari Carmen and Cichy, Krzysztof and Kao, Ying-Jer and David Lin, C.-J. and Lin, Yu-Ping and {Tao-Lin Tan}, David},
  editor = {Della Morte, M. and Fritzsch, P. and G{\'a}miz S{\'a}nchez, E. and Pena Ruano, C.},
  year = 2018,
  journal = {EPJ Web of Conferences},
  volume = {175},
  pages = {11017},
  issn = {2100-014X},
  doi = {10.1051/epjconf/201817511017},
  urldate = {2019-06-26}
}

@article{barata_quantum_2023,
  title = {Quantum Simulation of In-Medium {{QCD}} Jets: {{Momentum}} Broadening, Gluon Production, and Entropy Growth},
  shorttitle = {Quantum Simulation of In-Medium {{QCD}} Jets},
  author = {Barata, Jo{\~a}o and Du, Xiaojian and Li, Meijian and Qian, Wenyang and Salgado, Carlos A.},
  year = 2023,
  month = sep,
  journal = {Physical Review D},
  volume = {108},
  number = {5},
  pages = {056023},
  publisher = {American Physical Society},
  doi = {10.1103/PhysRevD.108.056023},
  urldate = {2026-07-01}
}

@article{barthel_solving_2012,
  title = {Solving {{Condensed-Matter Ground-State Problems}} by {{Semidefinite Relaxations}}},
  author = {Barthel, Thomas and H{\"u}bener, Robert},
  year = 2012,
  month = may,
  journal = {Physical Review Letters},
  volume = {108},
  number = {20},
  eprint = {1106.4966},
  pages = {200404},
  doi = {10.1103/PhysRevLett.108.200404},
  urldate = {2026-09-03}
}

@article{bauer_quantum_2023,
  title = {Quantum {{Simulation}} for {{High-Energy Physics}}},
  author = {Bauer, Christian W. and Davoudi, Zohreh and Balantekin, A. Baha and Bhattacharya, Tanmoy and Carena, Marcela and {de Jong}, Wibe A. and Draper, Patrick and {El-Khadra}, Aida and Gemelke, Nate and Hanada, Masanori and Kharzeev, Dmitri and Lamm, Henry and Li, Ying-Ying and Liu, Junyu and Lukin, Mikhail and Meurice, Yannick and Monroe, Christopher and Nachman, Benjamin and Pagano, Guido and Preskill, John and Rinaldi, Enrico and Roggero, Alessandro and Santiago, David I. and Savage, Martin J. and Siddiqi, Irfan and Siopsis, George and Van Zanten, David and Wiebe, Nathan and Yamauchi, Yukari and {Yeter-Aydeniz}, K{\"u}bra and Zorzetti, Silvia},
  year = 2023,
  month = may,
  journal = {PRX Quantum},
  volume = {4},
  number = {2},
  pages = {027001},
  publisher = {American Physical Society},
  doi = {10.1103/PRXQuantum.4.027001},
  urldate = {2026-06-30}
}

@article{baumgratz_lower_2012,
  title = {Lower {{Bounds}} for {{Ground States}} of {{Condensed Matter Systems}}},
  author = {Baumgratz, Tillmann and Plenio, Martin B.},
  year = 2012,
  month = feb,
  journal = {New Journal of Physics},
  volume = {14},
  number = {2},
  pages = {023027},
  issn = {1367-2630},
  doi = {10.1088/1367-2630/14/2/023027},
  urldate = {2023-10-20}
}

@article{bender_variational_2023,
  title = {Variational {{Monte Carlo}} Algorithm for Lattice Gauge Theories with Continuous Gauge Groups: {{A}} Study of (2+1)-Dimensional Compact {{QED}} with Dynamical Fermions at Finite Density},
  shorttitle = {Variational {{Monte Carlo}} Algorithm for Lattice Gauge Theories with Continuous Gauge Groups},
  author = {Bender, Julian and Emonts, Patrick and Cirac, J. Ignacio},
  year = 2023,
  month = nov,
  journal = {Physical Review Research},
  volume = {5},
  number = {4},
  pages = {043128},
  publisher = {American Physical Society},
  doi = {10.1103/PhysRevResearch.5.043128}
}

@article{berenstein_anomalous_2022,
  title = {Anomalous Bootstrap on the Half-Line},
  author = {Berenstein, David and Hulsey, George},
  year = 2022,
  month = aug,
  journal = {Physical Review D},
  volume = {106},
  number = {4},
  pages = {045029},
  publisher = {American Physical Society},
  doi = {10.1103/PhysRevD.106.045029},
  urldate = {2024-07-12}
}

@misc{berenstein_bootstrapping_2021,
  title = {Bootstrapping {{Simple QM Systems}}},
  author = {Berenstein, David and Hulsey, George},
  year = 2021,
  publisher = {arXiv},
  doi = {10.48550/ARXIV.2108.08757},
  urldate = {2026-08-11},
  copyright = {arXiv.org perpetual, non-exclusive license}
}

@article{berenstein_bootstrapping_2022,
  title = {Bootstrapping More {{QM}} Systems},
  author = {Berenstein, David and Hulsey, George},
  year = 2022,
  month = jun,
  journal = {Journal of Physics A: Mathematical and Theoretical},
  volume = {55},
  number = {27},
  pages = {275304},
  publisher = {IOP Publishing},
  issn = {1751-8121},
  doi = {10.1088/1751-8121/ac7118},
  urldate = {2024-05-03},
  langid = {english}
}

@article{berenstein_semidefinite_2023,
  title = {A {{Semidefinite Programming}} Algorithm for the {{Quantum Mechanical Bootstrap}}},
  author = {Berenstein, David and Hulsey, George},
  year = 2023,
  month = may,
  journal = {Physical Review E},
  volume = {107},
  number = {5},
  pages = {L053301},
  issn = {2470-0045, 2470-0053},
  doi = {10.1103/PhysRevE.107.L053301},
  urldate = {2024-07-12}
}

@book{blekherman_semidefinite_2012,
  title = {Semidefinite {{Optimization}} and {{Convex Algebraic Geometry}}},
  editor = {Blekherman, Grigoriy and Parrilo, Pablo A. and Thomas, Rekha R.},
  year = 2012,
  month = dec,
  publisher = {{Society for Industrial and Applied Mathematics}},
  address = {Philadelphia, PA},
  doi = {10.1137/1.9781611972290},
  urldate = {2022-11-23},
  isbn = {978-1-61197-228-3 978-1-61197-229-0},
  langid = {english}
}

@article{borla_confined_2020,
  title = {Confined {{Phases}} of {{One-Dimensional Spinless Fermions Coupled}} to \$\textbraceleft{{Z}}\textbraceright\_\textbraceleft 2\textbraceright\$ {{Gauge Theory}}},
  author = {Borla, Umberto and Verresen, Ruben and Grusdt, Fabian and Moroz, Sergej},
  year = 2020,
  month = mar,
  journal = {Physical Review Letters},
  volume = {124},
  number = {12},
  pages = {120503},
  publisher = {American Physical Society},
  doi = {10.1103/PhysRevLett.124.120503},
  urldate = {2026-09-03}
}

@book{boyd_convex_2004,
  title = {Convex Optimization},
  author = {Boyd, Stephen P. and Vandenberghe, Lieven},
  year = 2004,
  publisher = {Cambridge University Press},
  address = {Cambridge, UK ; New York},
  isbn = {978-0-521-83378-3},
  langid = {english},
  lccn = {QA402.5 .B69 2004}
}

@book{bratteli_operator_1987,
  title = {Operator {{Algebras}} and {{Quantum Statistical Mechanics}} 1},
  author = {Bratteli, Ola and Robinson, Derek W.},
  year = 1987,
  publisher = {Springer},
  address = {Berlin, Heidelberg},
  doi = {10.1007/978-3-662-02520-8},
  urldate = {2026-09-03},
  copyright = {http://www.springer.com/tdm},
  isbn = {978-3-642-05736-6 978-3-662-02520-8}
}

@book{burgdorf_optimization_2016,
  title = {Optimization of {{Polynomials}} in {{Non-Commuting Variables}}},
  author = {Burgdorf, Sabine and Klep, Igor and Povh, Janez},
  year = 2016,
  series = {{{SpringerBriefs}} in {{Mathematics}}},
  publisher = {Springer International Publishing},
  address = {Cham},
  doi = {10.1007/978-3-319-33338-0},
  urldate = {2026-06-30},
  copyright = {http://www.springer.com/tdm},
  isbn = {978-3-319-33336-6 978-3-319-33338-0},
  langid = {english}
}

@article{buyens_matrix_2014,
  title = {Matrix {{Product States}} for {{Gauge Field Theories}}},
  author = {Buyens, Boye and Haegeman, Jutho and Van Acoleyen, Karel and Verschelde, Henri and Verstraete, Frank},
  year = 2014,
  month = aug,
  journal = {Physical Review Letters},
  volume = {113},
  number = {9},
  issn = {0031-9007, 1079-7114},
  doi = {10.1103/PhysRevLett.113.091601},
  urldate = {2018-03-16}
}

@article{buyens_tensor_2015,
  ids = {buyens_tensor_2015-1},
  title = {Tensor Networks for Gauge Field Theories},
  author = {Buyens, Boye and Haegeman, Jutho and Verstraete, Frank and Van Acoleyen, Karel},
  year = 2015,
  journal = {arXiv:1511.04288},
  eprint = {1511.04288},
  primaryclass = {hep-lat},
  adsurl = {https://ui.adsabs.harvard.edu/\#abs/2015arXiv151104288B},
  archiveprefix = {arXiv},
  eid = {arXiv:1511.04288}
}

@article{byrnes_simulating_2006,
  title = {Simulating Lattice Gauge Theories on a Quantum Computer},
  author = {Byrnes, Tim and Yamamoto, Yoshihisa},
  year = 2006,
  month = feb,
  journal = {Physical Review A},
  volume = {73},
  number = {2},
  pages = {022328},
  publisher = {American Physical Society},
  doi = {10.1103/PhysRevA.73.022328},
  urldate = {2026-07-01}
}

@misc{cataldi_hamiltonian_2025,
  title = {Hamiltonian {{Lattice Gauge Theories}}: Emergent Properties from {{Tensor Network}} Methods},
  shorttitle = {Hamiltonian {{Lattice Gauge Theories}}},
  author = {Cataldi, Giovanni},
  year = 2025,
  month = may,
  publisher = {arXiv},
  doi = {10.48550/arXiv.2501.11115},
  urldate = {2026-07-01}
}

@article{chai_scalable_2026,
  title = {Scalable Quantum Algorithm for Meson Scattering in a Lattice Gauge Theory},
  author = {Chai, Yahui and Guo, Yibin and K{\"u}hn, Stefan},
  year = 2026,
  month = sep,
  journal = {Physical Review Research},
  volume = {8},
  number = {3},
  pages = {033298},
  issn = {2643-1564},
  doi = {10.1103/f93h-rlzy},
  urldate = {2026-10-01},
  langid = {english}
}

@article{ciavarella_quantum_2024,
  title = {Quantum {{Simulation}} of {{SU}}(3) {{Lattice Yang-Mills Theory}} at {{Leading Order}} in {{Large}}-${N}_{c}$ {{Expansion}}},
  author = {Ciavarella, Anthony N. and Bauer, Christian W.},
  year = 2024,
  month = sep,
  journal = {Physical Review Letters},
  volume = {133},
  number = {11},
  pages = {111901},
  publisher = {American Physical Society},
  doi = {10.1103/PhysRevLett.133.111901},
  urldate = {2026-08-07}
}

@article{cirac_matrix_2021,
  title = {Matrix Product States and Projected Entangled Pair States: {{Concepts}}, Symmetries, Theorems},
  shorttitle = {Matrix Product States and Projected Entangled Pair States},
  author = {Cirac, J. Ignacio and {P{\'e}rez-Garc{\'i}a}, David and Schuch, Norbert and Verstraete, Frank},
  year = 2021,
  month = dec,
  journal = {Reviews of Modern Physics},
  volume = {93},
  number = {4},
  pages = {045003},
  issn = {0034-6861, 1539-0756},
  doi = {10.1103/RevModPhys.93.045003},
  urldate = {2022-09-13},
  langid = {english}
}

@misc{cobos_real-time_2025,
  title = {Real-Time Dynamics in a (2+1)-{{D}} Gauge Theory: {{The}} Stringy Nature on a Superconducting Quantum Simulator},
  shorttitle = {Real-Time Dynamics in a (2+1)-{{D}} Gauge Theory},
  author = {Cobos, Jes{\'u}s and Fraxanet, Joana and Benito, C{\'e}sar and {di Marcantonio}, Francesco and Rivero, Pedro and Kap{\'a}s, Korn{\'e}l and Werner, Mikl{\'o}s Antal and Legeza, {\"O}rs and Bermudez, Alejandro and Rico, Enrique},
  year = 2025,
  month = jul,
  number = {arXiv:2507.08088},
  eprint = {2507.08088},
  primaryclass = {quant-ph},
  publisher = {arXiv},
  doi = {10.48550/arXiv.2507.08088},
  urldate = {2025-09-08},
  archiveprefix = {arXiv}
}

@article{cochran_visualizing_2025,
  ids = {Cochran2025VisualizingDynamicsCharges-2},
  title = {Visualizing Dynamics of Charges and Strings in (2 + 1){{D}} Lattice Gauge Theories},
  author = {Cochran, T. A. and Jobst, B. and Rosenberg, E. and Lensky, Y. D. and Gyawali, G. and Eassa, N. and Will, M. and Szasz, A. and Abanin, D. and Acharya, R. and Aghababaie Beni, L. and Andersen, T. I. and Ansmann, M. and Arute, F. and Arya, K. and Asfaw, A. and Atalaya, J. and Babbush, R. and Ballard, B. and Bardin, J. C. and Bengtsson, A. and Bilmes, A. and Bourassa, A. and Bovaird, J. and Broughton, M. and Browne, D. A. and Buchea, B. and Buckley, B. B. and Burger, T. and Burkett, B. and Bushnell, N. and Cabrera, A. and Campero, J. and Chang, H.-S. and Chen, Z. and Chiaro, B. and Claes, J. and Cleland, A. Y. and Cogan, J. and Collins, R. and Conner, P. and Courtney, W. and Crook, A. L. and Curtin, B. and Das, S. and Demura, S. and De Lorenzo, L. and Di Paolo, A. and Donohoe, P. and Drozdov, I. and Dunsworth, A. and Eickbusch, A. and Elbag, A. Moshe and Elzouka, M. and Erickson, C. and Ferreira, V. S. and Burgos, L. Flores and Forati, E. and Fowler, A. G. and Foxen, B. and Ganjam, S. and Gasca, R. and Genois, {\'E} and Giang, W. and Gilboa, D. and Gosula, R. and Grajales Dau, A. and Graumann, D. and Greene, A. and Gross, J. A. and Habegger, S. and Hansen, M. and Harrigan, M. P. and Harrington, S. D. and Heu, P. and Higgott, O. and Hilton, J. and Huang, H.-Y. and Huff, A. and Huggins, W. and Jeffrey, E. and Jiang, Z. and Jones, C. and Joshi, C. and Juhas, P. and Kafri, D. and Kang, H. and Karamlou, A. H. and Kechedzhi, K. and Khaire, T. and Khattar, T. and Khezri, M. and Kim, S. and Klimov, P. and Kobrin, B. and Korotkov, A. and Kostritsa, F. and Kreikebaum, J. and Kurilovich, V. and Landhuis, D. and {Lange-Dei}, T. and Langley, B. and Lau, K.-M. and Ledford, J. and Lee, K. and Lester, B. and Le Guevel, L. and Li, W. and Lill, A. T. and Livingston, W. and Locharla, A. and Lundahl, D. and Lunt, A. and Madhuk, S. and Maloney, A. and Mandr{\`a}, S. and Martin, L. and Martin, O. and Maxfield, C. and McClean, J. and McEwen, M. and Meeks, S. and Megrant, A. and Miao, K. and Molavi, R. and Molina, S. and Montazeri, S. and Movassagh, R. and Neill, C. and Newman, M. and Nguyen, A. and Nguyen, M. and Ni, C.-H. and Ottosson, K. and Pizzuto, A. and Potter, R. and Pritchard, O. and Quintana, C. and Ramachandran, G. and Reagor, M. and Rhodes, D. and Roberts, G. and Sankaragomathi, K. and Satzinger, K. and Schurkus, H. and Shearn, M. and Shorter, A. and Shutty, N. and Shvarts, V. and Sivak, V. and Small, S. and Smith, W. C. and Springer, S. and Sterling, G. and Suchard, J. and Sztein, A. and Thor, D. and Torunbalci, M. and Vaishnav, A. and Vargas, J. and Vdovichev, S. and Vidal, G. and Vollgraff Heidweiller, C. and Waltman, S. and Wang, S. X. and Ware, B. and White, T. and Wong, K. and Woo, B. W. K. and Xing, C. and Yao, Z. Jamie and Yeh, P. and Ying, B. and Yoo, J. and Yosri, N. and Young, G. and Zalcman, A. and Zhang, Y. and Zhu, N. and Zobrist, N. and Boixo, S. and Kelly, J. and Lucero, E. and Chen, Y. and Smelyanskiy, V. and Neven, H. and {Gammon-Smith}, A. and Pollmann, F. and Knap, M. and Roushan, P.},
  year = 2025,
  month = jun,
  journal = {Nature},
  volume = {642},
  number = {8067},
  pages = {315--320},
  publisher = {Nature Publishing Group},
  issn = {1476-4687},
  doi = {10.1038/s41586-025-08999-9},
  urldate = {2025-07-25},
  copyright = {2025 The Author(s)},
  langid = {english}
}

@article{creutz_monte_1983,
  title = {Monte {{Carlo}} Computations in Lattice Gauge Theories},
  author = {Creutz, Michael and Jacobs, Laurence and Rebbi, Claudio},
  year = 1983,
  month = apr,
  journal = {Physics Reports},
  volume = {95},
  number = {4},
  pages = {201--282},
  issn = {0370-1573},
  doi = {10.1016/0370-1573(83)90016-9},
  urldate = {2026-05-08}
}

@article{dalmonte_lattice_2016,
  title = {Lattice Gauge Theory Simulations in the Quantum Information Era},
  author = {Dalmonte, M. and Montangero, S.},
  year = 2016,
  month = jul,
  journal = {Contemporary Physics},
  volume = {57},
  number = {3},
  pages = {388--412},
  publisher = {Taylor \& Francis},
  issn = {0010-7514},
  doi = {10.1080/00107514.2016.1151199},
  urldate = {2026-07-01}
}

@article{davoudi_scattering_2024,
  title = {Scattering Wave Packets of Hadrons in Gauge Theories: {{Preparation}} on a Quantum Computer},
  shorttitle = {Scattering Wave Packets of Hadrons in Gauge Theories},
  author = {Davoudi, Zohreh and Hsieh, Chung-Chun and Kadam, Saurabh V.},
  year = 2024,
  month = nov,
  journal = {Quantum},
  volume = {8},
  pages = {1520},
  publisher = {Verein zur F\"orderung des Open Access Publizierens in den Quantenwissenschaften},
  doi = {10.22331/q-2024-11-11-1520},
  urldate = {2026-09-30},
  langid = {british}
}

@article{di_meglio_quantum_2024,
  title = {Quantum {{Computing}} for {{High-Energy Physics}}: {{State}} of the {{Art}} and {{Challenges}}},
  shorttitle = {Quantum {{Computing}} for {{High-Energy Physics}}},
  author = {Di Meglio, Alberto and Jansen, Karl and Tavernelli, Ivano and Alexandrou, Constantia and Arunachalam, Srinivasan and Bauer, Christian W. and Borras, Kerstin and Carrazza, Stefano and Crippa, Arianna and Croft, Vincent and De Putter, Roland and Delgado, Andrea and Dunjko, Vedran and Egger, Daniel J. and {Fern{\'a}ndez-Combarro}, Elias and Fuchs, Elina and Funcke, Lena and {Gonz{\'a}lez-Cuadra}, Daniel and Grossi, Michele and Halimeh, Jad C. and Holmes, Zo{\"e} and K{\"u}hn, Stefan and Lacroix, Denis and Lewis, Randy and Lucchesi, Donatella and Martinez, Miriam Lucio and Meloni, Federico and Mezzacapo, Antonio and Montangero, Simone and Nagano, Lento and Pascuzzi, Vincent R. and Radescu, Voica and Ortega, Enrique Rico and Roggero, Alessandro and Schuhmacher, Julian and Seixas, Joao and Silvi, Pietro and Spentzouris, Panagiotis and Tacchino, Francesco and Temme, Kristan and Terashi, Koji and Tura, Jordi and T{\"u}ys{\"u}z, Cenk and Vallecorsa, Sofia and Wiese, Uwe-Jens and Yoo, Shinjae and Zhang, Jinglei},
  year = 2024,
  month = aug,
  journal = {PRX Quantum},
  volume = {5},
  number = {3},
  pages = {037001},
  issn = {2691-3399},
  doi = {10.1103/PRXQuantum.5.037001},
  urldate = {2026-08-11},
  langid = {english}
}

@article{elitzur_impossibility_1975,
  title = {Impossibility of Spontaneously Breaking Local Symmetries},
  author = {Elitzur, S.},
  year = 1975,
  month = dec,
  journal = {Physical Review D},
  volume = {12},
  number = {12},
  pages = {3978--3982},
  publisher = {American Physical Society},
  doi = {10.1103/PhysRevD.12.3978},
  urldate = {2026-07-20}
}

@article{emonts_variational_2020,
  ids = {emonts_variational_2020-1},
  title = {Variational {{Monte Carlo}} Simulation with Tensor Networks of a Pure $\mathds{Z}_3$ Gauge Theory in $( 2 + 1 )$ {{D}}},
  author = {Emonts, Patrick and Ba{\~n}uls, Mari Carmen and Cirac, Ignacio and Zohar, Erez},
  year = 2020,
  month = oct,
  journal = {Physical Review D},
  volume = {102},
  number = {7},
  pages = {074501},
  publisher = {American Physical Society (APS)},
  issn = {2470-0010, 2470-0029},
  doi = {10.1103/PhysRevD.102.074501},
  urldate = {2020-12-03},
  langid = {english}
}

@article{farrell_scalable_2024,
  title = {Scalable {{Circuits}} for {{Preparing Ground States}} on {{Digital Quantum Computers}}: {{The Schwinger Model Vacuum}} on 100 {{Qubits}}},
  shorttitle = {Scalable {{Circuits}} for {{Preparing Ground States}} on {{Digital Quantum Computers}}},
  author = {Farrell, Roland C. and Illa, Marc and Ciavarella, Anthony N. and Savage, Martin J.},
  year = 2024,
  month = apr,
  journal = {PRX Quantum},
  volume = {5},
  number = {2},
  pages = {020315},
  publisher = {American Physical Society},
  doi = {10.1103/PRXQuantum.5.020315},
  urldate = {2026-07-01}
}

@article{fawzi_certified_2024,
  title = {Certified Algorithms for Equilibrium States of Local Quantum {{Hamiltonians}}},
  author = {Fawzi, Hamza and Fawzi, Omar and Scalet, Samuel O.},
  year = 2024,
  month = aug,
  journal = {Nature Communications},
  volume = {15},
  number = {1},
  eprint = {2311.18706},
  pages = {7394},
  doi = {10.1038/s41467-024-51592-3},
  urldate = {2026-09-03}
}

@article{felser_efficient_2021,
  title = {Efficient {{Tensor Network Ansatz}} for {{High-Dimensional Quantum Many-Body Problems}}},
  author = {Felser, Timo and Notarnicola, Simone and Montangero, Simone},
  year = 2021,
  month = apr,
  journal = {Physical Review Letters},
  volume = {126},
  number = {17},
  pages = {170603},
  publisher = {American Physical Society},
  doi = {10.1103/PhysRevLett.126.170603},
  urldate = {2025-05-24}
}

@book{fradkin_field_2013,
  title = {Field Theories of Condensed Matter Physics},
  author = {Fradkin, Eduardo},
  year = 2013,
  publisher = {Cambridge University Press}
}

@article{fradkin_order_1978,
  title = {Order and Disorder in Gauge Systems and Magnets},
  author = {Fradkin, Eduardo and Susskind, Leonard},
  year = 1978,
  month = may,
  journal = {Physical Review D},
  volume = {17},
  number = {10},
  pages = {2637--2658},
  publisher = {American Physical Society},
  doi = {10.1103/PhysRevD.17.2637},
  urldate = {2026-09-12}
}

@article{gatermann_symmetry_2004,
  title = {Symmetry Groups, Semidefinite Programs, and Sums of Squares},
  author = {Gatermann, Karin and Parrilo, Pablo A.},
  year = 2004,
  month = sep,
  journal = {Journal of Pure and Applied Algebra},
  volume = {192},
  number = {1},
  pages = {95--128},
  issn = {0022-4049},
  doi = {10.1016/j.jpaa.2003.12.011},
  urldate = {2026-05-08}
}

@book{gattringer_quantum_2009,
  title = {Quantum Chromodynamics on the Lattice: An Introductory Presentation},
  author = {Gattringer, Christof and Lang, Christian},
  year = 2009,
  volume = {788},
  publisher = {Springer Science \& Business Media}
}

@article{gonzalez-cuadra_observation_2025,
  title = {Observation of String Breaking on a (2 + 1){{D Rydberg}} Quantum Simulator},
  author = {{Gonz{\'a}lez-Cuadra}, Daniel and Hamdan, Majd and Zache, Torsten V. and Braverman, Boris and Kornja{\v c}a, Milan and Lukin, Alexander and Cant{\'u}, Sergio H. and Liu, Fangli and Wang, Sheng-Tao and Keesling, Alexander and Lukin, Mikhail D. and Zoller, Peter and Bylinskii, Alexei},
  year = 2025,
  month = jun,
  journal = {Nature},
  volume = {642},
  number = {8067},
  pages = {321--326},
  publisher = {Nature Publishing Group},
  issn = {1476-4687},
  doi = {10.1038/s41586-025-09051-6},
  urldate = {2026-07-02},
  copyright = {2025 The Author(s), under exclusive licence to Springer Nature Limited},
  langid = {english}
}

@article{gross_50_2023,
  title = {50 {{Years}} of Quantum Chromodynamics},
  author = {Gross, Franz and Klempt, Eberhard and Brodsky, Stanley J. and Buras, Andrzej J. and Burkert, Volker D. and Heinrich, Gudrun and Jakobs, Karl and Meyer, Curtis A. and Orginos, Kostas and Strickland, Michael and Stachel, Johanna and Zanderighi, Giulia and Brambilla, Nora and {Braun-Munzinger}, Peter and Britzger, Daniel and Capstick, Simon and Cohen, Tom and Crede, Volker and Constantinou, Martha and Davies, Christine and Del Debbio, Luigi and Denig, Achim and DeTar, Carleton and Deur, Alexandre and Dokshitzer, Yuri and Dosch, Hans G{\"u}nter and Dudek, Jozef and Dunford, Monica and Epelbaum, Evgeny and Escobedo, Miguel A. and Fritzsch, Harald and Fukushima, Kenji and Gambino, Paolo and Gillberg, Dag and Gottlieb, Steven and Grafstrom, Per and Grazzini, Massimiliano and Grube, Boris and Guskov, Alexey and Iijima, Toru and Ji, Xiangdong and Karsch, Frithjof and Kluth, Stefan and Kogut, John B. and Krauss, Frank and Kumano, Shunzo and Leinweber, Derek and Leutwyler, Heinrich and Li, Hai-Bo and Li, Yang and Malaescu, Bogdan and Mariotti, Chiara and Maris, Pieter and Marzani, Simone and Melnitchouk, Wally and Messchendorp, Johan and Meyer, Harvey and Mitchell, Ryan Edward and Mondal, Chandan and Nerling, Frank and Neubert, Sebastian and Pappagallo, Marco and Pastore, Saori and Pel{\'a}ez, Jos{\'e} R. and Puckett, Andrew and Qiu, Jianwei and Rabbertz, Klaus and Ramos, Alberto and Rossi, Patrizia and Rustamov, Anar and Sch{\"a}fer, Andreas and Scherer, Stefan and Schindler, Matthias and Schramm, Steven and Shifman, Mikhail and Shuryak, Edward and Sj{\"o}strand, Torbj{\"o}rn and Sterman, George and Stewart, Iain W. and Stroth, Joachim and Swanson, Eric and {de T{\'e}ramond}, Guy F. and Thoma, Ulrike and Vairo, Antonio and {van Dyk}, Danny and Vary, James and Virto, Javier and Vos, Marcel and Weiss, Christian and Wobisch, Markus and Wu, Sau Lan and Young, Christopher and Yuan, Feng and Zhao, Xingbo and Zhou, Xiaorong},
  year = 2023,
  month = dec,
  journal = {The European Physical Journal C},
  volume = {83},
  number = {12},
  pages = {1125},
  issn = {1434-6052},
  doi = {10.1140/epjc/s10052-023-11949-2},
  urldate = {2026-07-31},
  langid = {english}
}

@article{hauschild_tensor_2024,
  title = {Tensor Network {{Python}} ({{TeNPy}}) Version 1},
  author = {Hauschild, Johannes and Unfried, Jakob and Anand, Sajant and Andrews, Bartholomew and Bintz, Marcus and Borla, Umberto and Divic, Stefan and Drescher, Markus and Geiger, Jan and Hefel, Martin and H{\'e}mery, K{\'e}vin and Kadow, Wilhelm and Kemp, Jack and Kirchner, Nico and Liu, Vincent S. and M{\"o}ller, Gunnar and Parker, Daniel and Rader, Michael and Romen, Anton and Scalet, Samuel and Schoonderwoerd, Leon and Schulz, Maximilian and Soejima, Tomohiro and Thoma, Philipp and Wu, Yantao and Zechmann, Philip and Zweng, Ludwig and Mong, Roger S. K. and Zaletel, Michael P. and Pollmann, Frank},
  year = 2024,
  journal = {SciPost Phys. Codebases},
  pages = {41},
  publisher = {SciPost},
  doi = {10.21468/SciPostPhysCodeb.41}
}

@article{hooft_magnetic_1974,
  title = {Magnetic Monopoles in Unified Gauge Theories},
  author = {Hooft, G.'t},
  year = 1974,
  journal = {Nuclear Physics B},
  volume = {79},
  number = {2},
  pages = {276--284},
  issn = {0550-3213},
  doi = {10.1016/0550-3213(74)90486-6}
}

@article{horn_hamiltonian_1979,
  title = {Hamiltonian Approach to \${{Z}}({{N}})\$ Lattice Gauge Theories},
  author = {Horn, D. and Weinstein, M. and Yankielowicz, S.},
  year = 1979,
  month = jun,
  journal = {Physical Review D},
  volume = {19},
  number = {12},
  pages = {3715--3731},
  publisher = {American Physical Society},
  doi = {10.1103/PhysRevD.19.3715},
  urldate = {2026-09-12}
}

@article{hymabaccus_repndecomp_2020,
  title = {{{RepnDecomp}}: {{A GAP}} Package for Decomposing Linear Representations of Finite Groups},
  shorttitle = {{{RepnDecomp}}},
  author = {Hymabaccus, Kaashif and Pasechnik, Dmitrii},
  year = 2020,
  month = jun,
  journal = {Journal of Open Source Software},
  volume = {5},
  number = {50},
  pages = {1835},
  issn = {2475-9066},
  doi = {10.21105/joss.01835},
  urldate = {2026-08-10},
  langid = {english}
}

@misc{inacio_unconstrained_2026,
  title = {Unconstrained Compact Lattice {{QED}}\$\_\textbraceleft 2+1\textbraceright\$ Coupled to Phonons: {{Gauss}} Sectors, Orthogonal Semimetal, and Deconfined Criticality},
  shorttitle = {Unconstrained Compact Lattice {{QED}}\$\_\textbraceleft 2+1\textbraceright\$ Coupled to Phonons},
  author = {In{\'a}cio, Jo{\~a}o C. and Assaad, Fakher F.},
  year = 2026,
  month = sep,
  number = {arXiv:2609.20502},
  eprint = {2609.20502},
  primaryclass = {cond-mat.str-el},
  publisher = {arXiv},
  doi = {10.48550/arXiv.2609.20502},
  urldate = {2026-10-05},
  archiveprefix = {arXiv}
}

@article{jansen_mapping_2026,
  title = {Mapping {{Phase Diagrams}} of {{Quantum Spin Systems}} through {{Semidefinite-Programming Relaxations}}},
  author = {Jansen, David and Farina, Donato and Mortimer, Luke and Heightman, Timothy and Leitherer, Andreas and Mujal, Pere and Wang, Jie and Ac{\'i}n, Antonio},
  year = 2026,
  month = feb,
  journal = {Physical Review Letters},
  volume = {136},
  number = {5},
  pages = {050401},
  issn = {0031-9007, 1079-7114},
  doi = {10.1103/j9rb-tnj4},
  urldate = {2026-04-21},
  langid = {english}
}

@article{jordan_uber_1928,
  title = {{\"Uber das Paulische \"Aquivalenzverbot}},
  author = {Jordan, P. and Wigner, E.},
  year = 1928,
  month = sep,
  journal = {Zeitschrift f\"ur Physik},
  volume = {47},
  number = {9},
  pages = {631--651},
  issn = {0044-3328},
  doi = {10.1007/BF01331938},
  urldate = {2026-08-04},
  langid = {ngerman}
}

@article{kebric_confinement_2021,
  title = {Confinement and {{Mott Transitions}} of {{Dynamical Charges}} in {{One-Dimensional Lattice Gauge Theories}}},
  author = {Kebri{\v c}, Matja{\v z} and Barbiero, Luca and Reinmoser, Christian and Schollw{\"o}ck, Ulrich and Grusdt, Fabian},
  year = 2021,
  month = oct,
  journal = {Physical Review Letters},
  volume = {127},
  number = {16},
  pages = {167203},
  publisher = {American Physical Society},
  doi = {10.1103/PhysRevLett.127.167203},
  urldate = {2026-09-03}
}

@article{kitaev_fault-tolerant_2003,
  title = {Fault-Tolerant Quantum Computation by Anyons},
  author = {Kitaev, {\relax A.Yu}.},
  year = 2003,
  month = jan,
  journal = {Annals of Physics},
  volume = {303},
  number = {1},
  pages = {2--30},
  issn = {00034916},
  doi = {10.1016/S0003-4916(02)00018-0},
  urldate = {2018-05-02}
}

@article{klco_standard_2022,
  title = {Standard Model Physics and the Digital Quantum Revolution: Thoughts about the Interface},
  shorttitle = {Standard Model Physics and the Digital Quantum Revolution},
  author = {Klco, Natalie and Roggero, Alessandro and Savage, Martin J},
  year = 2022,
  month = may,
  journal = {Reports on Progress in Physics},
  volume = {85},
  number = {6},
  pages = {064301},
  publisher = {IOP Publishing},
  issn = {0034-4885},
  doi = {10.1088/1361-6633/ac58a4},
  urldate = {2026-07-01},
  langid = {english}
}

@article{kogut_hamiltonian_1975,
  title = {Hamiltonian Formulation of {{Wilson}}'s Lattice Gauge Theories},
  author = {Kogut, John and Susskind, Leonard},
  year = 1975,
  month = jan,
  journal = {Physical Review D},
  volume = {11},
  number = {2},
  pages = {395--408},
  publisher = {American Physical Society},
  doi = {10.1103/PhysRevD.11.395},
  urldate = {2026-05-08}
}

@article{kogut_introduction_1979,
  title = {An Introduction to Lattice Gauge Theory and Spin Systems},
  author = {Kogut, John B.},
  year = 1979,
  month = oct,
  journal = {Reviews of Modern Physics},
  volume = {51},
  number = {4},
  pages = {659--713},
  publisher = {American Physical Society},
  doi = {10.1103/RevModPhys.51.659},
  urldate = {2026-09-12}
}

@article{lasserre_convergent_2006,
  title = {Convergent {{SDP}}-{{Relaxations}} in {{Polynomial Optimization}} with {{Sparsity}}},
  author = {Lasserre, Jean B.},
  year = 2006,
  month = jan,
  journal = {SIAM Journal on Optimization},
  volume = {17},
  number = {3},
  pages = {822--843},
  publisher = {{Society for Industrial and Applied Mathematics}},
  issn = {1052-6234},
  doi = {10.1137/05064504X},
  urldate = {2023-07-14}
}

@article{lasserre_global_2001,
  title = {Global {{Optimization}} with {{Polynomials}} and the {{Problem}} of {{Moments}}},
  author = {Lasserre, Jean B.},
  year = 2001,
  month = jan,
  journal = {SIAM Journal on Optimization},
  volume = {11},
  number = {3},
  pages = {796--817},
  publisher = {{Society for Industrial and Applied Mathematics}},
  issn = {1052-6234},
  doi = {10.1137/S1052623400366802},
  urldate = {2023-05-04}
}

@article{lawrence_semidefinite_2023,
  title = {Semidefinite Programs at Finite Fermion Density},
  author = {Lawrence, Scott},
  year = 2023,
  month = may,
  journal = {Physical Review D},
  volume = {107},
  number = {9},
  pages = {094511},
  publisher = {American Physical Society},
  doi = {10.1103/PhysRevD.107.094511},
  urldate = {2024-11-07}
}

@article{lee_doping_2006,
  title = {Doping a {{Mott}} Insulator: {{Physics}} of High-Temperature Superconductivity},
  shorttitle = {Doping a {{Mott}} Insulator},
  author = {Lee, Patrick A. and Nagaosa, Naoto and Wen, Xiao-Gang},
  year = 2006,
  month = jan,
  journal = {Reviews of Modern Physics},
  volume = {78},
  number = {1},
  pages = {17--85},
  publisher = {American Physical Society},
  doi = {10.1103/RevModPhys.78.17},
  urldate = {2026-06-30}
}

@misc{li_bootstrapping_2024,
  title = {Bootstrapping the {{Abelian Lattice Gauge Theories}}},
  author = {Li, Zhijin and Zhou, Shutong},
  year = 2024,
  month = apr,
  urldate = {2026-07-29},
  langid = {english}
}

@article{lumia_two-dimensional_2022,
  title = {Two-{{Dimensional}} \$\textbraceleft\textbackslash mathbb\textbraceleft{{Z}}\textbraceright\textbraceright\_\textbraceleft 2\textbraceright\$ {{Lattice Gauge Theory}} on a {{Near-Term Quantum Simulator}}: {{Variational Quantum Optimization}}, {{Confinement}}, and {{Topological Order}}},
  shorttitle = {Two-{{Dimensional}} \$\textbraceleft\textbackslash mathbb\textbraceleft{{Z}}\textbraceright\textbraceright\_\textbraceleft 2\textbraceright\$ {{Lattice Gauge Theory}} on a {{Near-Term Quantum Simulator}}},
  author = {Lumia, Luca and Torta, Pietro and Mbeng, Glen B. and Santoro, Giuseppe E. and Ercolessi, Elisa and Burrello, Michele and Wauters, Matteo M.},
  year = 2022,
  month = apr,
  journal = {PRX Quantum},
  volume = {3},
  number = {2},
  pages = {020320},
  publisher = {American Physical Society},
  doi = {10.1103/PRXQuantum.3.020320},
  urldate = {2026-09-12}
}

@article{magnifico_lattice_2021,
  title = {Lattice Quantum Electrodynamics in (3+1)-Dimensions at Finite Density with Tensor Networks},
  author = {Magnifico, Giuseppe and Felser, Timo and Silvi, Pietro and Montangero, Simone},
  year = 2021,
  month = jun,
  journal = {Nature Communications},
  volume = {12},
  number = {1},
  pages = {3600},
  publisher = {Nature Publishing Group},
  issn = {2041-1723},
  doi = {10.1038/s41467-021-23646-3},
  urldate = {2022-09-09},
  copyright = {2021 The Author(s)},
  langid = {english}
}

@article{magnifico_tensor_2025,
  title = {Tensor Networks for Lattice Gauge Theories beyond One Dimension},
  author = {Magnifico, Giuseppe and Cataldi, Giovanni and Rigobello, Marco and Majcen, Peter and Jaschke, Daniel and Silvi, Pietro and Montangero, Simone},
  year = 2025,
  month = aug,
  journal = {Communications Physics},
  volume = {8},
  number = {1},
  pages = {322},
  publisher = {Nature Publishing Group},
  issn = {2399-3650},
  doi = {10.1038/s42005-025-02125-x},
  urldate = {2026-07-01},
  copyright = {2025 The Author(s)},
  langid = {english}
}

@article{martinez_real-time_2016,
  title = {Real-Time Dynamics of Lattice Gauge Theories with a Few-Qubit Quantum Computer},
  author = {Martinez, Esteban A. and Muschik, Christine A. and Schindler, Philipp and Nigg, Daniel and Erhard, Alexander and Heyl, Markus and Hauke, Philipp and Dalmonte, Marcello and Monz, Thomas and Zoller, Peter and Blatt, Rainer},
  year = 2016,
  month = jun,
  journal = {Nature},
  volume = {534},
  number = {7608},
  pages = {516--519},
  publisher = {Nature Publishing Group},
  issn = {1476-4687},
  doi = {10.1038/nature18318},
  urldate = {2025-01-14},
  copyright = {2016 Springer Nature Limited},
  langid = {english}
}

@article{mazziotti_first-order_2004,
  title = {First-Order Semidefinite Programming for the Direct Determination of Two-Electron Reduced Density Matrices with Application to Many-Electron Atoms and Molecules},
  author = {Mazziotti, David A.},
  year = 2004,
  month = dec,
  journal = {The Journal of Chemical Physics},
  volume = {121},
  number = {22},
  pages = {10957--10966},
  issn = {0021-9606},
  doi = {10.1063/1.1810134},
  urldate = {2026-06-24}
}

@article{mazziotti_two-electron_2012,
  title = {Two-{{Electron Reduced Density Matrix}} as the {{Basic Variable}} in {{Many-Electron Quantum Chemistry}} and {{Physics}}},
  author = {Mazziotti, David A.},
  year = 2012,
  month = jan,
  journal = {Chemical Reviews},
  volume = {112},
  number = {1},
  pages = {244--262},
  doi = {10.1021/cr2000493},
  urldate = {2026-09-03}
}

@article{mildenberger_confinement_2025,
  title = {Confinement in a  ${{\mathbb{Z}}}_{2}$ Lattice Gauge Theory on a Quantum Computer},
  author = {Mildenberger, Julius and Mruczkiewicz, Wojciech and Halimeh, Jad C. and Jiang, Zhang and Hauke, Philipp},
  year = 2025,
  month = feb,
  journal = {Nature Physics},
  volume = {21},
  number = {2},
  pages = {312--317},
  issn = {1745-2473, 1745-2481},
  doi = {10.1038/s41567-024-02723-6},
  urldate = {2025-02-20},
  langid = {english}
}

@misc{mortimer_bounding_2026,
  title = {Bounding Many-Body Properties under Partial Information and Finite Measurement Statistics},
  author = {Mortimer, Luke and Zambrano, Leonardo and Ac{\'i}n, Antonio and Farina, Donato},
  year = 2026,
  month = jan,
  number = {arXiv:2601.10408},
  eprint = {2601.10408},
  primaryclass = {quant-ph},
  publisher = {arXiv},
  doi = {10.48550/arXiv.2601.10408},
  urldate = {2026-06-17},
  archiveprefix = {arXiv}
}

@article{mortimer_certifying_2025,
  title = {Certifying Steady-State Properties of Open Quantum Systems},
  author = {Mortimer, Luke and Farina, Donato and Di Bello, Grazia and Jansen, David and Leitherer, Andreas and Mujal, Pere and Ac{\'i}n, Antonio},
  year = 2025,
  month = sep,
  journal = {Physical Review Research},
  volume = {7},
  number = {3},
  pages = {033237},
  publisher = {American Physical Society},
  doi = {10.1103/hbrt-cn8q},
  urldate = {2025-10-29}
}

@article{nakata_variational_2001,
  title = {Variational Calculations of Fermion Second-Order Reduced Density Matrices by Semidefinite Programming Algorithm},
  author = {Nakata, Maho and Nakatsuji, Hiroshi and Ehara, Masahiro and Fukuda, Mituhiro and Nakata, Kazuhide and Fujisawa, Katsuki},
  year = 2001,
  month = may,
  journal = {The Journal of Chemical Physics},
  volume = {114},
  number = {19},
  pages = {8282--8292},
  doi = {10.1063/1.1360199},
  urldate = {2026-09-03}
}

@article{nambu_strings_1974,
  title = {Strings, Monopoles, and Gauge Fields},
  author = {Nambu, Y.},
  year = 1974,
  month = dec,
  journal = {Physical Review D},
  volume = {10},
  number = {12},
  pages = {4262--4268},
  publisher = {American Physical Society},
  doi = {10.1103/PhysRevD.10.4262},
  urldate = {2026-07-31}
}

@article{navascues_bounding_2007,
  title = {Bounding the {{Set}} of {{Quantum Correlations}}},
  author = {Navascu{\'e}s, Miguel and Pironio, Stefano and Ac{\'i}n, Antonio},
  year = 2007,
  month = jan,
  journal = {Physical Review Letters},
  volume = {98},
  number = {1},
  pages = {010401},
  issn = {0031-9007, 1079-7114},
  doi = {10.1103/PhysRevLett.98.010401},
  urldate = {2026-05-08},
  copyright = {http://link.aps.org/licenses/aps-default-license},
  langid = {english}
}

@article{navascues_convergent_2008,
  title = {A Convergent Hierarchy of Semidefinite Programs Characterizing the Set of Quantum Correlations},
  author = {Navascu{\'e}s, Miguel and Pironio, Stefano and Ac{\'i}n, Antonio},
  year = 2008,
  month = jul,
  journal = {New Journal of Physics},
  volume = {10},
  number = {7},
  pages = {073013},
  issn = {1367-2630},
  doi = {10.1088/1367-2630/10/7/073013},
  urldate = {2023-05-04},
  langid = {english}
}

@book{nesterov_lectures_2018,
  title = {Lectures on {{Convex Optimization}}},
  author = {Nesterov, Yurii},
  year = 2018,
  series = {Springer {{Optimization}} and {{Its Applications}}},
  volume = {137},
  publisher = {Springer International Publishing},
  address = {Cham},
  doi = {10.1007/978-3-319-91578-4},
  urldate = {2026-05-05},
  copyright = {http://www.springer.com/tdm},
  isbn = {978-3-319-91577-7 978-3-319-91578-4},
  langid = {english}
}

@misc{noauthor_mosek_nodate,
  title = {Mosek {{ApS}}},
  urldate = {2026-05-05},
  howpublished = {https://www.mosek.com/}
}

@book{peskin_introduction_1995,
  title = {An Introduction to Quantum Field Theory},
  author = {Peskin, Michael Edward and Schroeder, Daniel V.},
  year = 1995,
  publisher = {Addison-Wesley Pub. Co},
  address = {Reading, Mass},
  isbn = {978-0-201-50397-5}
}

@article{pironio_convergent_2010,
  title = {Convergent {{Relaxations}} of {{Polynomial Optimization Problems}} with {{Noncommuting Variables}}},
  author = {Pironio, S. and Navascu{\'e}s, M. and Ac{\'i}n, A.},
  year = 2010,
  month = jan,
  journal = {SIAM Journal on Optimization},
  volume = {20},
  number = {5},
  pages = {2157--2180},
  publisher = {{Society for Industrial and Applied Mathematics}},
  issn = {1052-6234},
  doi = {10.1137/090760155},
  urldate = {2026-06-30}
}

@article{polyakov_compact_1975,
  title = {Compact Gauge Fields and the Infrared Catastrophe},
  author = {Polyakov, A. M.},
  year = 1975,
  month = oct,
  journal = {Physics Letters B},
  volume = {59},
  number = {1},
  pages = {82--84},
  issn = {0370-2693},
  doi = {10.1016/0370-2693(75)90162-8},
  urldate = {2026-07-31}
}

@article{polyakov_quark_1977,
  title = {Quark Confinement and Topology of Gauge Theories},
  author = {Polyakov, A. M.},
  year = 1977,
  month = mar,
  journal = {Nuclear Physics B},
  volume = {120},
  number = {3},
  pages = {429--458},
  issn = {0550-3213},
  doi = {10.1016/0550-3213(77)90086-4},
  urldate = {2026-07-31}
}

@article{rico_tensor_2014,
  title = {Tensor {{Networks}} for {{Lattice Gauge Theories}} and {{Atomic Quantum Simulation}}},
  author = {Rico, E. and Pichler, T. and Dalmonte, M. and Zoller, P. and Montangero, S.},
  year = 2014,
  month = may,
  journal = {Physical Review Letters},
  volume = {112},
  number = {20},
  pages = {201601},
  publisher = {American Physical Society},
  doi = {10.1103/PhysRevLett.112.201601},
  urldate = {2026-07-01}
}

@inproceedings{rosset_replab_2021,
  title = {{{RepLAB}}: {{A Computational}}/{{Numerical Approach}} to {{Representation Theory}}},
  shorttitle = {{{RepLAB}}},
  booktitle = {Quantum {{Theory}} and {{Symmetries}}},
  author = {Rosset, Denis and {Montealegre-Mora}, Felipe and Bancal, Jean-Daniel},
  editor = {Paranjape, M. B. and MacKenzie, Richard and Thomova, Zora and Winternitz, Pavel and {Witczak-Krempa}, William},
  year = 2021,
  pages = {643--653},
  publisher = {Springer International Publishing},
  address = {Cham},
  doi = {10.1007/978-3-030-55777-5_60},
  isbn = {978-3-030-55777-5},
  langid = {english}
}

@article{savary_quantum_2016,
  title = {Quantum Spin Liquids: A Review},
  author = {Savary, Lucile and Balents, Leon},
  year = 2016,
  month = nov,
  journal = {Reports on Progress in Physics},
  volume = {80},
  number = {1},
  pages = {016502},
  doi = {10.1088/0034-4885/80/1/016502}
}

@article{scheer_bootstrapping_2026,
  title = {Bootstrapping {{Quantum Hamiltonians}} with {{Symmetry}}},
  author = {Scheer, Michael G.},
  year = 2026,
  month = jun,
  journal = {Physical Review Letters},
  volume = {136},
  number = {24},
  eprint = {2410.00810},
  pages = {246501},
  doi = {10.1103/1757-fx9c},
  urldate = {2026-09-03}
}

@article{schollwock_density-matrix_2011,
  title = {The Density-Matrix Renormalization Group in the Age of Matrix Product States},
  author = {Schollw{\"o}ck, Ulrich},
  year = 2011,
  month = jan,
  journal = {Annals of Physics},
  series = {January 2011 {{Special Issue}}},
  volume = {326},
  number = {1},
  pages = {96--192},
  issn = {0003-4916},
  doi = {10.1016/j.aop.2010.09.012},
  urldate = {2026-07-01}
}

@article{schweizer_floquet_2019,
  title = {Floquet Approach to {{$\mathbb{Z}$2}} Lattice Gauge Theories with Ultracold Atoms in Optical Lattices},
  author = {Schweizer, Christian and Grusdt, Fabian and Berngruber, Moritz and Barbiero, Luca and Demler, Eugene and Goldman, Nathan and Bloch, Immanuel and Aidelsburger, Monika},
  year = 2019,
  month = nov,
  journal = {Nature Physics},
  volume = {15},
  number = {11},
  pages = {1168--1173},
  publisher = {Nature Publishing Group},
  issn = {1745-2481},
  doi = {10.1038/s41567-019-0649-7},
  urldate = {2026-07-01},
  copyright = {2019 The Author(s), under exclusive licence to Springer Nature Limited},
  langid = {english}
}

@book{serre_linear_1977,
  title = {Linear {{Representations}} of {{Finite Groups}}},
  author = {Serre, Jean-Pierre},
  year = 1977,
  series = {Graduate {{Texts}} in {{Mathematics}}},
  volume = {42},
  publisher = {Springer},
  address = {New York, NY},
  doi = {10.1007/978-1-4684-9458-7},
  urldate = {2024-12-06},
  copyright = {http://www.springer.com/tdm},
  isbn = {978-1-4684-9460-0 978-1-4684-9458-7}
}

@article{silvi_tensor_2019,
  title = {Tensor Network Simulation of an {{SU}}(3) Lattice Gauge Theory in {{1D}}},
  author = {Silvi, Pietro and Sauer, Yannick and Tschirsich, Ferdinand and Montangero, Simone},
  year = 2019,
  month = oct,
  journal = {Physical Review D},
  volume = {100},
  number = {7},
  pages = {074512},
  publisher = {American Physical Society},
  doi = {10.1103/PhysRevD.100.074512},
  urldate = {2026-07-01}
}

@article{spagnoli_fault-tolerant_2026,
  title = {Fault-Tolerant Simulation of {{Lattice Gauge Theories}} with Gauge Covariant Codes},
  author = {Spagnoli, L. and Roggero, A. and Wiebe, N.},
  year = 2026,
  month = jan,
  journal = {Quantum},
  volume = {10},
  pages = {1968},
  publisher = {Verein zur F\"orderung des Open Access Publizierens in den Quantenwissenschaften},
  doi = {10.22331/q-2026-01-16-1968},
  urldate = {2026-08-07},
  langid = {british}
}

@article{troyer_computational_2005,
  title = {Computational {{Complexity}} and {{Fundamental Limitations}} to {{Fermionic Quantum Monte Carlo Simulations}}},
  author = {Troyer, Matthias and Wiese, Uwe-Jens},
  year = 2005,
  month = may,
  journal = {Physical Review Letters},
  volume = {94},
  number = {17},
  pages = {170201},
  publisher = {American Physical Society},
  doi = {10.1103/PhysRevLett.94.170201},
  urldate = {2026-05-08}
}

@misc{vallee_data_2026,
  title = {Data Supporting: "{{Two-Sided Bounds}} on {{Ground State Properties}} in {{Lattice Gauge Theories}}"},
  author = {Vall{\'e}e, Elo{\"i}c and Kelman, Ariel and Mildenberger, Julius and Tura, Jordi and Emonts, Patrick},
  year = 2026,
  month = sep,
  publisher = {Zenodo},
  doi = {10.5281/zenodo.22872270}
}

@article{wang_certifying_2024,
  title = {Certifying {{Ground-State Properties}} of {{Many-Body Systems}}},
  author = {Wang, Jie and Surace, Jacopo and Fr{\'e}rot, Ir{\'e}n{\'e}e and Legat, Beno{\^i}t and Renou, Marc-Olivier and Magron, Victor and Ac{\'i}n, Antonio},
  year = 2024,
  month = jul,
  journal = {Physical Review X},
  volume = {14},
  number = {3},
  pages = {031006},
  issn = {2160-3308},
  doi = {10.1103/PhysRevX.14.031006},
  urldate = {2026-08-11},
  langid = {english}
}

@misc{wang_scalable_2026,
  title = {Scalable {{Ground-State Certification}} of {{Quantum Spin Systems}} via {{Structured Noncommutative Polynomial Optimization}}},
  author = {Wang, Jie and Jansen, David and Frerot, Ir{\'e}n{\'e}e and Renou, Marc-Olivier and Magron, Victor and Ac{\'i}n, Antonio},
  year = 2026,
  month = apr,
  number = {arXiv:2604.01555},
  eprint = {2604.01555},
  primaryclass = {quant-ph},
  publisher = {arXiv},
  doi = {10.48550/arXiv.2604.01555},
  urldate = {2026-04-13},
  archiveprefix = {arXiv}
}

@article{weinberg_making_2004,
  title = {The {{Making}} of the Standard Model},
  author = {Weinberg, Steven},
  editor = {{'t Hooft}, G.},
  year = 2004,
  journal = {The European Physical Journal C: Particles and Fields},
  volume = {34},
  eprint = {hep-ph/0401010},
  pages = {5--13},
  doi = {10.1140/epjc/s2004-01761-1},
  archiveprefix = {arXiv}
}

@book{weinberg_quantum_1995,
  title = {The Quantum Theory of Fields},
  author = {Weinberg, Steven},
  year = 1995,
  month = jun,
  publisher = {Cambridge University Press},
  googlebooks = {doeDB3\_WLvwC},
  isbn = {978-0-521-55001-7},
  langid = {english}
}

@article{wen_mean-field_1991,
  title = {Mean-Field Theory of Spin-Liquid States with Finite Energy Gap and Topological Orders},
  author = {Wen, X. G.},
  year = 1991,
  month = aug,
  journal = {Physical Review B},
  volume = {44},
  number = {6},
  pages = {2664--2672},
  publisher = {American Physical Society},
  doi = {10.1103/PhysRevB.44.2664},
  urldate = {2026-06-30}
}

@book{wen_quantum_2004,
  title = {Quantum Field Theory of Many-Body Systems:From the Origin of Sound to an Origin of Light and Electrons: {{From}} the Origin of Sound to an Origin of Light and Electrons},
  author = {Wen, X.G.},
  year = 2004,
  series = {Oxford Graduate Texts},
  publisher = {OUP Oxford},
  isbn = {978-0-19-853094-7},
  lccn = {2004301677}
}

@article{white_density_1992,
  ids = {white_density_1992-1},
  title = {Density Matrix Formulation for Quantum Renormalization Groups},
  author = {White, Steven R.},
  year = 1992,
  month = nov,
  journal = {Physical Review Letters},
  volume = {69},
  number = {19},
  pages = {2863--2866},
  publisher = {American Physical Society},
  issn = {0031-9007},
  doi = {10.1103/PhysRevLett.69.2863},
  urldate = {2018-03-06}
}

@article{wilson_confinement_1974,
  title = {Confinement of Quarks},
  author = {Wilson, Kenneth G.},
  year = 1974,
  month = oct,
  journal = {Physical Review D},
  volume = {10},
  number = {8},
  pages = {2445--2459},
  publisher = {American Physical Society},
  doi = {10.1103/PhysRevD.10.2445},
  urldate = {2026-05-08}
}

@misc{wu_accurate_2025,
  title = {Accurate {{Gauge-Invariant Tensor Network Simulations}} for {{Abelian Lattice Gauge Theory}} in (2+1){{D}}: Ground State and Real-Time Dynamics},
  shorttitle = {Accurate {{Gauge-Invariant Tensor Network Simulations}} for {{Abelian Lattice Gauge Theory}} in (2+1){{D}}},
  author = {Wu, Yantao and Liu, Wen-Yuan},
  year = 2025,
  month = aug,
  number = {arXiv:2503.20566},
  eprint = {2503.20566},
  primaryclass = {cond-mat.str-el},
  publisher = {arXiv},
  doi = {10.48550/arXiv.2503.20566},
  urldate = {2026-09-03},
  archiveprefix = {arXiv}
}

@misc{xu_string_2025,
  title = {String {{Breaking Dynamics}} and {{Glueball Formation}} in a 2+{{1D Lattice Gauge Theory}}},
  author = {Xu, Kaidi and Borla, Umberto and Moroz, Sergej and Halimeh, Jad C.},
  year = 2025,
  month = jul,
  publisher = {arXiv},
  doi = {10.48550/arXiv.2507.01950},
  urldate = {2026-07-01}
}

@article{zhao_reduced_2004,
  title = {The Reduced Density Matrix Method for Electronic Structure Calculations and the Role of Three-Index Representability Conditions},
  author = {Zhao, Zhengji and Braams, Bastiaan J. and Fukuda, Mituhiro and Overton, Michael L. and Percus, Jerome K.},
  year = 2004,
  month = feb,
  journal = {The Journal of Chemical Physics},
  volume = {120},
  number = {5},
  pages = {2095--2104},
  doi = {10.1063/1.1636721},
  urldate = {2026-09-03}
}

@article{zohar_local_2020,
  title = {Local Manipulation and Measurement of Nonlocal Many-Body Operators in Lattice Gauge Theory Quantum Simulators},
  author = {Zohar, Erez},
  year = 2020,
  month = feb,
  journal = {Physical Review D},
  volume = {101},
  number = {3},
  pages = {034518},
  publisher = {American Physical Society},
  doi = {10.1103/PhysRevD.101.034518},
  urldate = {2026-09-03}
}

@article{zohar_quantum_2015,
  title = {Quantum Simulations of Lattice Gauge Theories Using Ultracold Atoms in Optical Lattices},
  author = {Zohar, Erez and Cirac, J Ignacio and Reznik, Benni},
  year = 2015,
  month = dec,
  journal = {Reports on Progress in Physics},
  volume = {79},
  number = {1},
  pages = {014401},
  publisher = {IOP Publishing},
  issn = {0034-4885},
  doi = {10.1088/0034-4885/79/1/014401},
  urldate = {2026-07-01},
  langid = {english}
}

@article{zohar_quantum_2021,
  title = {Quantum Simulation of Lattice Gauge Theories in More than One Space Dimension---Requirements, Challenges and Methods},
  author = {Zohar, Erez},
  year = 2021,
  month = dec,
  journal = {Philosophical Transactions of the Royal Society A: Mathematical, Physical and Engineering Sciences},
  volume = {380},
  number = {2216},
  pages = {20210069},
  issn = {1364-503X},
  doi = {10.1098/rsta.2021.0069},
  urldate = {2026-07-01}
}

@article{zohar_wilson_2021,
  title = {Wilson Loops and Area Laws in Lattice Gauge Theory Tensor Networks},
  author = {Zohar, Erez},
  year = 2021,
  month = aug,
  journal = {Physical Review Research},
  volume = {3},
  number = {3},
  pages = {033179},
  publisher = {American Physical Society},
  doi = {10.1103/PhysRevResearch.3.033179},
  urldate = {2026-09-12}
}

\end{document}